\newtheorem{thm}{Theorem}
\newtheorem{defn}{Definition}
\newtheorem{corollary}{Corollary}
\newtheorem{example}{Example}
\newtheorem{conj}{Conjecture}
\newtheorem{proposition}{Proposition}
\newtheorem*{remark*}{Remark}
\begin{document}

\title{Stopping Set Analysis for Concatenated \\ Polar Code Architectures}

\author{
Ziyuan~Zhu,~\IEEEmembership{Student Member,~IEEE,}
and~Paul~H.~Siegel,~\IEEEmembership{Life~Fellow,~IEEE}

\thanks{Manucript received MM DD, YYYY; revised MM DD, YYYY. The associate editor coordinatitng the review of this paper and appriving it for publication was NAME.}
\thanks{The authors are with the Electrical and Computer Engineering Department, University of California San Diego, La Jolla, CA  92093, USA (e-mail: \{ziz050, psiegel\}@ucsd.edu).}
\thanks{The material in this paper was presented in part at the IEEE International Symposium on Information Theory (ISIT), 2024.}
\thanks{This work was supported by the National Science Foundation under grants CCF-1764104 and CCF-2212437.}
\thanks{Digital Object Identifier xx.xxxx/xxxxxxx.xxxx.xxxx}

}


\maketitle

\begin{abstract}
This paper investigates properties of concatenated polar codes and their potential applications. We start with reviewing previous work on stopping set analysis for conventional polar codes, which we extend in this paper to concatenated architectures. Specifically, we present a stopping set analysis for the factor graph of concatenated polar codes, deriving an upper bound on the size of the minimum stopping set. To achieve this bound, we propose new bounds on the size of the minimum stopping set for conventional polar code factor graphs. The tightness of these proposed bounds is investigated empirically and analytically. We show that, in some special cases, the exact value of the minimum stopping set can be determined with a time complexity of $O(N)$, where $N$ is the codeword length. The stopping set analysis motivates a novel construction method for concatenated polar codes.  
This method is used to design outer polar codes for two previously proposed concatenated polar code architectures: augmented polar codes and local-global polar codes. Simulation results demonstrate the advantage of the proposed codes over previously proposed constructions based on density evolution (DE). 
\end{abstract}

\begin{IEEEkeywords}
Polar codes, concatenated polar codes, local-global decoding, stopping sets, belief propagation.
\end{IEEEkeywords}

\IEEEpeerreviewmaketitle

\section{Introduction}

\IEEEPARstart{P}{olar} codes, introduced by E. Ar\i kan~\cite{Arikan2009}, occupy a unique place in the history of error correction codes as the first family of codes to achieve the Shannon capacity of arbitrary binary symmetric memoryless channels (BSMs). The code construction starts from a \textit{channel transformation}, where $N$ synthesized bit-channels $W_N^{(i)}$, $i{=}0,1,...,N{-}1$ are obtained by applying a linear transformation to $N$ independent copies of a BSM channel $W$. As the block length $N$ goes to infinity, the synthesized bit-channels become either noiseless or completely noisy. A polar code carries information on the least noisy bit-channel positions and freezes the remaining ones to a predetermined value, usually chosen to be zero. Ar\i kan~\cite{Arikan2011} also introduced the concept of systematic polar encoding, achieved through the solution of linear encoding equations that ensure the codewords contain the information bits at designated positions. 

Concatenated polar codes have been proposed that leverage the error floor performance of polar codes in conjunction with other powerful codes such as Low-Density Parity-Check codes~\cite{Eslami2013} and Reed-Solomon codes~\cite{Bakshi2010}. Expanding upon the enhanced belief propagation construction of Guo et al.~\cite{Guo2014}, Elkelesh et al.~\cite{Elk2017}  introduced an augmented polar code architecture that concatenates two polar codes, using an outer auxiliary polar code to further protect the semipolarized bit-channels within the inner polar code. In the same work, they also suggested connecting several inner polar codes through a single auxiliary polar code, offering the flexibility of codeword lengths other than a power of two. Motivated by practical applications in data storage and low-latency communication systems, Zhu et al.~\cite{Zhu2022}  proposed an architecture for polar codes offering local-global decoding. In this scheme, a codeword comprising several inner polar codes is concatenated with a systematic outer polar code, thus enabling both local decoding of  the inner codes and global decoding of the codeword.

The belief propagation (BP) decoder for polar codes was introduced to increase throughput through a pipelined decoding process~\cite{ArikanBP}. While the BP decoder surpasses the error rate performance of the original successive-cancellation (SC) decoder, it falls short of the SC-list (SCL) decoder~\cite{SCL}. The BP-list (BPL) decoder~\cite{BPL}, which incorporates different permutation patterns of BP decoding units, significantly enhances error rate performance, bridging the performance gap between BP-based and SC-based decoders. 

Polar codes and Reed-Muller (RM) codes share the same basic encoding matrix before selecting the information set: RM codes select rows according to their Hamming weights, while polar codes select rows by comparing their associated Bhattacharyya parameters~\cite{Arikan2009}. Another frozen set selection method, introduced by Mori et al.~\cite{DE}, uses density evolution (DE) to analyze  BP results for each decoding tree corresponding to the SC decoding process. The high computational complexity of DE motivated the Gaussian approximation (GA) algorithm~\cite{GA},  which assumes that the log-likelihood ratio (LLR) distribution corresponding to each variable node is a Gaussian with mean $m$ and variance $\sigma^2=2m$, thus reducing the convolution of densities to a one-dimensional computation of mean values. In~\cite{GA Polar}, Dai et al. proposed a modification to GA to address the performance loss incurred when applying GA to long polar codes.

An important characteristic of polar codes is that the bit-channel orderings are channel-dependent. Although no general rule is known for completely ordering the bit-channels of a general BSM channel, some partial orders (POs) that are independent of the underlying channel have been found for selected bit-channels~\cite{DE},~\cite{PO2016},~\cite{Bardet2016}. In~\cite{DE}, an ordering applicable to bit-channels with different Hamming weights was presented. The Hamming weight of $W_N^{(i)}$ is defined as the number of ones in the binary expansion of $i$. The ordering  states that a bit-channel $W_N^{(j)}$ is stochastically degraded with respect to $W_N^{(i)}$ if the positions of 1 in the binary expansion of $j$ are a subset of the positions of 1 in the binary expansion of $i$. The ordering in~\cite{PO2016} and~\cite{Bardet2016} compared bit-channels with the same Hamming weight. It was based on the observation that a bit-channel $W_N^{(j)}$ is stochastically degraded with respect to $W_N^{(i)}$ if $j$ is obtained by swapping a more significant 1 with a less significant 0 in the binary expansion of $i$. Both of these orderings are partial, in the sense that not all bit-channel pairs $(W_N^{(i)},W_N^{(j)})$ are comparable. A more general investigation of POs for polar codes can be found in~\cite{Wei2019}.

While design methods based on the Bhattacharyya parameters, DE, and GA were originally used in the context of SC decoding, they have also been applied to code design for BP decoding. Eslami et al.~\cite{Eslami2013} introduced a construction method based on stopping sets in the sparse polar code factor graph, aimed at increasing the stopping distance of the polar code. They provided empirical evidence showing improved performance under BP decoding, compared with the conventional code design.

In this paper, we study the stopping sets within concatenated polar code architectures. The analysis of stopping sets in the concatenated factor graph
suggests a novel code construction method that identifies  promising  information sets for the outer code. 
Error rate simulations demonstrate that the proposed method can improve the performance of augmented and local-global polar codes.
Portions of this paper were presented in~\cite{Zhu2024}.

The paper is organized as follows. Section II briefly reviews background results and notations used in the rest of the paper. In Section III, we provide the stopping set analysis for concatenated polar codes, and we emphasize the importance of finding the minimum stopping set within the conventional polar code factor graph that includes certain information nodes. Section IV presents lower and upper bounds on the size of these minimum stopping sets, while Section V provides exact calculations for specific information node choices on the leftmost stage of the polar code factor graph. In Section VI, we propose an outer code design method based on stopping set analysis for concatenated polar code architectures. Finally, Section VII concludes the paper.

\section{Preliminaries}

\subsection{Polar Codes and Systematic Polar Codes}

In conventional polar code design, $N$ independent copies of a channel $W$ are combined in a recursive manner into a vector channel $W_N$, which is then split into $N$ channels ${W_N^{(i)}, \;0\leq{i}\leq{N-1}}$, referred to as bit-channels.  The Bhattacharyya parameter $Z(W_N^{(i)})$ is used to identify the quality of bit-channel $i$. A polar code of rate $R{=}\frac{K}{N}$ selects the $K$ most reliable bit-channels (with the smallest $Z(W_N^{(i)})$) to input information bits, and the remaining bit-channel inputs are frozen to zero. We use ${\mathcal A}$ to denote the set of information indices, and ${\mathcal F{=}\mathcal A^c}$ to denote the frozen indices. Let $G{=}F^{\bigotimes n}$ be the $N{\times} N$ matrix that is the $n$-th Kronecker power of $F{=}\left[\begin{array}{cc}1 & 0 \\1 & 1 \\ \end{array} \right]$, where $n{=}\log_2 N$.  The polar encoding process is specified by  $x{=}uG$, where $ x,u\in \mathbb{F}^N, G\in \mathbb{F}^{N\times N} $.  

Ar\i kan~\cite{Arikan2011} showed that a systematic encoder can be realized that maps information bits to positions in the set ${\mathcal B} {=} {\mathcal A}$ in the codeword $x$. To be specific, $u_{\mathcal A^c}$ is set to 0,  $x_\mathcal B$ is set to the information vector, and  $u_\mathcal A$ and $x_{\mathcal B^c}$ are found by solving a system of equations.

\subsection{Concatenated Polar Codes}

Our focus in this paper is on concatenated code architectures in which all component codes are polar codes. 
The augmented and flexible length architectures were introduced in~\cite{Elk2017}. In an augmented polar code, a short, rate $R_0=\frac{K_0}{N_0}$  auxiliary outer polar code $G_0$ is connected to an inner polar code $G_1$ of length $N_1$. The $N_0$ bits of the outer codeword are assigned to the semipolarized bit-channels within the inner code (through an interleaver). An additional $K_1$ information bits are assigned to the good bit-channels within the inner code. The total code rate for the augmented structure is $R_{aug} = \frac{K_0+K_1}{N_1}$.

In the flexible length architecture, two inner codes $G_1, G_2$ of length $N_1, N_2$ are coupled through a rate $R_0=\frac{K_0}{N_0}$ auxiliary outer code $G_0$. Information words of length $K_1, K_2$ are assigned to the good bit-channels of the two inner codes, respectively.
The outer codeword  is divided into two parts which are assigned to the semipolarized bit-channels of the inner codes. The total encoding rate for the flexible length structure is $R_{flex} = \frac{K_0+K_1+K_2}{N_1+N_2}$.

Inspired by the flexible length architecture, the local-global polar code architecture,  introduced in~\cite{Zhu2022}, connects multiple inner codes $G_1, ..., G_M$ through a systematic outer polar code. We assume these codes have the same length $N_i=N, i=1, \ldots, M$. 
A word of $K_b$ information bits is divided into $M$ parts of $K_{b_1}, \ldots, K_{b_M}$ bits that are assigned to the good bit-channels within the inner codes.  
The $K_a$ outer information bits are divided into $M$ parts of $K_{a_1}, \ldots, K_{a_M}$ bits that are mapped to the semipolarized bit-channels of the $M$ inner codes, respectively. The $P_a$ parity bits  of the outer codeword are similarly partitioned into $M$ parts of $P_{a_1}, \ldots, P_{a_M}$ bits and mapped to the remaining semipolarized bit-channels within the inner codes. This architecture supports local decoding of information bits $K_{a_i}, K_{b_i}$ within each  inner code $G_i$, with the option of improved decoding of the $M$ inner codewords via global decoding  using the outer code.    

\subsection{Stopping Set in Factor Graph}

We briefly review the stopping set analysis of polar codes as presented in~\cite{Eslami2013}, and we propose some new definitions that will be used throughout the rest of the paper.

\subsubsection{Notations from Eslami et al.~\cite{Eslami2013}}

\begin{figure}[htbp]
\centerline{\includegraphics[width=9cm,height=6.5cm]{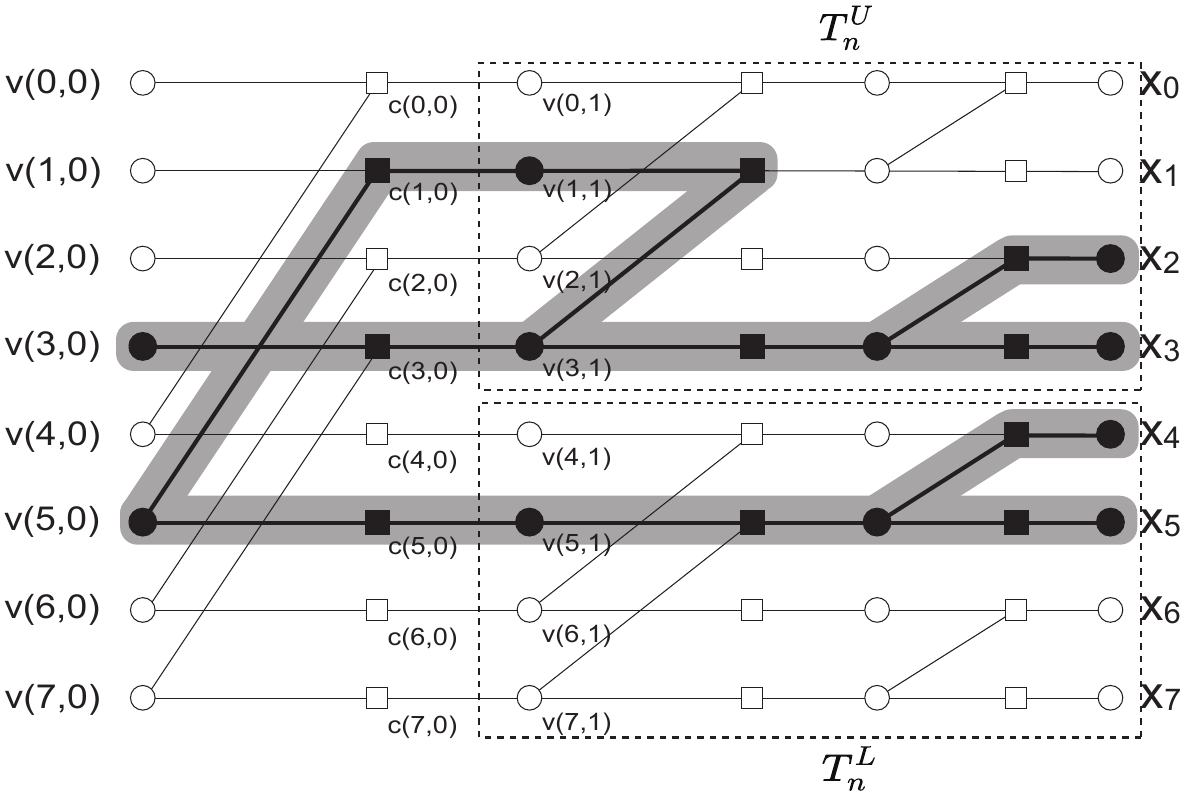}}
\caption{Normal realization of the factor graph for $N=8$. An example of a SS is shown with black variable and check nodes.}
\label{GSS}
\end{figure}

A stopping set (SS) is a non-empty set of variable nodes such that each neighboring check node is connected to this set at least twice. In this paper, we are particularly interested in the analysis of stopping sets in the factor graph of polar codes. Fig.~\ref{GSS} shows an example of a stopping set in the polar code factor graph, where we also included the corresponding set of check nodes. Denote the factor graph of a polar code of length $N = 2^n$ by $T_n$. A key observation is the symmetric structure of this graph which reflects the recursive construction of  the generator matrix: $T_{n}$ includes two factor graphs equivalent to $T_{n-1}$ as its upper and lower halves, connected together via $v(0,0),v(1,0),...,v(N-1,0)$ and $c(0,0),c(1,0),...,c(N-1,0)$. We denote these two subgraphs by $T_{n}^U$ and $T_{n}^L$, as shown in Fig.~\ref{GSS}. 

A stopping tree (ST) is a SS that contains one and only one information bit, i.e., variable node, on the leftmost stage of the sparse polar code factor graph.  For each information bit $i$, there is a unique stopping tree denoted by $ST(i)$. An example of such a stopping tree is shown in Fig.~\ref{ST} with black variable nodes. We also included the corresponding set of check nodes in order to visualize the structure of the tree. The size of the leaf set (variable nodes on the rightmost stage) of $ST(i)$ is denoted by $f(i)$. For example, $f(5) = 4$, with the corresponding leaf set $\{x_0,x_1,x_4,x_5\}$.

Only variable nodes on the rightmost stage are observed nodes, with all other variable nodes hidden. The set of observed variable nodes in a SS forms a variable-node SS (VSS). Accordingly, we define a minimum VSS (MVSS) as a VSS with a minimum number of observed variable nodes, among all the VSSs. The size of a MVSS is the stopping distance of the code. For any given index set $\mathcal J$, we denote a SS whose information nodes are precisely $\mathcal J$ as $SS(\mathcal J)$. The set of observed variable nodes in a $SS(\mathcal J)$ is a VSS for $\mathcal J$, denoted $VSS(\mathcal{J})$. A minimum size VSS among all the $VSS(\mathcal{J})$ is called a minimum VSS for $\mathcal J$, denoted $MVSS(\mathcal{J})$. Note that $SS(\mathcal J)$, $VSS(\mathcal J)$ and $MVSS(\mathcal J)$ may not be unique for a given index set $\mathcal{J}$. The following theorem is taken from~\cite{Eslami2013}, and we aim to extend it in Section IV.

\begin{thm}
\textbf{\textit{(Lower Bound I)}} Given any set $\mathcal{J}$ of information bits, we have $|MVSS(\mathcal{J})| \geq \min\limits_{j \in \mathcal{J}} f(j)$.
\label{Bound Eslami}
\end{thm}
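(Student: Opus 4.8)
The plan is to prove the bound by induction on $n=\log_2 N$, using the recursive decomposition $T_n=T_n^U\cup T_n^L$. The base case $n=0$ is immediate, since $T_0$ is a single node that is at once the unique information node and the unique leaf, so $\mathcal J=\{0\}$, $f(0)=1$, and $|MVSS(\{0\})|=1$. Before the inductive step I would record the recursion satisfied by the leaf counts, obtained by tracing the stopping-tree condition through the leftmost checks $c(i,0)$: for $i<N/2$ the tree $ST(i)$ enters only the upper half, giving $f_n(i)=f_{n-1}(i)$, whereas for $i=i'+N/2$ the node $v(i,0)$ forces a copy of $ST_{n-1}(i')$ in each half, giving $f_n(i'+N/2)=2f_{n-1}(i')$; here $f_m$ denotes the leaf count on $T_m$, equivalently $f(i)=2^{w(i)}$ with $w(i)$ the Hamming weight of $i$ (consistent with $f(5)=4$).

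For the step, fix any $S=SS(\mathcal J)$ in $T_n$ and split its observed set as $V=V_U\sqcup V_L$ over the two codeword halves. Since every check other than the $c(i,0)$ lies inside a single half, $S\cap T_n^U$ and $S\cap T_n^L$ are stopping sets in the two copies of $T_{n-1}$, with observed sets $V_U,V_L$. For $0\le i<N/2$ let $a_i,b_i,c_i\in\{0,1\}$ record membership in $S$ of $v(i,0)$, $v(i+N/2,0)$, and the $i$-th input of $T_n^U$. The check $c(i,0)$ enforces (input)$=v(i,0)\oplus v(i+N/2,0)$, so it has degree $3$ and the stopping condition forces $a_i+b_i+c_i\neq1$; moreover the $i$-th input of $T_n^L$ coincides with $v(i+N/2,0)$. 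Hence the leftmost active sets of the two subgraphs are $\mathcal J_U=\{i:c_i=1\}$ and $\mathcal J_L=\{i:b_i=1\}$, while the upper/lower parts of $\mathcal J$ are $\mathcal J^{\mathrm{up}}=\{i:a_i=1\}$ and $\mathcal J^{\mathrm{low}}=\{i:b_i=1\}=\mathcal J_L$. Writing $m^{\mathrm{up}}=\min_{i\in\mathcal J^{\mathrm{up}}}f_{n-1}(i)$ and $m^{\mathrm{low}}=\min_{i\in\mathcal J^{\mathrm{low}}}f_{n-1}(i)$, the $f$-recursion yields
\[
\min_{j\in\mathcal J}f_n(j)=\min\bigl(m^{\mathrm{up}},\,2m^{\mathrm{low}}\bigr).
\]

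The induction hypothesis gives $|V_U|\ge\min_{i\in\mathcal J_U}f_{n-1}(i)$ and $|V_L|\ge\min_{i\in\mathcal J_L}f_{n-1}(i)$ whenever the relevant active set is nonempty, and the proof then reduces to a short case analysis governed by $a_i+b_i+c_i\neq1$. If $\mathcal J^{\mathrm{low}}=\emptyset$, then $b_i\equiv0$ forces $a_i=c_i$, so $\mathcal J_U=\mathcal J^{\mathrm{up}}$ and the upper half alone gives $|V|\ge|V_U|\ge m^{\mathrm{up}}=\min_{j\in\mathcal J}f_n(j)$. If $\mathcal J^{\mathrm{low}}\neq\emptyset$ and $\mathcal J_U\neq\emptyset$, then $c_i=1$ forces $a_i+b_i\ge1$, so $\mathcal J_U\subseteq\mathcal J^{\mathrm{up}}\cup\mathcal J^{\mathrm{low}}$ and $|V_U|\ge\min(m^{\mathrm{up}},m^{\mathrm{low}})$; combined with $|V_L|\ge m^{\mathrm{low}}$ this gives $|V|\ge\min(m^{\mathrm{up}},m^{\mathrm{low}})+m^{\mathrm{low}}\ge\min(m^{\mathrm{up}},2m^{\mathrm{low}})$, which one checks in both orderings of $m^{\mathrm{up}},m^{\mathrm{low}}$.

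The delicate case, which I expect to be the main obstacle, is $\mathcal J_U=\emptyset$ with $\mathcal J^{\mathrm{low}}\neq\emptyset$: here the upper half certifies no leaves and the factor of $2$ in $f_n(i'+N/2)=2f_{n-1}(i')$ threatens to be lost. The check constraint resolves it: $c_i\equiv0$ forces $a_i=b_i$, hence $\mathcal J^{\mathrm{up}}=\mathcal J^{\mathrm{low}}$ and $m^{\mathrm{up}}=m^{\mathrm{low}}$, so $\min(m^{\mathrm{up}},2m^{\mathrm{low}})=m^{\mathrm{low}}$ and the lower half alone already supplies $|V|\ge|V_L|\ge m^{\mathrm{low}}$. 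Assembling the cases closes the induction. The only remaining points, both routine, are to confirm that $S\cap T_n^U$ and $S\cap T_n^L$ inherit the stopping property across the shared layer and to verify the leaf-count recursion; each follows directly from the local definition of a stopping set and the butterfly structure of $T_n$.
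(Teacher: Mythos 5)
Your proof is correct. Strictly speaking, the paper does not prove Theorem~\ref{Bound Eslami} at all --- it defers to the appendix of \cite{Eslami2013} --- so what you have produced is a self-contained replacement rather than a variant of an in-text argument. The key steps all hold up: every check in columns $1,\dots,n-1$ lies entirely inside one copy of $T_{n-1}$, so $S\cap T_n^U$ and $S\cap T_n^L$ really are stopping sets of the two halves whose leftmost active sets are $\mathcal{J}_U$ and $\mathcal{J}_L$; the boundary constraint $a_i+b_i+c_i\neq 1$ at the degree-3 checks $c(i,0)$ is exactly the stopping condition; your leaf-count recursion agrees with Proposition~\ref{weight_leaf}; the three cases are exhaustive; and the arithmetic $\min(m^{\mathrm{up}},m^{\mathrm{low}})+m^{\mathrm{low}}\geq\min(m^{\mathrm{up}},2m^{\mathrm{low}})$ holds in both orderings, with the dangerous case $\mathcal{J}_U=\emptyset$, $\mathcal{J}^{\mathrm{low}}\neq\emptyset$ correctly neutralized by the forced equality $\mathcal{J}^{\mathrm{up}}=\mathcal{J}^{\mathrm{low}}$. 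Notably, your decomposition is the same one the paper itself relies on for its other results: the proof of Proposition~\ref{weight_leaf} and the appendix proof of Theorem~\ref{th2} both induct on the upper/lower-half split of the factor graph, and the case analysis in Theorem~\ref{th2} (on $\overline{\mathcal{J}_{n+1}}$ versus $\underline{\mathcal{J}_{n+1}}$) is a tightness-oriented analogue of yours --- there the cover/swap hypotheses force equality, whereas your unconditional argument only needs the one-sided inequality. So your proof integrates naturally with the paper's toolkit while removing the dependence on the external reference. Two small points to tighten: the claim that the $i$-th input of $T_n^L$ ``coincides with'' $v(i+N/2,0)$ should be phrased as membership equivalence forced by the degree-2 check $c(i+N/2,0)$, since in this factor-graph convention they are distinct nodes; and you should state explicitly the conventions that $\mathcal{J}\neq\emptyset$ (so that case 1 can invoke the induction hypothesis on a nonempty $\mathcal{J}_U$) and that a minimum over an empty index set is $+\infty$.
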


\begin{proof}
The proof can be found in the Appendix of~\cite{Eslami2013}.
\end{proof}

Define $SD(\mathcal{A}) = \min\limits_{\mathcal{J} \subseteq \mathcal{A}} |MVSS(\mathcal{J})|$ as the stopping distance of a polar code  with information set $\mathcal{A}$. Theorem~\ref{Bound Eslami} sets a lower bound on the size of a MVSS for a set $\mathcal{J}$ of information bits. It also implies that the size of a MVSS for a polar code with information set $\mathcal{A}$ is at least equal to $\min\limits_{i \in \mathcal{A}} f(i)$. However, we already know that the leaf set of the stopping tree for any node $i \in \mathcal{A}$ is a VSS of size $f(i)$. The following corollary can be directly established: 

\begin{corollary}
    For a polar code with information bit index $\mathcal{A}$, $SD(\mathcal{A}) = \min\limits_{i \in \mathcal{A}} f(i)$.
\label{SD polar}
\end{corollary}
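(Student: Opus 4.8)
The plan is to prove the equality by establishing the two inequalities separately: the lower bound $SD(\mathcal{A}) \geq \min_{i \in \mathcal{A}} f(i)$ follows from Theorem~\ref{Bound Eslami}, while the matching upper bound follows from the explicit stopping-tree construction already recalled in the text.

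First I would establish $SD(\mathcal{A}) \geq \min_{i \in \mathcal{A}} f(i)$. Fix an arbitrary subset $\mathcal{J} \subseteq \mathcal{A}$. Theorem~\ref{Bound Eslami} gives $|MVSS(\mathcal{J})| \geq \min_{j \in \mathcal{J}} f(j)$. Since $\mathcal{J} \subseteq \mathcal{A}$, the value set $\{f(j) : j \in \mathcal{J}\}$ is contained in $\{f(i) : i \in \mathcal{A}\}$, and minimizing over a smaller index set can only raise the value, so $\min_{j \in \mathcal{J}} f(j) \geq \min_{i \in \mathcal{A}} f(i)$. Chaining these two bounds yields $|MVSS(\mathcal{J})| \geq \min_{i \in \mathcal{A}} f(i)$ uniformly over all admissible $\mathcal{J}$, and taking the minimum over $\mathcal{J} \subseteq \mathcal{A}$ in the definition $SD(\mathcal{A}) = \min_{\mathcal{J} \subseteq \mathcal{A}} |MVSS(\mathcal{J})|$ preserves the inequality.

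Next I would establish the reverse inequality $SD(\mathcal{A}) \leq \min_{i \in \mathcal{A}} f(i)$ by exhibiting a single witnessing VSS. Let $i^\star \in \mathcal{A}$ attain the minimum, so $f(i^\star) = \min_{i \in \mathcal{A}} f(i)$, and take the singleton $\mathcal{J} = \{i^\star\} \subseteq \mathcal{A}$. By definition the stopping tree $ST(i^\star)$ is a stopping set whose only leftmost-stage information node is $i^\star$; hence it is a valid $SS(\{i^\star\})$, and its leaf set is a $VSS(\{i^\star\})$ of size exactly $f(i^\star)$. Consequently $|MVSS(\{i^\star\})| \leq f(i^\star)$, and because $\{i^\star\}$ is one of the subsets over which $SD(\mathcal{A})$ minimizes, we get $SD(\mathcal{A}) \leq |MVSS(\{i^\star\})| \leq f(i^\star) = \min_{i \in \mathcal{A}} f(i)$. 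Combining the two inequalities gives the claimed equality.

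I do not expect a genuine obstacle: the corollary is essentially a bookkeeping consequence of Theorem~\ref{Bound Eslami} together with the stopping-tree construction. The only points requiring care are the monotonicity of the minimum under set inclusion in the lower-bound step—being explicit that restricting to a subset of indices can only increase the minimum of $f$—and checking that the singleton $\{i^\star\}$ is genuinely admissible as a competitor in the minimization defining $SD(\mathcal{A})$, which it is since $i^\star \in \mathcal{A}$.
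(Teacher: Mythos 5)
Your proof is correct and takes essentially the same route as the paper: the lower bound follows from Theorem~\ref{Bound Eslami} together with the monotonicity of the minimum under restriction to a subset $\mathcal{J} \subseteq \mathcal{A}$, and the upper bound follows from exhibiting the leaf set of the stopping tree $ST(i^\star)$ of a minimizing index as a witnessing VSS. The paper compresses both steps into the two sentences preceding the corollary, while you spell out the bookkeeping (including the admissibility of the singleton $\{i^\star\}$), but the substance is identical.
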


\begin{figure}[htbp]
\centerline{\includegraphics[width=9cm,height=5.5cm]{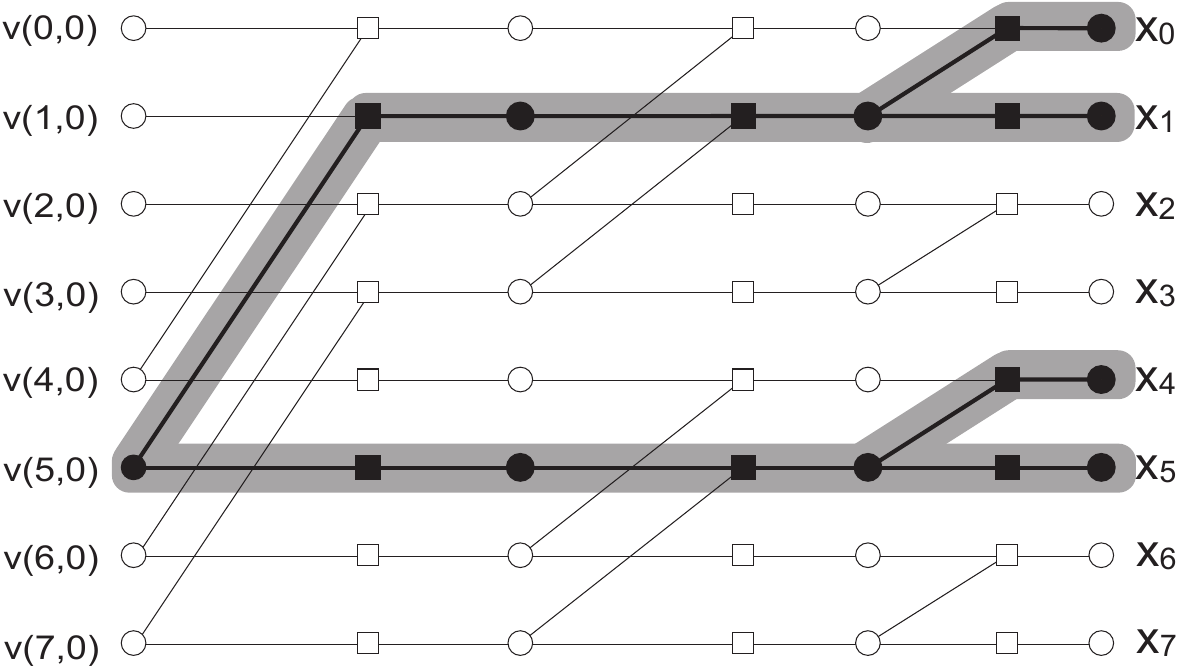}}
\caption{The stopping tree for v(5,0) is shown with black variable and check nodes.}
\label{ST}
\end{figure}

\subsubsection{New notations introduced in this paper}

Let $i \in \mathbb{Z}_N$ have the binary representation $i_b = i_0,i_1,...,i_{n-1}$, where $i = \sum _{k=0}^{k=n-1} i_{k} \times 2^{k}$, and let $wt(i_b)$ denote the weight of $i_b$ (i.e., the number of 1s). For example, if $i=6$, then $i_b=011$ and $wt(i_b)=2$. The following proposition provides a straightforward method to calculate $f(i)$. 

\begin{proposition}
    For length-$2^n$ polar codes, we have 
\begin{equation*}
f(i) = 2^{wt(i_b)}, 0\leq i \leq 2^n-1.
\end{equation*}
\label{weight_leaf}
\end{proposition}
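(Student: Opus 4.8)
The plan is to prove the identity by induction on $n$, exploiting the recursive decomposition $T_n = T_{n-1}^U \cup T_{n-1}^L$ described above. Write $f_n(i)$ for the leaf-set size of $ST(i)$ in $T_n$. For the base case $n=1$, a direct inspection of the two-node butterfly gives $ST(0)=\{v(0,0),x_0\}$ and $ST(1)=\{v(1,0),x_0,x_1\}$, so that $f_1(0)=1=2^0$ and $f_1(1)=2=2^1$, in agreement with $2^{wt(i_b)}$.

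For the inductive step, I would peel off the most significant bit $i_{n-1}$ of $i$ and track how the unique stopping tree rooted at $v(i,0)$ behaves at the leftmost butterfly stage. The key structural fact to establish is the propagation rule at this stage: the first-stage check nodes connect each leftmost variable node to the roots of the two subgraphs $T_{n-1}^U$ and $T_{n-1}^L$, and whether $v(i,0)$ drives one subgraph or both is governed precisely by $i_{n-1}$. Concretely, I would show that (i) if $i_{n-1}=0$, the degree-at-least-two condition forces the tree to enter only a single subgraph, where it coincides with $ST(i')$ for $i'=i \bmod 2^{n-1}$, giving $f_n(i)=f_{n-1}(i')$; and (ii) if $i_{n-1}=1$, the butterfly forces the tree to enter both subgraphs, producing one copy of $ST(i')$ in each half, so that $f_n(i)=f_{n-1}^U(i')+f_{n-1}^L(i')$. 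Because the leaf sets of $T_{n-1}^U$ and $T_{n-1}^L$ occupy disjoint coordinate ranges on the rightmost stage, the two copies cannot overlap and their sizes add, yielding $f_n(i)=2f_{n-1}(i')$.

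Combining the two cases gives the single recursion $f_n(i)=2^{\,i_{n-1}}f_{n-1}(i')$, and since removing the top bit satisfies $wt(i_b)=i_{n-1}+wt(i'_b)$, the inductive hypothesis $f_{n-1}(i')=2^{wt(i'_b)}$ immediately delivers $f_n(i)=2^{\,i_{n-1}+wt(i'_b)}=2^{wt(i_b)}$.

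I expect the main obstacle to be a rigorous justification of the butterfly propagation rule in step (ii), namely that a bit-one leftmost node is forced by the stopping-set condition to activate both subgraph roots whereas a bit-zero node activates only one, together with the minimality argument ruling out superfluous nodes in the inactive half. A convenient way to make this precise, and to cross-check the counting, is the encoding viewpoint of Section II: taking $u=e_i$ (the unit vector with a single $1$ at position $i$) and forming $x=uG$ with $G=F^{\otimes n}$, the block identity $F\otimes F^{\otimes(n-1)}=\left[\begin{smallmatrix} F^{\otimes(n-1)} & 0 \\ F^{\otimes(n-1)} & F^{\otimes(n-1)} \end{smallmatrix}\right]$ shows that the support of $x$ is confined to the first half when $i_{n-1}=0$ and is duplicated across both halves when $i_{n-1}=1$. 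Since this support coincides with the leaf set of $ST(i)$, it reproduces the same recursion and confirms $f(i)=2^{wt(i_b)}$.
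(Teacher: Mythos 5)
Your proposal is correct, and its core computation -- induct on $n$, use the block structure $F^{\otimes n}=\bigl[\begin{smallmatrix} F^{\otimes(n-1)} & 0 \\ F^{\otimes(n-1)} & F^{\otimes(n-1)} \end{smallmatrix}\bigr]$, observe that a top bit of $0$ preserves the count while a top bit of $1$ doubles it -- is exactly the paper's induction. Where you differ is in how $f(i)$ gets connected to that computation. The paper short-circuits the graph-theoretic side by citing Fact~5 of Eslami--Pishro-Nik, which states $f(i)=wt(r_i)$ for the $(i+1)$-th row $r_i$ of $G=F^{\otimes n}$, and then runs the induction purely on row weights. You instead run the induction directly on the stopping trees in the factor graph, proving the butterfly propagation rule (most significant bit $0$ sends the tree into one subgraph, bit $1$ into both, with leaf sets in disjoint coordinate ranges). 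Your route is more self-contained about the graph structure and effectively re-derives Fact~5 along the way; the paper's route is shorter precisely because it outsources that structural work to the citation. One caution about your ``cross-check'': the step ``since this support coincides with the leaf set of $ST(i)$'' is not an independent confirmation -- it \emph{is} Fact~5, the same nontrivial identification the paper leans on (and which this paper proves by induction in its Appendix, within the proof of Theorem~\ref{th1}). So either cite that fact or prove it via your part (i)/(ii) analysis; as written, the cross-check assumes the very bridge between graph and matrix that your structural argument was meant to establish. With that bridge made explicit (your butterfly rule does the job, using uniqueness of the stopping tree to identify the restricted tree with $ST(i')$ in each half), the proof is complete.
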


\vspace{-20pt} 

\begin{proof}
From Fact 5 in \cite{Eslami2013}, we know that for a length-$2^n$ polar code, $f(i) = wt(r_i)$  for $0\leq i\leq 2^n-1$, where $r_i$ is the ($i+1$)-th row of $G =F^{\bigotimes n}$. To complete the proof, we need to show that $wt(r_i) = 2^{wt(i_b)}$. This can be done by induction. Denote the encoding matrix for a length-$2^n$ polar code as $G^n =F^{\bigotimes n}$, and denote the ($i+1$)-th row of $G^n =F^{\bigotimes n}$ as $r_i^n$.

For the case $n=1$, the statement follows immediately from inspection of the matrix $G^1=F$ and inspection of the corresponding factor graph for $n=1$. Now, suppose the result is true for a given $n$. Note that for a length-$2^{n+1}$ polar code, $G^{n+1}$ is recursively represented as 
\begin{equation}
    G^{n+1} = 
            \begin{bmatrix}
                G^n & 0 \\
                G^n & G^n 
            \end{bmatrix}.
\label{recursive_G}
\end{equation}
We see that, for $0\leq i\leq 2^n-1$, $r_i^{n+1} = [r_i^n,0,...,0]$. Thus, $f(i) = wt(r_i^{n+1}) = wt(r_i^n) = 2^{wt(i_b)}$. Similarly, for $2^n\leq i\leq 2^{n+1}-1$, we have $r_i^{n+1} = [r_{i-2^n}^n,r_{i-2^n}^n]$. Thus, $f(i) = wt(r_i^{n+1}) = 2 \times wt(r_{i-2^n}^n) = 2 \times 2^{wt(i_b) - 1} = 2^{wt(i_b)}$. This completes the induction. 
\end{proof}

Given information set $\mathcal{J}$, we use $UT(\mathcal{J}) = \cup _{j\in \mathcal{J}} ST(j)$ to denote the union of all the stopping trees defined by the elements in $\mathcal{J}$, which is also the largest $SS(\mathcal{J})$, with all other $SS(\mathcal{J}) \subseteq UT(\mathcal{J})$, as stated in the following proposition. Fig.~\ref{UT} gives an example of $UT(\mathcal{J})$ with $\mathcal{J} = \{3,5\}$. Define the degree of a node as the number of its neighboring nodes. Specifically, the degree of a variable node is the number of its neighboring check nodes, and the degree of a check node is the number of its neighboring variable nodes. In Fig.~\ref{UT}, $c(1,0)$ and $v(1,1)$ have a  degree of 2, while $c(1,1)$ and $v(3,1)$ have a degree of 3.

\begin{proposition}
    Any $SS(\mathcal{J})$ is a subset of $UT(\mathcal{J})$.
\label{subset}
\end{proposition}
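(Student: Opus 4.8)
The plan is to prove the containment by induction on $n$, exploiting the recursive decomposition $T_n = T_{n-1}^U \cup T_{n-1}^L$ together with two elementary facts. First I would record that the union of any collection of stopping sets is again a stopping set: if a check node has a neighbor in the union, it has a neighbor in one of the constituent sets, hence at least two neighbors there. In particular $UT(\mathcal{J})$ is a genuine $SS(\mathcal{J})$ — each $ST(j)$ contributes exactly the single leftmost node $v(j,0)$, so the union has leftmost support precisely $\mathcal{J}$ — and the construction $\mathcal{J} \mapsto UT(\mathcal{J})$ is monotone, i.e. $\mathcal{J} \subseteq \mathcal{J}'$ implies $UT(\mathcal{J}) \subseteq UT(\mathcal{J}')$. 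These two properties are what drive the induction, whose base case $n=1$ follows by inspection.

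Next I would pin down how a single stopping tree splits across the two subgraphs. Writing $N = 2^{n+1}$ and indexing the lower half by $i + N/2$, the stage-$0$ check structure (a degree-$3$ XOR check on $\{v(i,0), v(i+N/2,0), v(i,1)\}$ and a degree-$2$ equality check on $\{v(i+N/2,0), v(i+N/2,1)\}$) forces $ST(j) = \{v(j,0)\} \cup ST^U(j)$ when $j < N/2$, and $ST(i+N/2) = \{v(i+N/2,0)\} \cup ST^U(i) \cup ST^L(i)$ when $j = i + N/2$, where $ST^U, ST^L$ denote stopping trees within $T_n^U, T_n^L$. Taking unions gives the recursive description $UT(\mathcal{J}) = \{v(j,0) : j \in \mathcal{J}\} \cup UT^U(\mathcal{J}^U \cup \tilde{\mathcal{J}}^L) \cup UT^L(\tilde{\mathcal{J}}^L)$, where $\mathcal{J}^U$ is the upper part of $\mathcal{J}$ and $\tilde{\mathcal{J}}^L = \{i : i+N/2 \in \mathcal{J}\}$. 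As a consistency check, this reproduces the leaf count $f(i+N/2) = 2 f(i)$ underlying Proposition~\ref{weight_leaf}.

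For the inductive step I would take an arbitrary $S = SS(\mathcal{J})$ in $T_{n+1}$ and restrict it to the two halves, $S^U = S \cap T_n^U$ and $S^L = S \cap T_n^L$. Since every check internal to a half touches only that half, each restriction is itself a stopping set in the corresponding $T_n$, so the induction hypothesis applies once I identify their leftmost supports. The equality checks immediately force the leftmost support of $S^L$ to equal $\tilde{\mathcal{J}}^L$, giving $S^L \subseteq UT^L(\tilde{\mathcal{J}}^L)$. The delicate part is the upper support $L^U = \{i : v(i,1) \in S\}$, for which I would run a short case analysis on the XOR check $c(i,0)$ according to how many of $v(i,0), v(i+N/2,0)$ lie in $S$: when neither does the check forbids $v(i,1) \in S$; when exactly one does it forces $v(i,1) \in S$; when both do, $v(i,1)$ is unconstrained. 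In every case, $i \in L^U$ implies $i \in \mathcal{J}^U \cup \tilde{\mathcal{J}}^L$, so $L^U \subseteq \mathcal{J}^U \cup \tilde{\mathcal{J}}^L$. Monotonicity then yields $S^U \subseteq UT^U(L^U) \subseteq UT^U(\mathcal{J}^U \cup \tilde{\mathcal{J}}^L)$, and assembling the stage-$0$ nodes with the two halves reproduces the recursive description of $UT(\mathcal{J})$, closing the induction.

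I expect the main obstacle to be exactly this treatment of the upper support: unlike the lower half, whose stage-$1$ nodes are rigidly tied to $\mathcal{J}$ by the equality checks, the upper half admits the ``both endpoints present'' case in which $v(i,1)$ is free, so $L^U$ cannot be determined exactly, only bounded. The argument must therefore be phrased as a containment bound on $L^U$ combined with monotonicity of $UT^U$, rather than an equality, and care is needed to confirm the free case still lands inside $\mathcal{J}^U \cup \tilde{\mathcal{J}}^L$. A secondary point I would state cleanly at the outset is the convention that the leftmost support of $S$ is precisely $\mathcal{J}$ (so that any frozen leftmost nodes, if admitted, are folded into $\mathcal{J}$), since the entire argument tracks leftmost supports through the recursion.
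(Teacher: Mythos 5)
Your proof is correct, but it takes a genuinely different route from the paper. The paper argues by contradiction with a leftward trace: if some $v(r^*,c^*)\in SS(\mathcal{J})$ lay outside $UT(\mathcal{J})$, then the check $c(r^*,c^*-1)$ on its left must have a second neighbor in the stopping set, which is a parent of $v(r^*,c^*)$ in column $c^*-1$; iterating this forces an ancestor $v(r^{**},0)$ in column $0$ to belong to the stopping set, whence $v(r^*,c^*)\in ST(r^{**})$, so $r^{**}\notin\mathcal{J}$ (else $v(r^*,c^*)$ would be in $UT(\mathcal{J})$), contradicting that the leftmost support of $SS(\mathcal{J})$ is exactly $\mathcal{J}$. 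That argument is local, non-inductive, and a few lines long. Your structural induction on $T_{n+1}=T_n^U\cup T_n^L$ is heavier to set up but sound: the restrictions $S^U,S^L$ are indeed stopping sets in their halves because intra-half checks see only that half, the equality checks pin the lower support to exactly $\tilde{\mathcal{J}}^L$, and your three-way case analysis on the XOR checks correctly yields only the containment $L^U\subseteq\mathcal{J}^U\cup\tilde{\mathcal{J}}^L$ (you rightly flag that the ``both endpoints present'' case makes equality unattainable), after which monotonicity of $UT$ closes the induction. What your approach buys is reusable machinery: the exact recursion for $UT(\mathcal{J})$, the exact lower-half support, and the leaf-count relation $f(i+N/2)=2f(i)$ all fall out as byproducts, and the same half-graph bookkeeping is essentially what the paper later deploys in the appendix proof of Theorem~\ref{th2}; what the paper's trace-back buys is brevity and no induction at all. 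One point you should make explicit in a full writeup: your induction hypothesis must be stated for \emph{all} leftmost supports including the empty one (i.e., no non-empty stopping set has empty leftmost support, so that a non-empty $S^U$ with $L^U=\emptyset$ is excluded); your ``folded into $\mathcal{J}$'' convention gestures at this, and the base case $n=1$ does verify it, but it deserves a sentence since $UT(\emptyset)=\emptyset$ is exactly what makes that step of the induction bite.
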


\begin{proof}
    Assume there exists a node $v(r^*,c^*) \notin UT(\mathcal{J})$ such that $v(r^*,c^*)$ is in some $SS(\mathcal{J})$. If $c(r^*,c^*-1)$ is of degree~2 in the factor graph, then its neighboring variable node in column $c^*-1$ must be in $SS(\mathcal{J})$. Alternatively, if $c(r^*,c^*-1)$ is of degree~3, then at least one of its two neighboring variable nodes in column $c^*-1$ is in $SS(\mathcal{J})$. Without loss of generality, assume $v(r^*,c^*-1)$ is in $SS(\mathcal{J})$. Clearly, $v(r^*,c^*-1)$ is a parent of $v(r^*,c^*)$. We can then apply the same process to find the parent(s) in columns $c^*-2$, $c^*-3$, and so on, until we reach column $0$. 
    
    At this point, there must exist a node $v(r^{**},0)$ that is a parent of $v(r^*,c^*)$, implying $v(r^*,c^*) \in ST(r^{**})$ and that any $SS(\mathcal{J})$ containing $v(r^*,c^*)$ must also contain $v(r^{**},0)$. 
    However, since we assumed $v(r^*,c^*) \notin UT(\mathcal{J})$, it follows that $v(r^{**},0) \notin \mathcal{J}$, which contradicts the fact that 
    $v(r^{**},0)$ is in $SS(\mathcal{J})$.
\end{proof}

\vspace{-12pt} 

A degree-3 check node in $UT(\mathcal{J})$ must lie in the intersection of two stopping trees. We refer to this as an 
{\it intersection check node (ICN)}, and we denote the set of these as $ICN(\mathcal{J})$.
A leaf that is shared by more than one $ST(i)$, $i \in \mathcal{J}$ is referred to as an {\it overlapped leaf (OLL)}. The set of overlapped leaves is denoted as $OLL(\mathcal{J})$.
The leaves that are associated with exactly one $ST(i)$, $i \in \mathcal{J}$ are called 
{\it non-overlapped leaves} and denoted as $nOLL(\mathcal{J})$. 
For each element in $OLL(\mathcal{J})$, there exists at least one parent ICN. 
The parent ICN of the leaf indexed by $i$ with the largest column index is called the root ICN of $i$, denoted as $rICN(i)$.
Again taking Fig.~\ref{UT} as an example, $nOLL(\mathcal{J}) = \{x_2,x_3,x_4,x_5\}$, $OLL(\mathcal{J}) = \{x_0,x_1\}$ and $rICN(0) = rICN(1) = c(1,1)$.

\begin{figure}[htbp]
\centerline{\includegraphics[width=8cm,height=3.7cm]{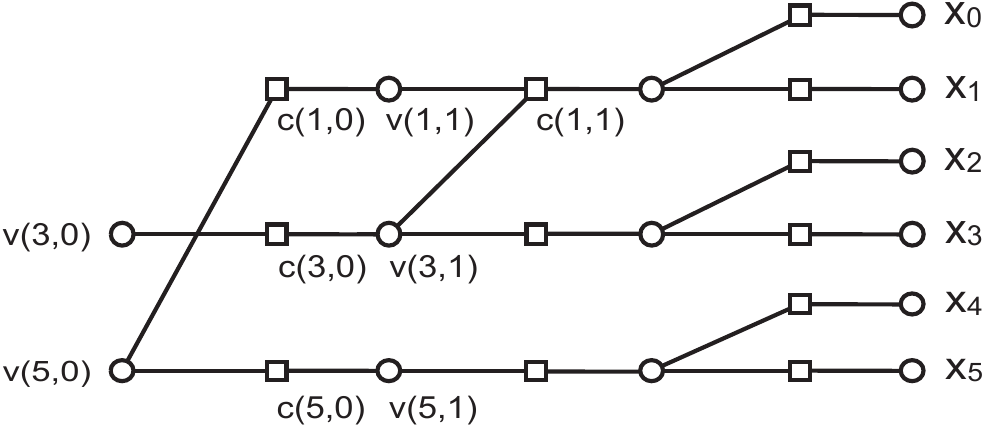}}
\caption{An example of $UT(\mathcal{J})$  for a set $\mathcal{J}$ of size 2.}
\label{UT}
\end{figure}

\section{Stopping set analysis for concatenated polar codes}

Consider the augmented polar code structure in Fig.~\ref{augmented_factor}, where an inner polar code is concatenated with an outer polar code. Recall that the inner information positions that are connected with the outer code are known as semipolarized channels, and the rest of the inner information bits are assigned to the good bit-channels~\cite{Elk2017}. To be specific, $K_0$ information bits are assigned to the auxiliary outer polar code, and an additional $K_1$ information bits are assigned to the good bit-channels within the inner code. The connection pattern between the outer codeword and the semipolarized bit-channels of the inner code is determined by an interleaver.

\begin{figure}[htbp]
\centerline{\includegraphics[width=9cm,height=4.8cm]{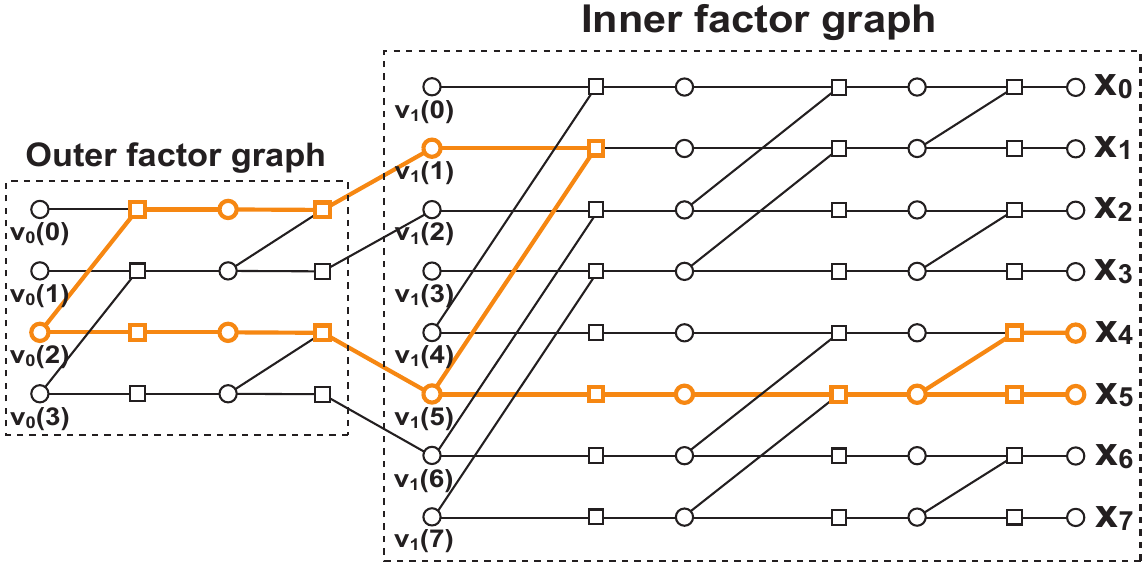}}
\caption{Augmented structure with $N_{0}{=}4$, $N_{1} {=}8$. Orange nodes represent a SS and $\{x_4,x_5\}$ are a MVSS.}
\label{augmented_factor}
\end{figure}

Let $v_{0}(i)$ denote the $(i+1)$-th node on the leftmost stage within the outer factor graph, and let $v_{1}(j)$ denote the $(j+1)$-th node on the leftmost stage within the inner factor graph. Let $\mathcal{H}_i$ denote the set of inner information nodes such that each element in $\mathcal{H}_i$ is connected to one of the leaves in the stopping tree $ST(i)$ defined on the outer factor graph. Let $MVSS(\mathcal H_i)$ be a MVSS (which may not be unique) defined by set $\mathcal H_i$ within the inner factor graph. For example in Fig.~\ref{augmented_factor}, $\mathcal H_2$ is the set of nodes $\{{v_{1}(1), v_{1}(5)}\}$. $MVSS(\mathcal H_2)$ is the set of nodes $\{{x_4,x_5}\}$. Note that $\{x_0,x_1,x_4,x_5\}$ is also a VSS for $\mathcal H_2$, but it is not a minimum VSS.

The information set $\mathcal{J}$ was previously defined as the set of indices of the information nodes. In this section, we slightly modify the definition of $\mathcal{J}$ to represent the information nodes themselves, rather than their indices, to avoid confusion between the elements in the set of outer information nodes and those in the set of inner information nodes. Denote the set of information nodes on the leftmost stage of the outer factor graph that corresponds to $K_0$ as $\mathcal{J}_{out}$, and denote the set of information nodes on the leftmost stage of the inner factor graph corresponding to $K_1$ as $\mathcal{J}_{in}$. Define $\mathcal{J}_{total} = \{\mathcal{J}_{out},\mathcal{J}_{in}\}$. 

Note that there is no valid stopping set within the overall factor graph whose intersection with the union of the leftmost stages of the inner and outer factor graphs is exactly $\mathcal{J}_{total}$. This is because, when $\mathcal{J}_{out}$ is non-empty, some inner information nodes corresponding to semipolarized channels must be included in the stopping set. However, $\mathcal{J}_{in}$ does not include any nodes corresponding to semipolarized channels. For example, in Fig.~\ref{augmented_factor}, with $\mathcal{J}_{out} = \{v_{0}(2)\}$ and $\mathcal{J}_{in} = \emptyset$, $\mathcal H_2$ must be part of any stopping set that only contains $\mathcal{J}_{out}$ on the leftmost stage. Therefore, we introduce the following definition: Let $SS'(\mathcal{J}_{total})$ denote a stopping set within the overall factor graph that includes exactly $\mathcal{J}_{out}$ on the leftmost stage of the outer factor graph, and $\mathcal{J}_{in}$ along with some semipolarized nodes, on the leftmost stage of the inner factor graph. 

\begin{example}
    We provide an example of SS' to demonstrate that at least one such set always exists for any given $\mathcal{J}_{total}$. Let $UT(\mathcal{J}_{out})$ represent the union of stopping trees in the outer factor graph, and let $\mathcal{H}_{\mathcal{J}_{out}}$ be the set of inner information nodes, where each element in $\mathcal{H}_{\mathcal{J}_{out}}$ is connected to one of the leaves in $UT(\mathcal{J}_{out})$. Define $\mathcal{J}^* = \mathcal{H}_{\mathcal{J}_{out}} \cup \mathcal{J}_{in}$, and let $UT(\mathcal{J}^*)$ be the union of stopping trees in the inner factor graph. Then, the set $UT(\mathcal{J}_{out}) \cup UT(\mathcal{J}^*)$ forms a valid $SS'(\mathcal{J}_{total})$.
\end{example}

In the augmented polar code structure, only the variable nodes on the rightmost stage of the inner factor graph are observed nodes. Accordingly, let $VSS'(\mathcal{J}_{total})$ denote a set of observed variable nodes in $SS'(\mathcal{J}_{total})$, and let $MVSS'(\mathcal{J}_{total})$ denote a minimum VSS among all the $VSS'(\mathcal{J}_{total})$. Define $SD'(\mathcal{J}_{total}) = \min\limits_{\mathcal{J} \subseteq \mathcal{J}_{total}} |MVSS'(\mathcal{J})|$ as the stopping distance of the augmented code. This nomenclature is appropriate because $SD'(\mathcal{J}_{total})$ characterizes the minimum number of errors at the receiver that will cause some unrecoverable errors in the BP decoder. 
We now derive an upper bound on this stopping distance.



We begin by considering a single information node. 
For an inner information node $i \in \mathcal{J}_{in}$, the leaves of $ST(i)$ are $MVSS'(i)$. Therefore, we continue to use $f(i)$ to characterize the value of $|MVSS'(i)|$ for $i \in \mathcal{J}_{in}$. For an outer information node $i \in \mathcal{J}_{out}$, it can be seen that $\mathcal{H}_i$ must be included in $MVSS'(i)$. Thus, we are interested in determining $|MVSS(\mathcal{H}_i)|$ within the inner factor graph. Clearly, $MVSS(\mathcal{H}_i)$ forms a $VSS'(i)$, thus providing an upper bound on $|MVSS'(i)|$. Let $f_{out}(i) = |MVSS(\mathcal{H}_i)|$. The following theorem then provides an upper bound on $SD'(\mathcal{J}_{total})$.

\begin{thm}
    Let $\min(a,b)$ be the function that returns the minimum value of $a$ and $b$. Then, we have
\begin{equation*}
     SD'(\mathcal{J}_{total}) \leq \min(\min\limits_{j \in \mathcal{J}_{out}} f_{out}(j),\min\limits_{j \in \mathcal{J}_{in}} f(j)).
\end{equation*}
\label{stopping distance}
\end{thm}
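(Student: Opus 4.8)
The plan is to exploit the fact that $SD'(\mathcal{J}_{total})$ is itself defined as a minimum over all subsets $\mathcal{J} \subseteq \mathcal{J}_{total}$, namely $SD'(\mathcal{J}_{total}) = \min_{\mathcal{J} \subseteq \mathcal{J}_{total}} |MVSS'(\mathcal{J})|$. Consequently every individual subset furnishes an upper bound on $SD'(\mathcal{J}_{total})$, so it suffices to evaluate the singleton subsets $\{j\}$ for $j \in \mathcal{J}_{out}$ and $j \in \mathcal{J}_{in}$ separately and then take the smaller of the two resulting quantities. In particular I never need to reason about mixed or larger subsets: the whole bound is extracted from single information nodes.

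For an inner node $j \in \mathcal{J}_{in}$ I would invoke the observation already recorded above, that the leaf set of the stopping tree $ST(j)$ is a valid $VSS'(j)$ and is in fact a minimum one, so that $|MVSS'(j)| = f(j)$ (with $f(j) = 2^{wt(j_b)}$ by Proposition~\ref{weight_leaf}). Since $\{j\} \subseteq \mathcal{J}_{total}$, this immediately gives $SD'(\mathcal{J}_{total}) \leq f(j)$ for every such $j$, and hence $SD'(\mathcal{J}_{total}) \leq \min_{j \in \mathcal{J}_{in}} f(j)$.

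For an outer node $j \in \mathcal{J}_{out}$ the only real work is to exhibit one concrete $SS'(j)$ whose observed variable nodes number exactly $f_{out}(j) = |MVSS(\mathcal{H}_j)|$. Following the construction in the Example, I would take $ST(j)$ in the outer factor graph and glue it, through the interleaver, to an inner stopping set that realizes $MVSS(\mathcal{H}_j)$; the leaves of $ST(j)$ are by definition identified with the nodes of $\mathcal{H}_j$ on the leftmost stage of the inner graph, so the two pieces meet consistently. The step to be careful about is verifying that this union is a genuine stopping set of the overall graph, i.e. that every check node retains at least two neighbors in the set, which reduces to checking that the junction nodes $\mathcal{H}_j$ are covered on both the outer and the inner side; this is guaranteed because those nodes lie simultaneously in $ST(j)$ and in the chosen inner stopping set. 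Its observed nodes are precisely $MVSS(\mathcal{H}_j)$, so $|MVSS'(j)| \leq f_{out}(j)$, giving $SD'(\mathcal{J}_{total}) \leq \min_{j \in \mathcal{J}_{out}} f_{out}(j)$. Combining the two singleton families yields $SD'(\mathcal{J}_{total}) \leq \min\big(\min_{j \in \mathcal{J}_{out}} f_{out}(j), \min_{j \in \mathcal{J}_{in}} f(j)\big)$, which is the claim.

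The main obstacle, as flagged, is the stopping-set verification in the outer case: one must confirm that fusing the outer stopping tree with an inner minimum stopping set across the interleaver does not create a check node of degree one. Everything else is bookkeeping on the definition of $SD'$ as a minimum, together with the two facts, already established in the text, that $|MVSS'(j)| = f(j)$ for inner nodes and that $MVSS(\mathcal{H}_j)$ is a $VSS'(j)$ for outer nodes.
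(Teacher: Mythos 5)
Your proposal is correct and follows essentially the same route as the paper's proof: both bound $SD'(\mathcal{J}_{total})$ by the minimum over singleton subsets, use $|MVSS'(j)| = f(j)$ for inner nodes, and use the fact that $MVSS(\mathcal{H}_j)$ yields a $VSS'(j)$ to get $|MVSS'(j)| \leq f_{out}(j)$ for outer nodes. The only difference is cosmetic: you spell out the gluing of $ST(j)$ in the outer graph to an inner stopping set realizing $MVSS(\mathcal{H}_j)$, a verification the paper compresses into the word ``clearly'' (leaning on its earlier Example of $SS'$).
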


\begin{proof}
To start with, we point out that $SD'(\mathcal{J}_{total}) \leq \min\limits_{j \in \mathcal{J}_{total}} |MVSS'(j)|$. This inequality arises because it defines a search space that is contained in the search space considered by $SD'(\mathcal{J}_{total})$, which examines all possible combinations of nodes $\mathcal{J} \subseteq \mathcal{J}_{total}$ and selects the smallest $|MVSS'(\mathcal{J})|$. 
Next, we note that $\min\limits_{j \in \mathcal{J}_{total}} |MVSS'(j)| = \min(\min\limits_{j \in \mathcal{J}_{out}} |MVSS'(j)|,\min\limits_{j \in \mathcal{J}_{in}} f(j))$. Finally, as previously discussed, $f_{out}(j)$ provides an upper bound on $|MVSS'(j)|$ for $j \in\mathcal{J}_{out} $, i.e., $|MVSS'(j)| \leq f_{out}(j)$, thereby completing the proof.

\end{proof}

Given that $f(i)$ can be easily calculated by using Proposition~\ref{weight_leaf}, it is of interest to determine the value of $f_{out}(i) = |MVSS(\mathcal{H}_i)|$. Since the connection pattern between the inner code and the outer code is determined by   
an interleaver, knowing $f_{out}(i)$ would be sufficient if we can find $|MVSS(\mathcal{J})|$ for any selection of $\mathcal{J}$ within the inner factor graph. Theorem~\ref{Bound Eslami} gives a lower bound on this value. 
Empirical results, discussed in the next section, show that this bound is loose when $\mathcal{J}$ is randomly chosen (though we will prove that the bound becomes tight for some specific choices of $\mathcal{J}$). However, we will introduce four useful bounds on $|MVSS(\mathcal{J})|$, including one lower bound and three upper bounds.

\section{Bounds on $|MVSS(\mathcal{J})|$}

As discussed in the previous section, the problem of finding an upper bound on the stopping distance for the augmented code can be reduced to determining $|MVSS(\mathcal{J})|$ for randomly chosen sets $\mathcal{J}$ within the conventional polar factor graph. However, finding the exact value is challenging. In this section, we propose four different bounds on $|MVSS(\mathcal{J})|$. The proposed lower bound performs better than the one described in Theorem~\ref{Bound Eslami}, particularly when $\mathcal{J}$ is randomly chosen, as we will demonstrate through simulation results. Among the three upper bounds, the Encoding Bound has the lowest time complexity, though it does not perform as well as the others, which will be evident from experiment results. The Deletion Bounds I and II are similar but applicable to different scenarios, which we will discuss later in this section.

\subsection{Lower bound on $|MVSS(\mathcal{J})|$}

Let $G_{\mathcal J}$ denote the submatrix of the inner encoding matrix $G=F^{\bigotimes n}$ consisting of the rows that correspond to $\mathcal J$. Again taking Fig.~\ref{augmented_factor} as an example, the resulting $G_{\mathcal H_2}$ consists of the second and sixth rows of the inner encoding matrix $G=F^{\bigotimes 3}$. The following theorem, presented in  
\cite{Zhu2024}, gives a lower bound on the size of $MVSS(\mathcal J)$.

\begin{thm} \textbf{\textit{(Lower Bound II)}}
Given any information set $\mathcal{J}$, we have $|MVSS(\mathcal J)| \geq g(G_{\mathcal J})$, where 
\begin{equation}
g(A_{p\times q}) = \sum_{j=1}^{q} \delta(\sum_{i=1}^{p} a_{ij} - 1) 
\end{equation}
\begin{equation}
\delta(x) =
\begin{cases}
    1 & \text{if } x = 0, \\
    0 & \text{otherwise}.
\end{cases}
\end{equation}
\label{th1}
\end{thm}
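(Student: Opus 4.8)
The plan is to first convert the arithmetic quantity $g(G_{\mathcal J})$ into a combinatorial one, and then prove a statement about which leaves every stopping set must contain. Recall from the setup (Fact 5 of \cite{Eslami2013}, cf. Proposition~\ref{weight_leaf}) that the leaf set of $ST(i)$ is exactly the support of the $(i{+}1)$-th row $r_i$ of $G$. Hence the integer column sum $\sum_{i} a_{ij}$ appearing in $g$ is precisely the number of stopping trees $ST(i)$, $i\in\mathcal J$, whose leaf set contains $x_j$, and $\delta(\sum_i a_{ij}-1)=1$ exactly when $x_j$ lies in one and only one such tree, i.e. when $x_j$ is a non-overlapped leaf. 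Therefore $g(G_{\mathcal J})=|nOLL(\mathcal J)|$, and the theorem reduces to the assertion that \emph{every} $VSS(\mathcal J)$ contains all of $nOLL(\mathcal J)$; taking the minimum over all $VSS(\mathcal J)$ then gives $|MVSS(\mathcal J)|\ge|nOLL(\mathcal J)|=g(G_{\mathcal J})$.

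It thus suffices to show that every non-overlapped leaf lies in every $SS(\mathcal J)$. I would prove this by induction on $n$ using the recursive decomposition $T_n=T_n^U\cup T_n^L$ together with the leftmost combining stage. Write $\mathcal J_{\mathrm{low}}=\{i\in\mathcal J: i<2^{n-1}\}$ and $\mathcal J_{\mathrm{high}}=\{i\in\mathcal J: i\ge 2^{n-1}\}$, and fix an arbitrary $S=SS(\mathcal J)$. The combining-stage check nodes pin down the induced root sets of the two half-graphs (up to the labelling convention of the halves): the degree-$2$ pass checks force one half, say $S\cap T_n^L$, to have root set exactly $\mathcal J^{(1)}:=\{i-2^{n-1}:i\in\mathcal J_{\mathrm{high}}\}$, while the degree-$3$ XOR checks force the root set $\mathcal J^{(0)}$ of $S\cap T_n^U$ to satisfy $A\triangle B\subseteq\mathcal J^{(0)}\subseteq A\cup B$, where $A=\mathcal J_{\mathrm{low}}$ and $B=\mathcal J^{(1)}$. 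Since every check interior to a half-graph inherits the stopping-set property, $S\cap T_n^U$ and $S\cap T_n^L$ are stopping sets in copies of $T_{n-1}$ with root sets exactly $\mathcal J^{(0)}$ and $\mathcal J^{(1)}$, so the induction hypothesis applies to each.

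The remaining step is to check that non-overlap is preserved under this restriction, writing $\ell\preceq m$ for bitwise domination (support containment) and $\mathrm{cov}_X(\ell)$ for the number of $m\in X$ with $\ell\preceq m$. A lower leaf $x_{\ell+2^{n-1}}$ is covered by $i\in\mathcal J$ iff $i\in\mathcal J_{\mathrm{high}}$ and $\ell\preceq i-2^{n-1}$, so lower non-overlapped leaves of $\mathcal J$ correspond bijectively to non-overlapped leaves of $\mathcal J^{(1)}$, and induction places them in $S\cap T_n^L$. For an upper leaf $x_\ell$ the number of covering indices is $\mathrm{cov}_A(\ell)+\mathrm{cov}_B(\ell)$; a short case check shows that when this equals $1$ the unique covering index lies in $A\setminus B$ or $B\setminus A$, hence in $A\triangle B\subseteq\mathcal J^{(0)}$, and that it is the unique element of $\mathcal J^{(0)}$ covering $\ell$ (any other covering element of $\mathcal J^{(0)}\subseteq A\cup B$ would push the sum to at least $2$). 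Thus $x_\ell$ is a non-overlapped leaf of $\mathcal J^{(0)}$ and induction places it in $S\cap T_n^U$. Combining the two halves proves the claim, with the base case $n=0$ immediate.

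I expect the main obstacle to be exactly the behaviour at the degree-$3$ combining checks: such a check is satisfied by either of its two column-$1$ neighbours, so a naive forward trace from a root toward a leaf cannot decide which branch is forced into $S$. The decisive point is that the non-overlap hypothesis collapses this ambiguity---it guarantees the relevant branch lands in the symmetric difference $A\triangle B$ (the forced part of the sandwich) rather than in the free intersection $A\cap B$---so the substance of the argument is the careful bookkeeping of the two preservation statements above and the verification that the sandwich $A\triangle B\subseteq\mathcal J^{(0)}\subseteq A\cup B$ is strong enough to keep non-overlapped leaves forced at every level of the recursion.
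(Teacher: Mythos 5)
Your proposal is correct, and while it shares the paper's overall reduction, it proves the crucial lemma by a genuinely different argument. Like the paper, you identify the leaf set of $ST(i)$ with the support of row $r_i$ of $G$, so that $g(G_{\mathcal J})=|nOLL(\mathcal J)|$, and then reduce the bound to the claim that every $SS(\mathcal J)$ contains all of $nOLL(\mathcal J)$. The paper obtains that claim by invoking its Proposition~\ref{nOLL}, proved by a local branch-tracing argument: the branch of $ST(i)$ running from the root $v(i,0)$ to an unshared leaf is itself entirely unshared, so deleting any part of it would leave a second stopping tree rooted at $v(i,0)$, contradicting the uniqueness of stopping trees (Fact 2 of \cite{Eslami2013}); the appendix proof then devotes most of its space to establishing the row-support fact by induction on the matrix recursion~(\ref{recursive_G}), which you instead simply cite via Proposition~\ref{weight_leaf}. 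You prove the containment by structural induction on the graph recursion $T_n=T_n^U\cup T_n^L$: the degree-2 pass-through checks force the lower root set to equal the shifted copy of $\mathcal J\cap\{2^{n-1},\dots,2^n-1\}$, the degree-3 XOR checks confine the upper root set $\mathcal J^{(0)}$ to the sandwich $A\triangle B\subseteq\mathcal J^{(0)}\subseteq A\cup B$, and a coverage count of one forces the unique covering index into the forced part $A\triangle B$, so non-overlap is preserved in whichever half the leaf lives; I checked the truth-table analysis of the combining checks and the two coverage-preservation claims, and they hold. Each approach buys something: the paper's argument is shorter and purely local, but it leans on structural properties of stopping trees, and its key step (that the punctured set restricted to $ST(i)$ would ``still constitute a stopping tree'') is stated rather informally; your induction is longer and requires a hypothesis quantified over all root sets, but it is self-contained --- only the stopping-set definition and the recursive factor-graph structure are used --- it makes the degree-3 ambiguity explicit rather than implicit, and the root-set bookkeeping you develop is essentially the same half-graph decomposition the paper later exploits in the proof of Theorem~\ref{th2}.
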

\begin{proof}
See Appendix.
\end{proof}

In words, the function $g(\cdot)$ counts the number of columns in a matrix that have weight one. Thus, for any given information indices $\mathcal{J}$ we easily calculate the lower bound on the size of a $MVSS(\mathcal{J})$ by looking at the generator matrix of the polar code.

\subsection{Upper bounds on $|MVSS(\mathcal{J})|$}

Given a vector $v = [v_0, v_1, \dots, v_{n-1}] \in \mathbb{F}^n$, the \textit{support set} of $v$, denoted as \text{supp}($v$), is the set of indices where the elements of $v$ are non-zero. Formally, \text{supp}($v$) $= \{ i \in \{0, 1, \dots, n-1\} \mid v_i \neq 0 \}$. For example, for the vector $v = [1, 0, 0, 1, 1]$, the support set is \text{supp}($v$) = $\{0, 3, 4\}$.

\begin{thm}
\textbf{\textit{(Encoding Bound)}} Let $u$ be a length-$N$ binary vector, whose support set is $\mathcal{J}$. Let $x = uG$. Then, the nodes on the rightmost stage that are indexed by the support set of $x$ form a $VSS(\mathcal{J})$, and we have:
\begin{equation*}
    |MVSS(\mathcal{J})| \leq wt(x).
\end{equation*}
\label{Bound1}
\end{thm}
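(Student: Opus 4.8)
The plan is to exhibit an explicit $VSS(\mathcal{J})$ of size $wt(x)$ and then invoke the minimality of $MVSS(\mathcal{J})$. The natural candidate comes from running the encoder on the input $u$ and reading off the induced configuration on the entire factor graph. Concretely, I would assign the values $u_0,\dots,u_{N-1}$ to the leftmost-stage variable nodes, propagate them through the check nodes to obtain a consistent value at every intermediate variable node, and record $x=uG$ at the rightmost stage. Let $S$ be the set of all variable nodes, across all stages, whose assigned value is $1$. Since $\text{supp}(u)=\mathcal{J}$, the intersection of $S$ with the leftmost stage is exactly $\mathcal{J}$, and its intersection with the rightmost stage is exactly $\text{supp}(x)$.

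The core of the argument is to show that $S$ is a stopping set. Here I would use the fact that the polar factor graph in Fig.~\ref{GSS} is a normal (Forney-style) realization in which every check node enforces a zero-sum constraint on its neighboring variable nodes: a degree-$2$ check node forces its two neighbors to be equal, and a degree-$3$ check node forces the modulo-$2$ sum of its three neighbors to be $0$. Because the assignment above is by construction a valid encoding of $u$, all of these constraints are satisfied, so every check node has an even number of value-$1$ neighbors. Consequently, any check node adjacent to at least one node of $S$ is adjacent to at least two nodes of $S$, which is precisely the stopping-set condition. Thus $S$ is a valid $SS(\mathcal{J})$, non-empty whenever $\mathcal{J}\neq\emptyset$ (equivalently $x\neq 0$), and its leftmost-stage nodes are exactly $\mathcal{J}$ as required.

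It then follows that the observed (rightmost-stage) nodes of $S$, namely those indexed by $\text{supp}(x)$, form a $VSS(\mathcal{J})$ of size $wt(x)$. Since $MVSS(\mathcal{J})$ is by definition a smallest VSS among all $VSS(\mathcal{J})$, we conclude $|MVSS(\mathcal{J})|\leq wt(x)$, completing the proof.

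The step I expect to require the most care is justifying the even-parity property at every check node: pinning down the check-node semantics of the normal realization (degrees $2$ and $3$ corresponding to equality and single-parity-check constraints, respectively) and confirming that the value propagation through the recursive $F^{\bigotimes n}$ structure is consistent at each intermediate stage. Once the graph is identified as a faithful normal realization of $x=uG$, the stopping-set property is immediate from parity and the bound follows without further computation.
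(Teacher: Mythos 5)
Your proof is correct and follows essentially the same route as the paper's: initialize the leftmost stage with $u$, propagate the encoding through the factor graph, observe that the parity constraints force every check node touching a value-$1$ variable node to touch at least two, so the value-$1$ nodes form an $SS(\mathcal{J})$ whose observed nodes are $\mathrm{supp}(x)$. Your additional care about the normal-realization check-node semantics (equality for degree~$2$, zero-sum for degree~$3$) is a slightly more explicit justification of the even-parity step that the paper states more briefly, but the argument is the same.
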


\begin{proof}
In the encoding process of polar codes, all of the variable nodes in the factor graph are set to either 0 or 1. Each check node has an even number of neighboring variable nodes with value 1, in order to satisfy the parity check requirement. If we initialize the variable nodes on the leftmost stage of the factor graph with $u$, i.e., set nodes in $\mathcal{J}$ to 1 and nodes in $\mathcal{J}^c$ to 0, and update for each variable node on the other stages (which is basically the encoding process), then the value-1 nodes on the rightmost stage will be indexed by the support set of $x$. 

We state that after the encoding process, all the variable nodes with value 1 form a stopping set. The reason is that if a check node is connected with a value-1 variable node, then it must connect to exactly two \mbox{value-1} variable nodes to satisfy the parity check equations. Thus, we can pick all the \mbox{value-1} variable nodes on the rightmost stage to form a variable stopping set of $\mathcal{J}$.
\end{proof}

Fig.~\ref{bound1} gives an example of Theorem~\ref{Bound1}, where $\mathcal{J}=\{0,3,7\}$. The value-1 variable nodes together with their neighboring check nodes are shown in black. In this example, $x = [1,0,0,0,1,1,1,1]$, and the corresponding $VSS(\mathcal{J}) = \{x_0,x_4,x_5,x_6,x_7\}$. 

\begin{figure}[htbp]
\centerline{\includegraphics[width=9cm,height=5.5cm]{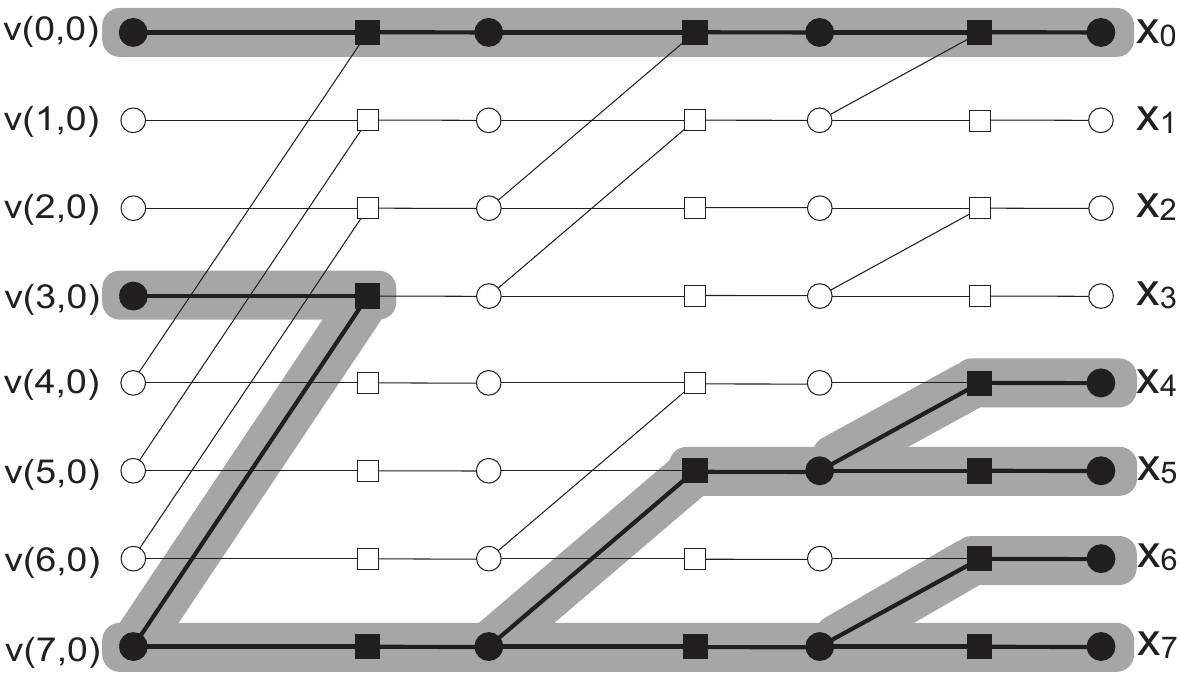}}
\caption{An example of the Encoding Bound, $\mathcal{J}=\{0,3,7\}$.}
\label{bound1}
\end{figure}

The following two upper bounds are algorithm-based. From Proposition~\ref{subset}, we know that any $MVSS(\mathcal{J})$ is a subset of $UT(\mathcal{J})$. To find a $MVSS(\mathcal{J})$, we begin by deleting some nodes from the set of leaf nodes in $UT(\mathcal{J})$. When the code length $N$ and $|\mathcal{J}|$ are small, such as when both are less than 32, we can perform an exhaustive search for $MVSS(\mathcal{J})$. For instance, in Fig.~\ref{UT}, the worst-case complexity involves testing all possible deletion patterns in the set of leaf nodes $\{x_0, x_1, x_2, x_3, x_4, x_5\}$, which amounts to $\sum_{k=1}^{k=5}{6\choose k} = 62$ combinations.

However, since the time complexity is not polynomial with respect to $|\mathcal{J}|$, a different approach is required for large $N$ and $|\mathcal{J}|$. Instead of randomly selecting combinations, we will need a certain deletion schedule. The following two upper bounds, derived from similar algorithms but employing  distinct schedules, are appropriate for different choices of $\mathcal{J}$. Before introducing the upper bounds, we reduce the search space for deletion patterns by observing that $nOLL(\mathcal{J})$ must be included in any $MVSS(\mathcal{J})$, as stated in the following proposition:

\begin{proposition}
    Given set $\mathcal{J}$ within the factor graph of a polar code, we have $nOLL(\mathcal{J}) \subseteq MVSS(\mathcal{J})$.
\label{nOLL}
\end{proposition}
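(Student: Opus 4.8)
The plan is to prove that $nOLL(\mathcal{J}) \subseteq MVSS(\mathcal{J})$ by showing that every non-overlapped leaf is \emph{forced} to belong to any stopping set $SS(\mathcal{J})$, and hence in particular to a minimum one. The key structural fact I would exploit is the definition of $nOLL(\mathcal{J})$: a non-overlapped leaf $x$ is a leaf belonging to exactly one stopping tree $ST(i)$ with $i \in \mathcal{J}$. The intuition is that such a leaf has a \emph{unique} lineage of parents leading back to the single information node $v(i,0)$, and along that lineage there is no branching shared with another tree, so there is no alternative way to ``satisfy'' the check-node constraints without including $x$ itself.

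First I would recall Proposition~\ref{subset}, which guarantees that any $SS(\mathcal{J}) \subseteq UT(\mathcal{J})$, so I only need to reason within the union of stopping trees. Next I would argue by contradiction: suppose some non-overlapped leaf $x = v(r,n) \in nOLL(\mathcal{J})$ is \emph{not} contained in a particular $MVSS(\mathcal{J})$, equivalently the underlying $SS(\mathcal{J})$ omits $x$. I would then trace the path from $x$ back toward the leftmost stage. Because $x \in ST(i)$ for exactly one $i$, every check node on the path from $x$ to $v(i,0)$ that lies in $UT(\mathcal{J})$ is a degree-$2$ check node within the tree structure (it is not an ICN, since an ICN would signal that $x$ is shared by two trees, contradicting the non-overlapped status). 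The core combinatorial step is: if a variable node $w$ lies in $SS(\mathcal{J})$ and its unique in-tree parent check node $c$ has degree $2$, then the parity/stopping-set condition forces the other neighbor of $c$ (the parent variable node) to be in $SS(\mathcal{J})$ as well --- but conversely, reading the argument downward, if the parent chain is present then $x$ must be present, because the check node connecting $x$'s level cannot be covered ``twice'' by anything other than $x$.

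The cleanest way to formalize this is to push the argument in the encoding/stopping direction used in the proof of Theorem~\ref{Bound1}: I would show that removing $x$ from an alleged $SS(\mathcal{J})$ leaves some check node on $x$'s lineage connected to the set exactly once, violating the stopping-set defining property (every neighboring check is connected at least twice). Concretely, consider the root ICN or the deepest tree-check adjacent to $x$; since $x$ is non-overlapped, the only leaf available to pair with the incoming tree-branch at that check node is $x$ itself, so its deletion breaks the ``at least twice'' condition. Therefore $x$ cannot be deleted, i.e., $x \in SS(\mathcal{J})$ for \emph{every} valid stopping set, and in particular $x \in MVSS(\mathcal{J})$. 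Since $x$ was an arbitrary element of $nOLL(\mathcal{J})$, this yields $nOLL(\mathcal{J}) \subseteq MVSS(\mathcal{J})$.

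I expect the main obstacle to be making the phrase ``the only way to satisfy the check node is through $x$'' fully rigorous, since the factor graph has degree-$3$ check nodes (ICNs) where branching occurs, and one must carefully certify that none of these alternative branches can substitute for $x$ along a \emph{non-overlapped} leaf's lineage. In other words, the delicate part is verifying that the uniqueness of the tree $ST(i)$ containing $x$ propagates all the way to the leaf level without an ICN offering an escape route; this is exactly where the distinction between $OLL(\mathcal{J})$ and $nOLL(\mathcal{J})$ does the work, and I would need to state precisely that any degree-$3$ check node adjacent to $x$ would force $x$ to be shared, contradicting $x \in nOLL(\mathcal{J})$.
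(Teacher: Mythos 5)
Your proposal is correct, and its skeleton matches the paper's proof: both arguments trace the lineage of the non-overlapped leaf $x$ back to the root $v(i,0)$ and establish that this lineage is exclusive to $ST(i)$ (you phrase this as ``no check node on the path is an ICN,'' the paper as ``every variable node on the branch is non-overlapped''; both rest on the same observation that a shared node forces all of its right-descendants, including $x$, to be shared). Where you diverge is the finishing step. The paper closes by invoking the uniqueness of stopping trees (Fact 2 of Eslami et al.): deleting $x$ or any part of the exclusive branch would leave a second, smaller stopping tree rooted at $v(i,0)$, a contradiction. You instead close with a direct degree-counting argument: since any $SS(\mathcal{J})$ is contained in $UT(\mathcal{J})$ by Proposition~\ref{subset}, and every check node on the lineage has only two neighbors inside $UT(\mathcal{J})$, membership propagates along the branch, so the presence of $v(i,0)$ forces the presence of $x$. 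Your version is more self-contained --- it needs no appeal to Fact 2 --- and it makes explicit the propagation that the paper leaves implicit. One caveat: your ``concretely'' sentence examines only the check node adjacent to $x$, which suffices only when $x$'s parent variable node is still in the alleged stopping set; if the set drops a longer tail of the branch, you must instead take the deepest branch node that remains (it exists, since $v(i,0)$ remains) and exhibit the violated check there. This is exactly the induction that your non-ICN claim for the whole lineage already licenses, and your broader statement that ``some check node on $x$'s lineage'' is connected to the set exactly once is the right one, so this is a presentational gap rather than a mathematical one.
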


\begin{proof}
    To see this, suppose that non-overlapped leaf node (nOLL) $x_k \in UT(\mathcal{J})$ belongs only to the stopping tree $ST(i)$, where $i\in \mathcal{J}$. The variable nodes on the branch of $ST(i)$ that traces back to the root node $v(i,0)$ must also be non-overlapped. This is because if one of the nodes $v(p,q)$ on that branch is shared by two trees or more, then all the children nodes of $v(p,q)$, i.e., nodes to the right of $v(p,q)$ along the tree $ST(i)$, must be shared nodes, including the leaf node $x_k$. This contradicts the assumption that $x_k$ is unshared. Since this branch belongs only to $ST(i)$, the result of deleting $x_k$ or any subset of nodes from this branch other than the root node $v(i,0)$ could not produce a stopping set, for this would mean that the remaining subset of nodes in $ST(i)$ would still constitute a stopping tree, call it $ST'(i)$, with root $v(i,0)$. However, this would violate Fact 2 in~\cite{Eslami2013}, which states that every information bit has a unique stopping tree. A simple example is shown in Fig.~\ref{bound1}, where $\mathcal{J}=\{0,3,7\}$ and $nOLL(\mathcal{J})=\{x_4,x_5,x_6,x_7\}$. There is no proper $SS(\mathcal{J})$ that does not include $nOLL(\mathcal{J})$.
\end{proof}

\begin{thm}
\textbf{\textit{(Deletion Bound I)}} Let $S$ be the set of variable nodes returned by Algorithm~\ref{algo_version1}. Then $S$ forms a $VSS(\mathcal{J})$, and we have:
\begin{equation*}
    |MVSS(\mathcal{J})| \leq |S|
\end{equation*}
\label{Bound2}
\end{thm}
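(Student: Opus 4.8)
The plan is to establish the two assertions of the statement separately. The second one is essentially free: by definition, $MVSS(\mathcal{J})$ is a $VSS(\mathcal{J})$ of minimum cardinality, so once I have shown that the returned set $S$ is \emph{some} valid $VSS(\mathcal{J})$, the inequality $|MVSS(\mathcal{J})| \leq |S|$ is immediate. Hence all the real work goes into the first assertion: that $S$ is the observed-node set (rightmost-stage projection) of a genuine stopping set whose information nodes on the leftmost stage are exactly $\mathcal{J}$.

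For the first assertion I would argue by an invariant carried through the iterations of Algorithm~\ref{algo_version1}. The natural starting configuration is $UT(\mathcal{J})$: by Proposition~\ref{subset} every $SS(\mathcal{J})$ lies inside it, and $UT(\mathcal{J})$ is itself a valid $SS(\mathcal{J})$, so it is the correct place from which to delete. By Proposition~\ref{nOLL}, the non-overlapped leaves $nOLL(\mathcal{J})$ can never be removed, so the only deletion candidates are the overlapped leaves $OLL(\mathcal{J})$, whose parent intersection check nodes supply the degree-$2$ slack needed to absorb a removal. The invariant I would maintain is: \emph{after each iteration the surviving configuration is a valid $SS(\mathcal{J})$}, meaning every incident check node is still connected at least twice and the set of leftmost-stage information nodes it contains is still exactly $\mathcal{J}$. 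I would prove this by induction on the iteration count, with $UT(\mathcal{J})$ as the base case and the inductive step reducing to the claim that any single deletion admitted by the algorithm's feasibility test preserves the invariant. The output $S$ is then the rightmost-stage projection of the terminal configuration, hence a $VSS(\mathcal{J})$.

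The inductive step is the technical heart of the argument and the step I expect to be the main obstacle. When an overlapped leaf $x_k$ is deleted, the deletion must be propagated leftward, pruning the chain of variable nodes that would otherwise drop to degree one, and it must terminate cleanly at an intersection check node (intuitively at or before $rICN(k)$), where a sibling branch from another stopping tree keeps that check node at degree two. The role of the prescribed deletion \emph{schedule} is precisely to guarantee this clean stop: I would show that, by processing the overlapped leaves in the algorithm's order, each admitted deletion excises exactly the portion of a stopping tree lying beyond its root intersection check node, so no check node ever falls to degree one, and, crucially, no root $v(j,0)$ with $j \in \mathcal{J}$ is ever reached. The latter point is what keeps the information set equal to $\mathcal{J}$ and avoids the stopping-tree-uniqueness violation invoked in Proposition~\ref{nOLL}. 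Establishing that the schedule admits only such ``safe'' deletions is the crux; the remaining bookkeeping, namely that $S$ consists solely of surviving rightmost-stage nodes and projects to the observed set of the resulting $SS(\mathcal{J})$, is then routine.
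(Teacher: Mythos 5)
Your overall skeleton is the right one, and it matches the paper's: the inequality $|MVSS(\mathcal{J})| \leq |S|$ is immediate once $S$ is shown to be a $VSS(\mathcal{J})$, and that is established by maintaining the invariant that the committed configuration is always a stopping set containing exactly $\mathcal{J}$ on the leftmost stage, starting from $UT(\mathcal{J})$ and using Proposition~\ref{nOLL} to restrict deletion candidates to overlapped leaves.

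However, your plan for the inductive step --- which you correctly identify as carrying all the weight --- misreads how Algorithm~\ref{algo_version1} works, and the structural claims you propose to prove there are both unnecessary and false in general. The algorithm does not rely on the deletion schedule to guarantee that an admitted deletion is ``clean.'' It performs each deletion tentatively on a copy UT\_punc, lets degree-$1$ check nodes arise and resolves them with the explicit propagation loop (lines 9--11), and only then tests, a posteriori, whether any leftmost-stage variable node was removed; the configuration is committed only if not, and otherwise the copy is discarded and $UT$ is unchanged. Consequently the invariant is preserved by the acceptance test itself, for \emph{any} schedule: a committed UT\_punc has no degree-$1$ check nodes (by termination of the inner loop) and still contains exactly $\mathcal{J}$ on the left (by the test), hence is an $SS(\mathcal{J})$, and rejected attempts change nothing. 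The schedule affects only how small $|S|$ turns out to be, never the validity of the bound. By contrast, your claims that in an admitted deletion ``no check node ever falls to degree one'' and that the excision is ``exactly the portion of a stopping tree lying beyond $rICN(l)$'' do not describe the algorithm's behavior: degree-$1$ check nodes routinely appear during admitted deletions (that is precisely why lines 9--11 exist), the cascade can escape the subtree of $rICN(l)$ through intersection check nodes inside it, and it can even reach the leftmost stage --- in Fig.~\ref{counter_example} the attempt on $x_3$ cascades all the way to $v(0,0)$ and is rejected. Proving that the schedule admits only ``safe'' excisions is therefore the wrong target: it amounts to a structural characterization of when deletions succeed, which is far harder than the theorem requires (tightness properties of this very schedule are left open in the paper as Conjecture~\ref{conjecture}). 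The fix is simple: replace your schedule-based inductive step with the observation that the test-and-rollback mechanism preserves the invariant by construction, which is exactly the content of the paper's short proof.
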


\begin{proof}
From Proposition~\ref{nOLL}, we know that $MVSS(\mathcal{J})$ must include $nOLL(\mathcal{J})$. The algorithm attempts to delete some nodes from the set $OLL(\mathcal{J})$ by checking whether the punctured $UT(\mathcal{J})$ can still form a stopping set, i.e., by verifying if any degree-1 check nodes remain (see lines 9-11 in the algorithm). If $S$ is returned by the algorithm, then there exists a punctured subgraph containing $\mathcal{J}$ on the left and $S$ on the right, with some nodes in the middle, which forms a stopping set. 
\end{proof}

\begin{algorithm}
\caption{Find small VSS (Deletion Bound I)}\label{alg:cap}
\begin{algorithmic}[1]

\Statex \textbf{Input:} $\mathcal{J}$, $N$ ($N$ is used to initialize the factor graph)
\Statex \textbf{Output:} $VSS(\mathcal{J})$
\State find $OLL(\mathcal{J})$ and $nOLL(\mathcal{J})$
\State UT $\gets UT(\mathcal{J})$ 
\State $VSS(\mathcal{J}) \gets nOLL(\mathcal{J})$
\State OLL\_temp $\gets OLL(\mathcal{J})$
\While{OLL\_temp is not empty}
    \State UT\_punc = UT
    \State pick $l$ with the largest index from OLL\_temp
    \State delete children leaves of $rICN(l)$ from UT\_punc
    \While{Exist a degree-1 CN in UT\_punc}
        \State delete this degree-1 CN and its neighbor VN from UT\_punc
    \EndWhile
    \If{any VN on the leftmost stage is deleted}
        \State $VSS(\mathcal{J}) = VSS(\mathcal{J}) \cup \{l\}$
        \State remove $l$ from OLL\_temp
    \Else
        \State remove all the leaves deleted in this iteration from OLL\_temp
        \State UT = UT\_punc
    \EndIf
\EndWhile

\end{algorithmic}
\label{algo_version1}
\end{algorithm}

An example of Algorithm~\ref{algo_version1} is shown in Fig.~\ref{counter_example} with $\mathcal{J}=\{0,3,7\}$ and $N=8$. $UT(\mathcal{J})$ is shown by the colored nodes (black, green and orange). $nOLL(\mathcal{J}) = \{x_4,x_5,x_6,x_7\}$ and $OLL(\mathcal{J}) = \{x_0,x_1,x_2,x_3\}$. Algorithm~\ref{algo_version1} first tries to delete $x_3$ from $OLL(\mathcal{J})$ by trying to delete all the children leaf nodes of $c(3,0)$, which is $rICN(3)$, the rICN of the selected leaf $x_3$. However, deleting $\{x_0,x_1,x_2,x_3\}$ would result in deletion of $v(0,0)$, so the algorithm labels $x_3$ as undeletable. The same process applies to $x_2$ and $x_1$. 
The algorithm then moves to $x_0$. 
Deleting the only child leaf of $rICN(0)$ yields a structure that 
forms a stopping set that contains $\mathcal{J}$.
As a result, $x_0$ will be deleted from UT, and the punctured UT is stored for future iterations until all elements in $OLL(\mathcal{J})$ are tried. In this example, since $x_0$ is the last element, Algorithm~\ref{algo_version1} will terminate and return the set $\{x_1,x_2,x_3,x_4,x_5,x_6,x_7\}$.

\begin{figure}[htbp]
\centerline{\includegraphics[width=9cm,height=5.5cm]{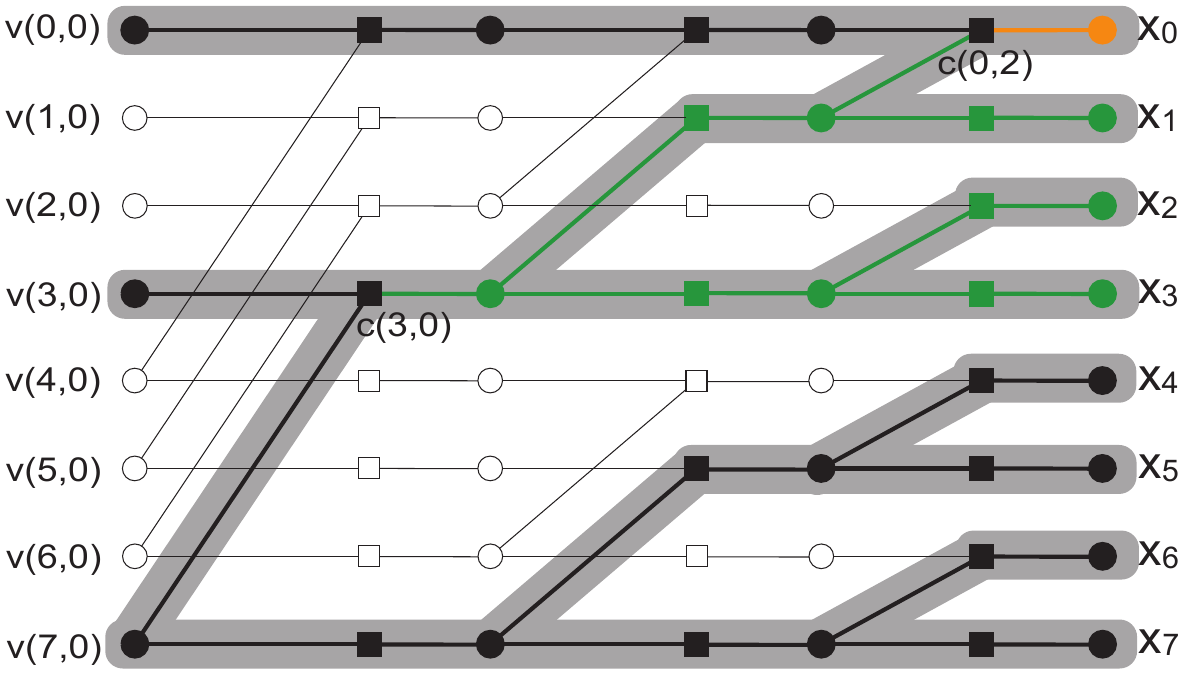}}
\caption{An example of Deletion Bounds I and II, $\mathcal{J}=\{0,3,7\}$.}
\label{counter_example}
\end{figure}

In the same example, the exact value of $|MVSS(\mathcal{J})|$ is 5, which is obtained by  deleting $\{x_1,x_2,x_3\}$. To improve the performance of Deletion Bound I, we modified Algorithm~\ref{algo_version1} as follows: Given a leaf node $l$, instead of identifying $rICN(l)$ and deleting all its children leaves, the new algorithm attempts to delete only the leaf node $l$ itself. Moreover, $l$ is randomly selected instead of being picked according to the largest index. This modified approach is detailed in Algorithm~\ref{algo_version2}. It can be observed that with sufficient attempts using different random deletion seeds, Algorithm~\ref{algo_version2} can find a $MVSS(\mathcal{J})$, as it effectively performs an exhaustive search. While Algorithm~\ref{algo_version2} appears to be more adaptive and precise, we will demonstrate through experimental results in Section V that Algorithm~\ref{algo_version1} performs better for certain choices of $\mathcal{J}$, such as when $\mathcal{J}$ is the information set of a polar codes or a Reed-Muller code.

\begin{thm}
\textbf{\textit{(Deletion Bound II)}} Let $S$ be the set of variable nodes returned by Algorithm~\ref{algo_version2}. Then $S$ forms a $VSS(\mathcal{J})$, and we have:
\begin{equation*}
    |MVSS(\mathcal{J})| \leq |S|
\end{equation*}
\label{Bound3}
\end{thm}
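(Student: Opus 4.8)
The plan is to mirror the argument used for Deletion Bound~I (Theorem~\ref{Bound2}), since Algorithm~\ref{algo_version2} differs from Algorithm~\ref{algo_version1} only in its deletion \emph{schedule}---attempting to delete a single, randomly chosen leaf $l$ rather than all children of $rICN(l)$ selected by largest index---and not in the safety checks that guarantee correctness. There are two claims to establish: that the returned set $S$ is a genuine $VSS(\mathcal{J})$, and that consequently $|MVSS(\mathcal{J})| \le |S|$. The second follows immediately from the first by the definition of $MVSS(\mathcal{J})$ as a minimum-cardinality VSS among all $VSS(\mathcal{J})$, so the real content lies entirely in the first claim.

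First I would set up a loop invariant: at the start of every iteration of the outer \textbf{while} loop, the current working graph is a valid stopping set whose leftmost-stage variable nodes are exactly $\mathcal{J}$. For the base case I would invoke the fact, already established in the setup of Proposition~\ref{subset}, that the initial graph $UT(\mathcal{J}) = \bigcup_{j \in \mathcal{J}} ST(j)$ is a valid $SS(\mathcal{J})$ whose leftmost-stage nodes are precisely $\mathcal{J}$; it is a stopping set because each $ST(j)$ is, and a union of stopping sets is again a stopping set (any check node adjacent to the union is adjacent to some $ST(j)$, hence adjacent to it at least twice).

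Next I would verify that each accepted iteration preserves the invariant. After a tentative deletion of a leaf $l$, the inner \textbf{while} loop repeatedly removes any degree-1 check node together with its unique surviving variable-node neighbor; deleting a variable node can only lower the degree of its neighboring check nodes, so the only way the stopping-set property can be violated mid-cascade is through a freshly created degree-1 check node, which this inner loop is guaranteed to detect and repair. The cascade therefore terminates in a graph in which every remaining check node has surviving degree zero or at least two---exactly the stopping-set condition. The guard that checks whether this cascade reached the leftmost stage then distinguishes two cases: if some information node would be removed, the deletion of $l$ is rejected, $l$ is retained in the VSS, and the working graph is left unchanged, so the invariant is trivially maintained; otherwise the working graph is replaced by the pruned graph, which is a stopping set by the cascade argument and still has leftmost set exactly $\mathcal{J}$ since no leftmost node was removed. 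Upon termination the invariant guarantees that the retained graph is a stopping set with information set exactly $\mathcal{J}$, and the variable nodes it keeps on the rightmost stage---which are precisely the returned set $S$ (consisting of $nOLL(\mathcal{J})$ together with every undeletable overlapped leaf, by Proposition~\ref{nOLL})---form a $VSS(\mathcal{J})$. Hence $|MVSS(\mathcal{J})| \le |S|$.

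I expect the main obstacle to be the careful bookkeeping of the degree-1 cascade and its interaction with the rejection guard: I must argue both that the local repair process always terminates in a legitimate stopping set, and that when the cascade does \emph{not} touch the leftmost stage it never silently strands an information node at degree zero, so that the retained graph's leftmost set remains exactly $\mathcal{J}$ rather than a strict subset. Establishing this---that no accepted deletion can alter $\mathcal{J}$---is the crux; once it is in hand, the bound is immediate. Notably, this reasoning is independent of the order in which leaves are offered, so it applies verbatim to the random schedule of Algorithm~\ref{algo_version2}.
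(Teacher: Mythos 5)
Your proposal is correct and takes essentially the same route as the paper: the paper proves Deletion Bound~II simply by invoking the proof of Deletion Bound~I, which argues exactly as you do that the degree-1 check-node guard ensures any accepted puncturing of $UT(\mathcal{J})$ remains a stopping set with $\mathcal{J}$ on the leftmost stage and the returned leaves $S$ on the rightmost stage, whence $|MVSS(\mathcal{J})|\le|S|$. Your loop-invariant formulation (with the union-of-stopping-trees base case from Proposition~\ref{subset} and the observation via Proposition~\ref{nOLL} that accepted cascades never strand an information node) is just a more explicit bookkeeping of that same argument, valid for any deletion schedule.
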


\begin{proof}
    The same proof as Theorem~\ref{Bound2}. 
\end{proof}

\begin{algorithm}
\caption{Find small VSS (Deletion Bound II)}\label{alg:cap}
\begin{algorithmic}[1]

\Statex \textbf{Input:} $\mathcal{J}$, $N$ ($N$ is used to initialize the factor graph)
\Statex \textbf{Output:} $VSS(\mathcal{J})$
\State find $OLL(\mathcal{J})$ and $nOLL(\mathcal{J})$
\State UT $\gets UT(\mathcal{J})$ 
\State $VSS(\mathcal{J}) \gets nOLL(\mathcal{J})$
\State OLL\_temp $\gets OLL(\mathcal{J})$
\While{OLL\_temp is not empty}
    \State UT\_punc = UT
    \State randomly pick $l$ from OLL\_temp
    \State delete $l$ from UT\_punc
    \While{Exist a degree-1 CN in UT\_punc}
        \State delete this degree-1 CN and its neighbor VN from UT\_punc
    \EndWhile
    \If{any VN on the leftmost stage is deleted}
        \State $VSS(\mathcal{J}) = VSS(\mathcal{J}) \cup \{l\}$
        \State remove $l$ from OLL\_temp
    \Else
        \State remove all the leaves deleted in this iteration from OLL\_temp
        \State UT = UT\_punc
    \EndIf
\EndWhile

\end{algorithmic}
\label{algo_version2}
\end{algorithm}


Consider again the example in Fig.~\ref{counter_example}. The output of Algorithm~\ref{algo_version2} has two possible outcomes. If the algorithm selects $l = x_1$ (or $x_2,x_3$), it will return $\{x_0,x_4,x_5,x_6,x_7\}$ with green nodes deleted. If it selects $l = x_0$, it will return $\{x_1,x_2,x_3,x_4,x_5,x_6,x_7\}$ with the orange node deleted. Thus, with a $75\%$ probability, the algorithm provides a bound of 5, and with a $25\%$ probability, it provides a bound of 7. 

In practice, we can run the algorithm $t$ times with different random seeds. By repeating the algorithm with varied seeds, we can gather a range of potential outcomes and subsequently select the smallest result as the final upper bound.

\subsection{Simulation results}

In this part, we present experiment results that demonstrate the performance of the proposed bounds. The code length for Figs.~\ref{Polar1024}-\ref{Random1024_another} is set to $N=1024$. The parameter $K$ represents the size of the set $\mathcal{J}$, which corresponds to the number of information bits on the leftmost stage of the factor graph. In Fig.~\ref{Polar1024}, $\mathcal{J}$ is selected to form polar codes designed using Bhattacharyya parameters \cite{Arikan2009}. In contrast, for Figs.~\ref{Random1024} and \ref{Random1024_another}, the set $\mathcal{J}$ is chosen randomly.

In Fig.~\ref{Polar1024}, Deletion Bound I completely aligns with Lower Bound I for all values of $K$, indicating that these two bounds can accurately determine $|MVSS(\mathcal{J})|$ when $\mathcal{J}$ is chosen to form a polar code. In the following section, we will prove that Lower Bound I can precisely determine the value when $\mathcal{J}$ satisfies certain conditions. Additionally, Fig.~\ref{Polar1024} shows that Encoding Bound is relatively loose, and while Deletion Bound~II does not always yield the exact value of $|MVSS(\mathcal{J})|$, it still provides a fairly good approximation. We ran Algorithm~\ref{algo_version2} once ($t=1$) to generate the curve for Deletion Bound II. Lower Bound II is quite loose, as it frequently yields values close to zero.

Fig.~\ref{Random1024} compares the proposed bounds when $\mathcal{J}$ is randomly selected. It can be observed that while Deletion Bounds I and II may yield different results, their outputs are very close to one another. Encoding Bound, however, is generally looser than both Deletion Bounds I and II. The performance of the proposed lower bounds differs from Fig.~\ref{Polar1024}: In Fig.~\ref{Polar1024}, where $\mathcal{J}$ is selected to form polar codes, Lower Bound II is loose, whereas in Fig.~\ref{Random1024}, with randomly chosen $\mathcal{J}$, it is Lower Bound I that is loose and often yields values close to zero.

Fig.~\ref{Random1024_another} compares the performance of Deletion Bound II with different values of $t$, alongside Deletion Bound I when $\mathcal{J}$ is randomly selected. It can be observed that when the code length $N$ is large, using a small value of $t$ (relative to the size of the search space) has minimal impact on the output of Algorithm~\ref{algo_version2}. Deletion Bounds I and II with a small $t$ produce nearly the same results, suggesting that a one-shot search may suffice for approximating the upper bound of $|MVSS(\mathcal{J})|$.

In Fig.~\ref{Exhaust32}, with $N=32$ and a randomly chosen $\mathcal{J}$, the green dashed line (which coincides with the red line) labeled ``$|MVSS|$'' represents the exact value of $|MVSS(\mathcal{J})|$, obtained through exhaustive search. It can be observed that Deletion Bound~II with $t=10$ accurately identifies the exact value of $|MVSS(\mathcal{J})|$, while Deletion Bound~I is occasionally less tight.  

Although the results may differ when Algorithm~\ref{algo_version2} is run with different random seeds, we emphasize that the outcomes are `robust' in the sense that they exhibit minimal variation across different random deletion seeds. 
This is confirmed in Fig.~\ref{Box}, which shows results for  $N=1024$, $K=256,512,768$,  with $\mathcal{J}$ selected to form a polar code.
In these experiments, Algorithm~\ref{algo_version2} was run 100 times with different random seeds for each value of $K$. Instead of picking only the smallest outcome from the 100 trials, we plot all of them in the form of a box plot, with middle half of the results falling within the colored box. 
The results show little fluctuation. 
\vfill

\begin{figure}[htbp]
\centerline{\includegraphics[width=8cm,height=6cm]{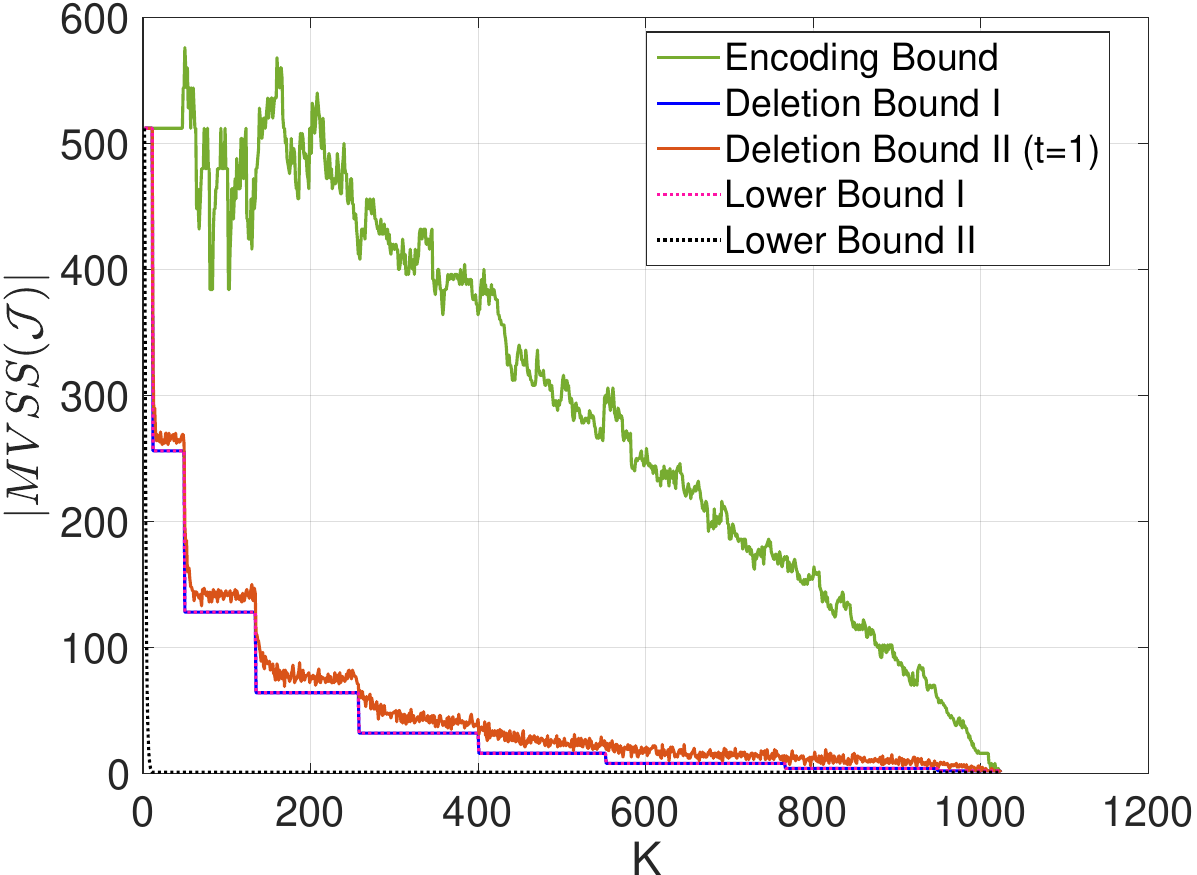}}
\caption{A comparison of different bounds with $N=1024$ and $\mathcal{J}$ selected to form polar codes.}
\label{Polar1024}
\end{figure}

\begin{figure}[htbp]
\centerline{\includegraphics[width=8cm,height=6cm]{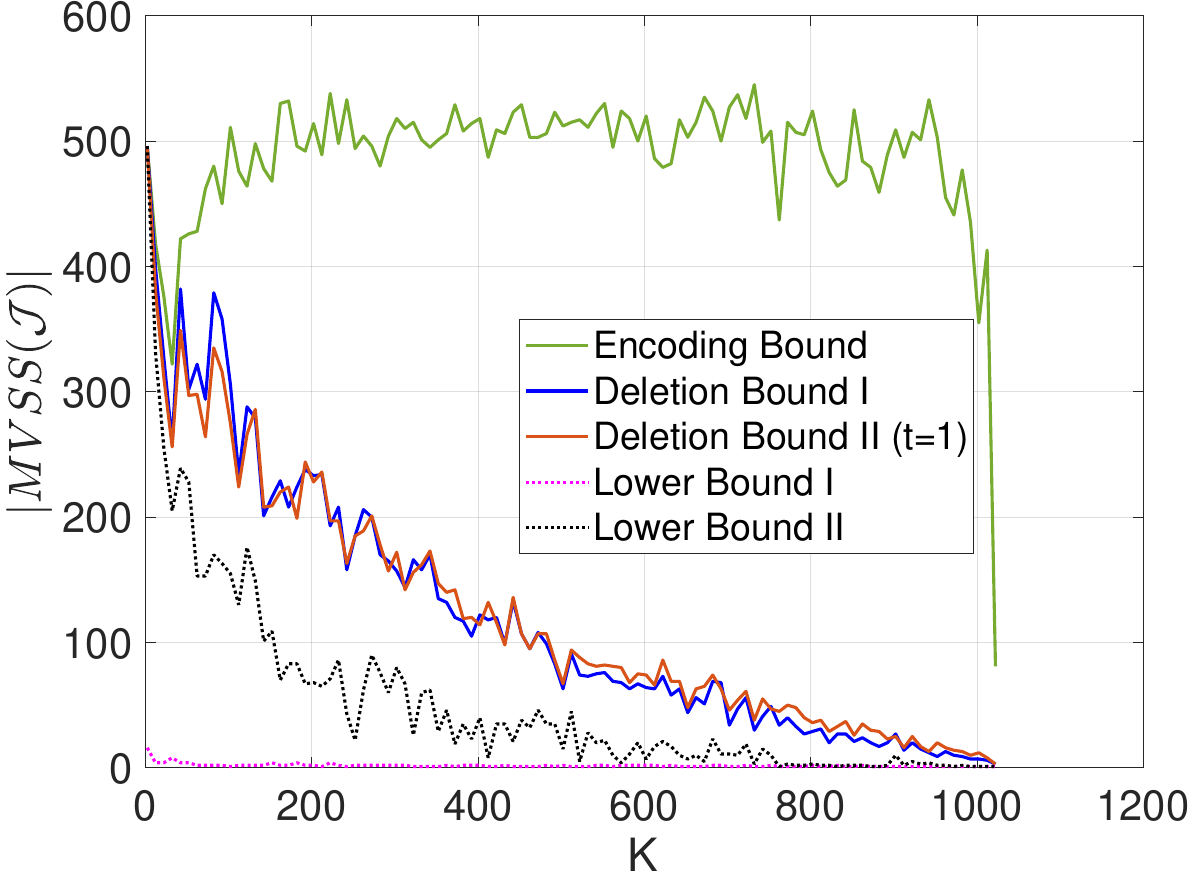}}
\caption{A comparison of different bounds with $N=1024$ and randomly selected $\mathcal{J}$.}
\label{Random1024}
\end{figure}

\begin{figure}[htbp]
\centerline{\includegraphics[width=8cm,height=6cm]{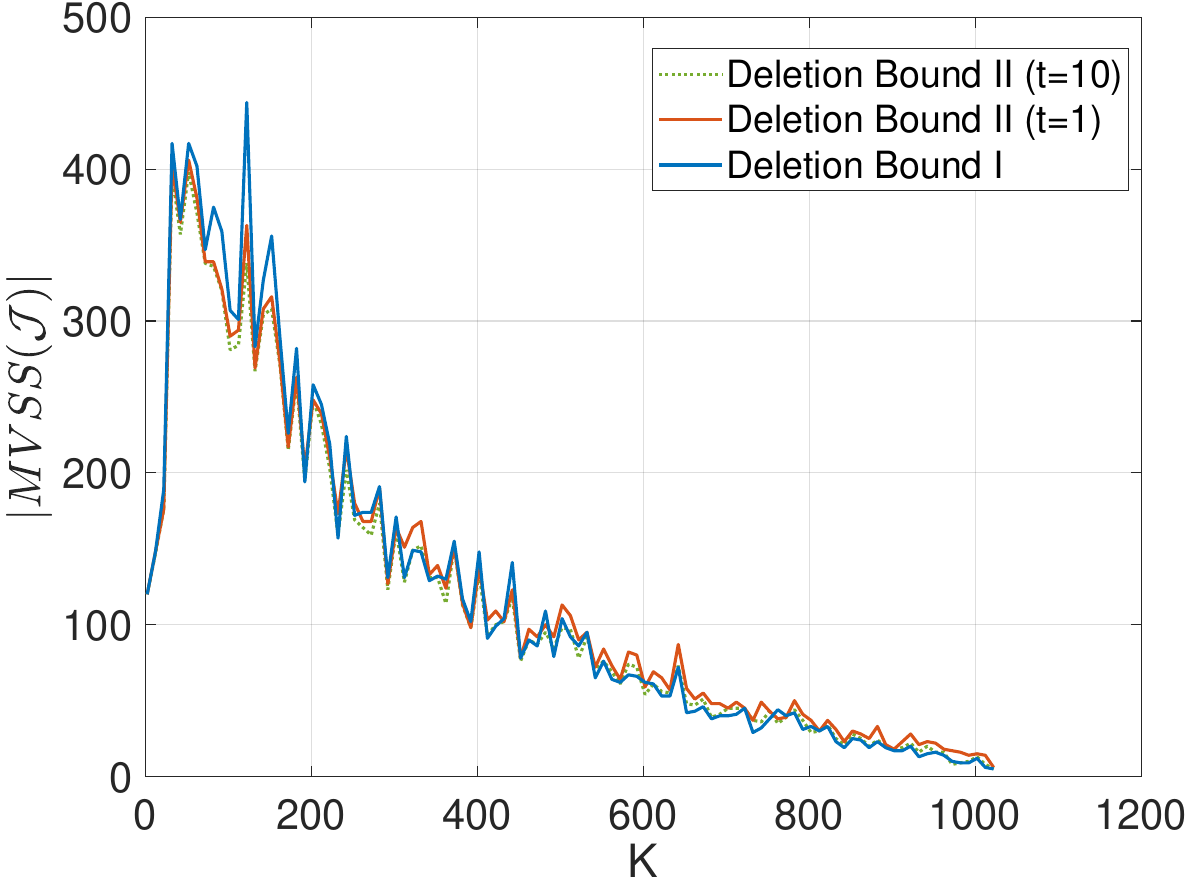}}
\caption{A comparison of Deletion Bounds I and II with $N=1024$ and randomly selected $\mathcal{J}$.}
\label{Random1024_another}
\end{figure}

\begin{figure}[htbp]
\centerline{\includegraphics[width=8cm,height=6cm]{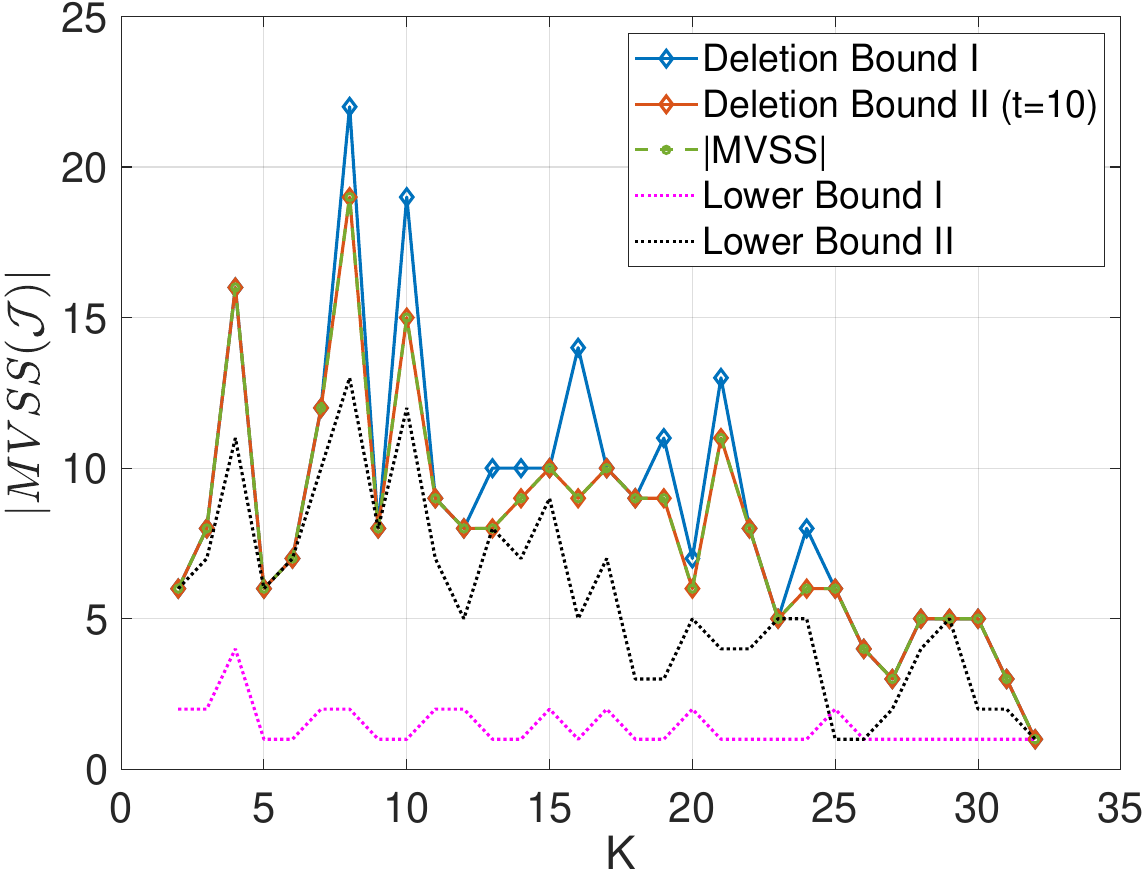}}
\caption{A comparison of different bounds with $N=32$ and randomly selected $\mathcal{J}$, alongside the exact value of $|MVSS(\mathcal{J})|$.}
\label{Exhaust32}
\end{figure}

\begin{figure}[h]
\centerline{\includegraphics[width=8cm,height=6cm]{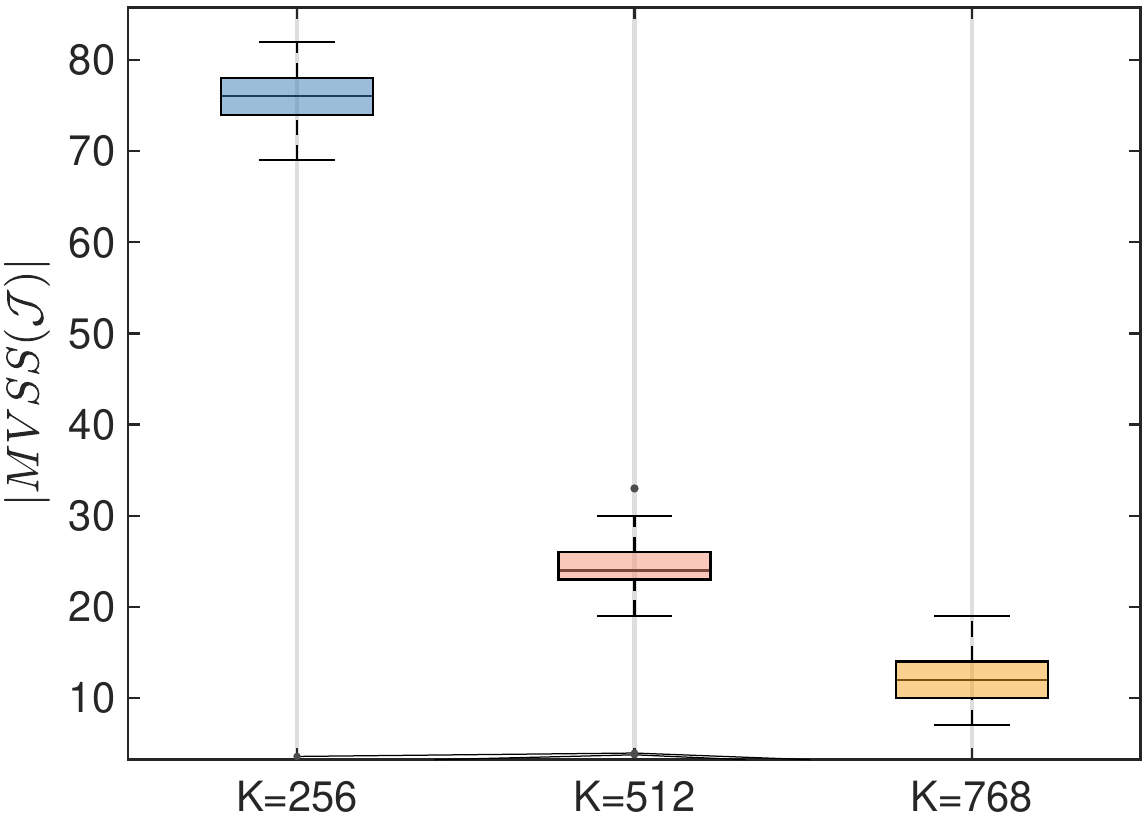}}
\caption{Box plots of results from Algorithm~\ref{algo_version2} with different random seeds for $N=1024$, $K=256,512,768$, and $\mathcal{J}$   selected to form a polar code.}
\label{Box}
\end{figure}

\vfill
\section{$|MVSS(\mathcal{J})|$ for specific choices of $\mathcal{J}$}

\subsection{Case 1}

The following result shows that the bound described in Theorem~\ref{th1} is tight when $|\mathcal{J}|=2$.

\begin{thm}
When $|\mathcal{J}|=2$, $|MVSS(\mathcal J)| = g(G_{\mathcal J})$.
\label{a1}
\end{thm}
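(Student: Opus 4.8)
The plan is to sandwich $|MVSS(\mathcal{J})|$ between two bounds that are already available in the excerpt and to show that these two bounds coincide precisely when $|\mathcal{J}|=2$. Lower Bound II (Theorem~\ref{th1}) already supplies $|MVSS(\mathcal{J})| \geq g(G_{\mathcal{J}})$, so the only thing left to establish is a matching upper bound $|MVSS(\mathcal{J})| \leq g(G_{\mathcal{J}})$.

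For the upper bound I would invoke the Encoding Bound (Theorem~\ref{Bound1}). Write $\mathcal{J}=\{i,j\}$ and let $u$ be the length-$N$ binary vector whose support is $\mathcal{J}$, so that $x = uG = r_i + r_j$, where $r_i,r_j$ are the two rows of $G$ indexed by $\mathcal{J}$. Since $u$ has support exactly $\mathcal{J}$, the all-value-$1$ stopping set produced by the encoding process has its leftmost value-$1$ nodes equal to $\mathcal{J}$, so by Theorem~\ref{Bound1} the support of $x$ indexes a genuine $VSS(\mathcal{J})$ and hence $|MVSS(\mathcal{J})| \leq wt(x)$.

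The crux of the argument is the identity $wt(x) = g(G_{\mathcal{J}})$ valid for $|\mathcal{J}|=2$. For each column index $m$, the entry $x_m = (r_i)_m + (r_j)_m \bmod 2$ is exactly the parity of the $m$-th column of $G_{\mathcal{J}}$, so $wt(x)$ counts the odd-weight columns of $G_{\mathcal{J}}$. Because $G_{\mathcal{J}}$ has only two rows, every column has weight in $\{0,1,2\}$, and a column has odd weight if and only if its weight is exactly $1$; thus the number of odd-weight columns equals the number of weight-one columns, i.e. $wt(x) = g(G_{\mathcal{J}})$. Combining this with Lower Bound II yields $g(G_{\mathcal{J}}) \leq |MVSS(\mathcal{J})| \leq wt(x) = g(G_{\mathcal{J}})$, which forces equality and completes the proof. (As a sanity check, this matches the running example $\mathcal{J}=\{1,5\}$ of Fig.~\ref{augmented_factor}, where $r_1+r_5$ has support $\{x_4,x_5\}$ and $g(G_{\mathcal{J}})=2$.)

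I do not expect a serious obstacle here, since all the heavy machinery is quoted from earlier results; the only genuine content is the parity/column-weight identity of the third paragraph. What this identity also makes transparent is \emph{why} the restriction $|\mathcal{J}|=2$ is essential: for $|\mathcal{J}|\geq 3$ a column of $G_{\mathcal{J}}$ can have odd weight $3$ or more, which is counted by $wt(x)$ but not by $g(G_{\mathcal{J}})$, so the Encoding Bound and Lower Bound II generally separate and the sandwich no longer closes.
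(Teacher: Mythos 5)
Your proof is correct, but it reaches the upper bound by a genuinely different route than the paper. Both arguments take the lower bound $|MVSS(\mathcal{J})| \geq g(G_{\mathcal{J}})$ from Theorem~\ref{th1}; the difference lies entirely in how the matching upper bound is produced. The paper works directly on the graph: it shows that every overlapped leaf of $UT(\mathcal{J}) = ST(i) \cup ST(j)$ descends from a degree-3 check node in the intersection of the two trees, that the whole subtree to the right of such a node is shared, and that deleting these shared subtrees leaves a stopping set containing $\{i,j\}$ whose leaves are exactly the non-overlapped leaves; identifying the leaves of $ST(i)$ with the support of row $r_i$ then gives $|nOLL(\mathcal{J})| = g(G_{\mathcal{J}})$. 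You replace this graph surgery with the Encoding Bound (Theorem~\ref{Bound1}) applied to the vector $u$ supported on $\{i,j\}$, plus the elementary parity observation that a two-row binary matrix has odd-weight columns exactly where it has weight-one columns, so $wt(r_i \oplus r_j) = g(G_{\mathcal{J}})$; your sanity check on $\mathcal{J}=\{1,5\}$ (i.e., $\mathcal{H}_2$ in Fig.~\ref{augmented_factor}) is accurate. What the paper's version buys is an explicit structural description of a minimum VSS --- the $nOLL$ set, obtained by pruning shared subtrees rooted at ICNs --- which is the picture behind Fig.~\ref{MVSS} and behind the counterexample of Fig.~\ref{MVSS_a} showing the statement fails for $|\mathcal{J}|=3$. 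What your version buys is brevity and modularity: everything is quoted machinery except one line of mod-2 arithmetic, and it makes transparent exactly where the argument breaks for $|\mathcal{J}|\geq 3$, namely that columns of odd weight at least $3$ are counted by $wt(x)$ but not by $g(G_{\mathcal{J}})$, so the sandwich no longer closes. Both proofs are sound.
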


\textbf{Proof:} Let $\mathcal{J} = \{i,j\}$. Then $UT(\mathcal{J}) = ST(i) \cup ST(j)$. Assume $x_k \in UT(\mathcal{J})$ is an overlapped leaf (OLL) belonging to both $ST(i)$ and $ST(j)$.  Then there must exist a degree-3 check node in the graph of $UT(\mathcal{J})$, with all three neighboring variable nodes in $UT(\mathcal{J})$, that lies in the intersection of the graphs of  $ST(i)$ and $ST(j)$ and has $x_k$ as a child node. Otherwise, one could trace back from $x_k$ to a single root node, contradicting the fact that it is an OLL.  All of the children nodes to the right of that degree-3 check node must be shared by the two trees. This implies that we can delete those children nodes, which include the shared node $x_k$, and the remaining structure will still be a stopping set in $UT(\mathcal{J})$. This procedure can be repeated for any remaining OLLs, until all of the original OLLs in $UT(\mathcal{J})$ have been deleted. The only remaining leaf nodes are the original nOLLs.

Recall from the proof of Theorem~\ref{th1} that the indices of the leaf nodes in the stopping tree $ST(i)$ are given by the positions of the ones in $r_i^n$, where $r_i^n$ is the $(i+1)$-th row of $G^n = F^{\bigotimes n}$, the encoding matrix for polar codes of length $2^n$. Consequently, the columns of $G_{\mathcal J}$ with weight one correspond to the nOLLs, and $g(G_{\mathcal J})$ returns the number of such columns, thereby completing the proof.
\qed

Fig.~\ref{MVSS} illustrates the proof procedure, where $\mathcal{J} = \{2,6\}$, and $UT(\mathcal{J}) = ST(2) \cup ST(6)$. The OLLs are $x_0$ and $x_2$, while the nOLLs are $x_4$ and $x_6$. The green nodes in $UT(\mathcal{J})$ are the children of the orange degree-3 check node that lies in the intersection of the graphs of $ST(2)$ and $ST(6)$. Both $x_0$ and $x_2$ are children variable nodes of this check node. The orange nodes represents the stopping set corresponding to $MVSS(\mathcal{J})$ that remains after deleting the green nodes. 

\begin{figure}[htbp]
\centerline{\includegraphics[width=9cm,height=5.3cm]{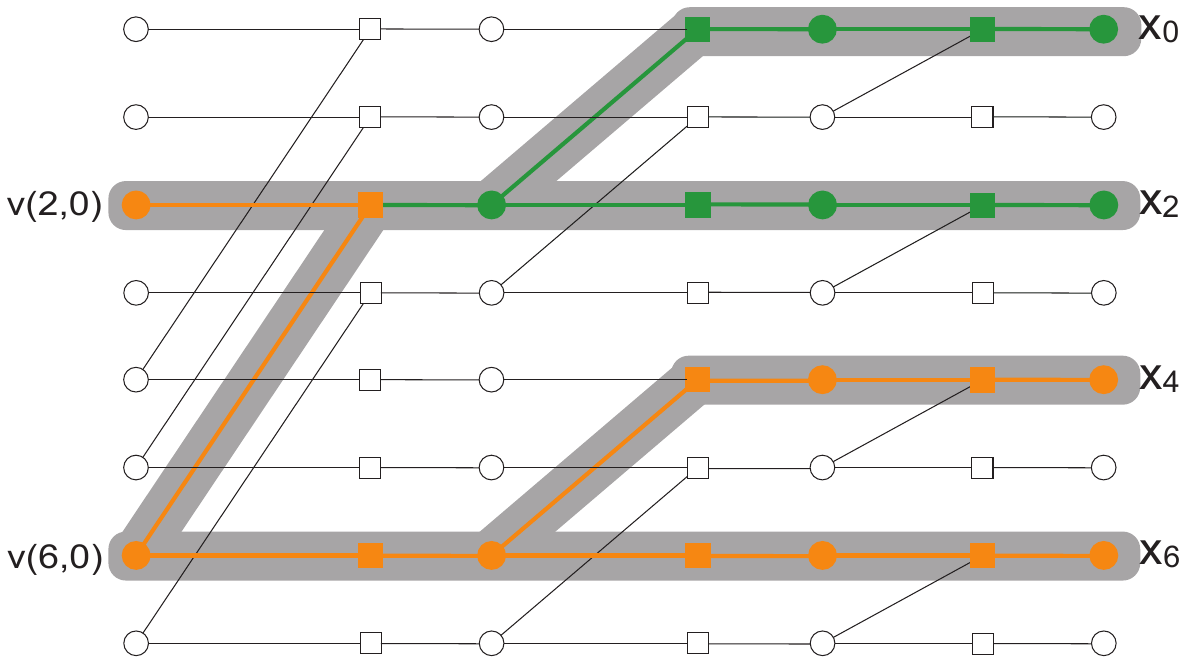}}
\caption{An illustration of Theorem~\ref{a1}.}
\label{MVSS}
\end{figure}

We note that Theorem~\ref{a1} cannot be extended to the case where $|\mathcal{J}| > 2$. For example in Fig.~\ref{MVSS_a}, if $\mathcal{J} = \{1,6,7\}$, then the nodes $\{x_0,x_1,x_2,x_4,x_6\}$ are shared leaves. There are two MVSSs for $\mathcal{J}$: $\{x_0, x_3, x_5, x_7\}$ and $\{x_1, x_3, x_5, x_7\}$. Removing the OLLs for $\mathcal{J}$ does not yield an $MVSS(\mathcal{J})$; rather, the size of the resulting VSS provides a lower bound for $|MVSS(\mathcal{J})|$.

\begin{figure}[htbp] 
	\centering  
	\vspace{-0.35cm} 
	\subfigtopskip=5pt 
	\subfigbottomskip=5pt 
	\subfigcapskip=0pt 
	\subfigure[The SS corresponding to $MVSS(\mathcal{J}) = \{x_0, x_3, x_5, x_7\}$.]{
		\includegraphics[width=7.5cm,height=4.4cm]{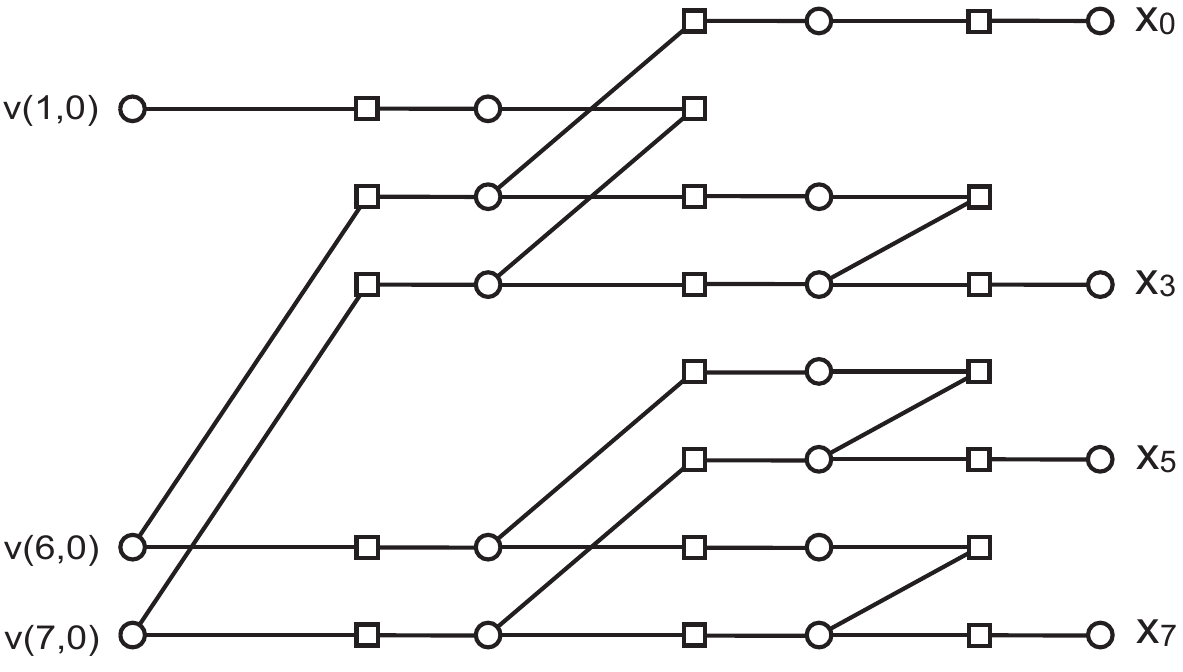}}
	\quad 
	\subfigure[The SS corresponding to $MVSS(\mathcal{J}) = \{x_1, x_3, x_5, x_7\}$.]{
		\includegraphics[width=7.5cm,height=4.4cm]{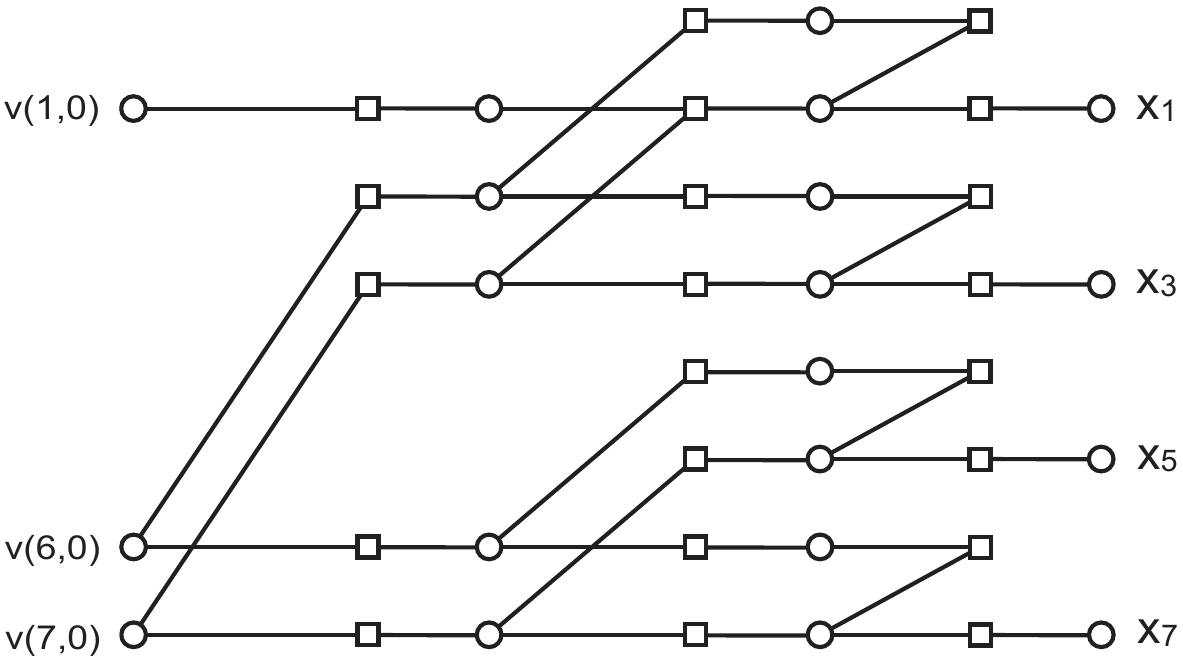}}
	  \\
        \caption{Example showing that Theorem~\ref{a1} does not generalize to $|\mathcal{J}|>2$.}
\label{MVSS_a}        
\end{figure}

\subsection{Case 2}

Here we give another condition on $\mathcal{J}$, under which the value of $|MVSS(\mathcal{J})|$ can be determined in $O(N)$, where $N$ is the code length. In words, given information set $\mathcal{J}$ and $f(i)$ for each $i \in \mathcal{J}$, $|MVSS(\mathcal{J})|$ can be calculated by selecting the smallest $f$ value. We start with some definitions that are used in the rest of this subsection.

For any given information set $\mathcal{J}$, there always exists an information bit $j\in\mathcal{J}$ whose corresponding stopping tree has the smallest leaf set among all the elements in $\mathcal{J}$. We call such an information bit a \textit{minimum information bit} for $\mathcal{J}$, denoted by $MIB(\mathcal{J})$. Note that there may exist more than one MIB in $\mathcal{J}$. In that case, we pick the one with the largest index. We denote the selected MIB as $MIB^*(\mathcal{J})$. That is, we pick the one which occupies the lowest place in the graph among the MIBs of $\mathcal{J}$.

Let $\mathcal{J}_n$ denote the set of indices on the leftmost stage of the factor graph $T_n$, and let $\mathcal{J}_{n}^L$ denote the set of indices on the leftmost stage of $T^L_{n}$, such that the variable nodes indexed by $\mathcal{J}_{n}^L$ on the leftmost stage of $T^L_{n}$ are in $UT(\mathcal{J}_n)$. Similarly, $\mathcal{J}_{n}^U$ denotes the set of indices whose corresponding nodes on the leftmost stage of $T^U_{n}$ are in $UT(\mathcal{J}_n)$. For example in Fig.~\ref{UT}, $\mathcal{J}_n = \{ 3,5\}$ (with corresponding nodes $v(3,0),v(5,0)$), $\mathcal{J}_n^L = \{1\}$ (with corresponding node $v(5,1)$) and $\mathcal{J}_n^U = \{1,3\}$ (with corresponding nodes $v(1,1),v(3,1)$).

Recall that $i,j \in \mathbb{Z}_N$ have binary representations $i_b = i_0,i_1,...,i_{n-1}$ and $j_b = j_0,j_1,...,j_{n-1}$, i.e., $i = \sum _{k=0}^{k=n-1} i_{k} \times 2^{k}$ and $j = \sum _{k=0}^{k=n-1} j_{k} \times 2^{k}$, respectively. 

\begin{defn}
    We write $j \nearrow i$ if there exists $k,k' \in \mathbb{Z}_n$ with $k<k'$ such that

    1) $j_k=1$ and $j_{k'}=0$
    
    2) $i_k=0$ and $i_{k'}=1$
    
    3) For all $l \in \mathbb{Z}_n \setminus \{k,k'\}: j_l=i_l$
\end{defn}

Clearly, if $j \nearrow i$ then $j<i$ and $wt(j_b) = wt(i_b)$. 

\begin{defn}
    Define the following conditions as:

    \textbf{(cover condition)} (Definition 7 from \cite{PO2016}.) If $j \in \mathcal{J}$ and for all $k \in \mathbb{Z}_n$, we have $j_k=1 \Rightarrow i_k=1$ (meaning that $i$ covers $j$), then $i \in \mathcal{J}$. 

    \textbf{(swap condition)} (Definition 4 from \cite{PO2016}.) If $j \in \mathcal{J}$ and $j \nearrow i$, then $i \in \mathcal{J}$. 
\end{defn}
\label{defn_conditions}

In~\cite{PO2016}, it is also stated that if $i$ covers $j$, or $j \nearrow i$, then $W_N^{(j)}$ is stochastically degraded with respect to $W_N^{(i)}$.

Given a set of indices $\mathcal{J}_n$, define $\overline{\mathcal{J}_{n}} \triangleq \{ j\in \mathcal{J}_n | j < 2^{n-1}\}$ and define $\underline{\mathcal{J}_{n}} \triangleq \{ j - 2^{n-1} | j\in \mathcal{J}_n , j \geq 2^{n-1}\}$. In words, $\overline{\mathcal{J}_n}$ is the subset of $\mathcal{J}_n$ that is on the upper-half of the leftmost stage within the factor graph, and $\underline{\mathcal{J}_n}$ is on the lower-half. Similarly, for each $j \in \mathcal{J}_{n}$ such that $j \geq 2^{n-1}$, we define $\underline{j} = j - 2^{n-1}$. Clearly, we have $f(\underline j) = \frac{1}{2} f(j)$ (For detailed proof, see Fact 6 in \cite{Eslami2013}).

\begin{proposition}
    If $\mathcal{J}_{n+1}$ satisfies both conditions, then $\underline{\mathcal{J}_{n+1}}$ would also satisfy both conditions. 
\label{underset}
\end{proposition}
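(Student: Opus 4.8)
The plan is to exploit the simple bit-level description of $\underline{\mathcal{J}_{n+1}}$. Indices in $\mathbb{Z}_{2^{n+1}}$ carry binary expansions $j_0,\dots,j_n$ of length $n+1$, and by definition $\underline{\mathcal{J}_{n+1}} = \{\,j-2^n \mid j\in\mathcal{J}_{n+1},\ j\geq 2^n\,\}$. The condition $j\geq 2^n$ is exactly $j_n=1$, and subtracting $2^n$ simply deletes the most-significant bit, leaving the length-$n$ expansion $j_0,\dots,j_{n-1}$. Thus stripping the top bit is a bijection from the lower-half members $\{\,j\in\mathcal{J}_{n+1}:j\geq 2^n\,\}$ onto $\underline{\mathcal{J}_{n+1}}$, with inverse $\underline{i}\mapsto \underline{i}+2^n$ (appending a $1$ in position $n$). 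I would verify the two conditions separately, in each case lifting a relation on $\mathbb{Z}_{2^n}$ up to $\mathbb{Z}_{2^{n+1}}$ through this inverse map, invoking the hypothesis on $\mathcal{J}_{n+1}$, and pushing the conclusion back down.

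For the cover condition, I would start from $\underline{j}\in\underline{\mathcal{J}_{n+1}}$ together with some $\underline{i}$ that covers $\underline{j}$, and pick the witness $j=\underline{j}+2^n\in\mathcal{J}_{n+1}$. Setting $i=\underline{i}+2^n$, the covering implications $\underline{j}_k=1\Rightarrow\underline{i}_k=1$ for $k<n$ carry over verbatim to positions $0,\dots,n-1$, while at position $n$ we have $j_n=i_n=1$, so $i$ covers $j$ in $\mathbb{Z}_{2^{n+1}}$. The cover condition for $\mathcal{J}_{n+1}$ then gives $i\in\mathcal{J}_{n+1}$, and since $i\geq 2^n$ we conclude $\underline{i}=i-2^n\in\underline{\mathcal{J}_{n+1}}$.

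For the swap condition the argument is identical in spirit: if $\underline{j}\in\underline{\mathcal{J}_{n+1}}$ and $\underline{j}\nearrow\underline{i}$ via positions $k<k'\leq n-1$, then appending a common top bit keeps those same positions as the swap positions, leaves every other position $l<n$ equal, and matches the new bit $j_n=i_n=1$ at position $n$, so $j\nearrow i$ with $j=\underline{j}+2^n$ and $i=\underline{i}+2^n$. The swap condition on $\mathcal{J}_{n+1}$ yields $i\in\mathcal{J}_{n+1}$, hence $\underline{i}\in\underline{\mathcal{J}_{n+1}}$.

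There is no deep obstacle here; the content is purely combinatorial bookkeeping on bit positions. The one point that needs care, and which I would state explicitly, is that the lifted index $i=\underline{i}+2^n$ always has its top bit equal to $1$, so that it lies in the lower half and therefore does descend into $\underline{\mathcal{J}_{n+1}}$ after stripping; this is what makes the ``push back down'' step legitimate rather than merely placing $i$ somewhere in $\mathcal{J}_{n+1}$. Verifying that the swap positions $k,k'$ remain valid after the lift, which they do since they are strictly below $n$, is the only other place I would double-check.
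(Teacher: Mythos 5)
Your proof is correct and follows essentially the same route as the paper's: lift $\underline{j}\in\underline{\mathcal{J}_{n+1}}$ back to $j=\underline{j}+2^n\in\mathcal{J}_{n+1}$, observe that the cover and swap relations are preserved by adding $2^n$ to both indices, apply the hypothesis on $\mathcal{J}_{n+1}$, and strip the top bit to descend. The only difference is that you spell out the bit-level verification that the relations lift (and that the lifted $i$ indeed lands in the lower half), which the paper asserts without detail; this makes your write-up slightly more complete but not a different argument.
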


\begin{proof}
Suppose $j^* \in \underline{\mathcal{J}_{n+1}}$,
   $i^*$ covers $j^*$, and $j^* \nearrow i^*$. According to the definition of $\underline{\mathcal{J}_{n+1}}$, we know that $j^* + 2^n \in \mathcal{J}_{n+1}$. 

   Cover condition: Since $i^* + 2^n$ covers $j^* + 2^n$, and $j^* + 2^n \in \mathcal{J}_{n+1}$, we know that $i^* + 2^n \in \mathcal{J}_{n+1}$. Thus $i^* \in \underline{\mathcal{J}_{n+1}}$.

    Swap condition: Since $j^* + 2^n \nearrow i^* + 2^n$, and $j^* + 2^n \in \mathcal{J}_{n+1}$, we know that $i^* + 2^n \in \mathcal{J}_{n+1}$. Thus $i^* \in \underline{\mathcal{J}_{n+1}}$.


    
\end{proof}

\begin{thm}
$|MVSS(\mathcal{J})| = \min\limits_{i \in \mathcal{J}} f(i)$ if set $\mathcal{J}$ satisfies the cover condition and the swap condition.
\label{th2}
\end{thm}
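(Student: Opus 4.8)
The plan is to prove the two inequalities separately. The lower bound $|MVSS(\mathcal{J})| \ge \min_{i\in\mathcal J} f(i)$ is immediate from Theorem~\ref{Bound Eslami}, so the whole task is the matching upper bound, which I would obtain by induction on $n$ (where $N=2^n$) via the recursive decomposition $T_n = T_n^U \cup T_n^L$. Throughout write $w = \min_{i\in\mathcal J} wt(i_b)$, so that $\min_{i\in\mathcal J}f(i)=2^{w}$ by Proposition~\ref{weight_leaf}, and split $\mathcal J$ into $\overline{\mathcal J}$ and $\underline{\mathcal J}$ with $\overline{w}=\min_{i\in\overline{\mathcal J}}wt(i_b)$ and $\underline{w}=\min_{j\in\underline{\mathcal J}}wt(j_b)$; one checks $w=\min(\overline{w},\,1+\underline{w})$.

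First I would analyze how a stopping set with leftmost information nodes exactly $\mathcal J$ meets the stage-$0$ check nodes. The degree-$2$ checks $c(i,0)$, $i\ge 2^{n-1}$, force the lower-half input nodes lying in the set to be exactly $\underline{\mathcal J}$. The degree-$3$ checks $c(r,0)$, $r<2^{n-1}$, are the crucial point: since the stopping-set condition only asks that each active check have \emph{at least} two neighbors in the set (not an even number), the upper-half input $v(r,1)$ is forced \emph{in} when exactly one of $v(r,0),v(r+2^{n-1},0)$ is present, forced \emph{out} when neither is, but is \emph{free} when both are present. Hence the upper-half input set $\mathcal K_U$ may be any set with $\overline{\mathcal J}\,\triangle\,\underline{\mathcal J}\subseteq \mathcal K_U\subseteq \overline{\mathcal J}\cup\underline{\mathcal J}$, and because the two halves share no observed (rightmost) nodes this gives the exact decomposition
\[
|MVSS(\mathcal J)| \;=\; \min_{\mathcal K_U}\,|MVSS(\mathcal K_U)| \;+\; |MVSS(\underline{\mathcal J})|,
\]
where $\mathcal K_U$ ranges over the admissible sets and both terms are computed inside $T_{n-1}$.

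Next I would nail down the two parts. The cover condition gives $\overline{\mathcal J}\subseteq\underline{\mathcal J}$: if $i\in\overline{\mathcal J}$ then $i+2^{n-1}$ covers $i$, so $i+2^{n-1}\in\mathcal J$ and $i\in\underline{\mathcal J}$. Thus the admissible range is $[\underline{\mathcal J}\setminus\overline{\mathcal J},\,\underline{\mathcal J}]$ and $\overline{\mathcal J}\cup\underline{\mathcal J}=\underline{\mathcal J}$. Since $\underline{\mathcal J}$ is again cover/swap closed by Proposition~\ref{underset}, the induction hypothesis yields $|MVSS(\underline{\mathcal J})|=2^{\underline{w}}$. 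The key move is to evaluate the minimum over $\mathcal K_U$ not at the extreme (non-closed) set $\underline{\mathcal J}\setminus\overline{\mathcal J}$ but at the \emph{closed} choice $\mathcal K_U=\underline{\mathcal J}$: the induction hypothesis then gives $|MVSS(\underline{\mathcal J})|=2^{\underline{w}}$, while Theorem~\ref{Bound Eslami} together with $\mathcal K_U\subseteq\underline{\mathcal J}$ shows $|MVSS(\mathcal K_U)|\ge 2^{\underline{w}}$ for every nonempty admissible $\mathcal K_U$. Hence the minimum equals $2^{\underline{w}}$, except when $\overline{\mathcal J}=\underline{\mathcal J}$, where $\mathcal K_U=\emptyset$ is admissible and the upper term vanishes.

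Finally I would close the induction with a short weight bookkeeping, which is where the swap condition enters. If $\overline{\mathcal J}=\underline{\mathcal J}$ then $\overline{w}=\underline{w}$, the upper term is $0$, and $w=\underline{w}$, so the total is $2^{\underline{w}}=2^{w}$. If $\overline{\mathcal J}\subsetneq\underline{\mathcal J}$, I claim $\overline{w}\ge \underline{w}+1$: take $i\in\overline{\mathcal J}$ of minimum weight (necessarily $i\neq 0$, since $0\in\overline{\mathcal J}$ would force $\mathcal J$ to be everything, hence $\overline{\mathcal J}=\underline{\mathcal J}$); swapping a $1$ of $i$ up to coordinate $n-1$ produces $i'\in\mathcal J$ with $\underline{i'}\in\underline{\mathcal J}$ of weight $\overline{w}-1$, so $\underline{w}\le\overline{w}-1$. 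Then $w=\min(\overline{w},1+\underline{w})=1+\underline{w}$, the two terms sum to $2\cdot 2^{\underline{w}}=2^{w}$, and the induction is complete, the base case $n=1$ being immediate by inspection. The main obstacle, to be handled with care, is precisely the degree-$3$ analysis: one must use the correct ``at least twice'' stopping-set condition to see that the free nodes allow enlarging $\mathcal K_U$ all the way to the closed set $\underline{\mathcal J}$; the naive parity-like choice $\underline{\mathcal J}\setminus\overline{\mathcal J}$ gives a strictly larger MVSS and would break the equality.
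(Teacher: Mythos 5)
Your proposal is correct and takes essentially the same route as the paper's own proof: induction on $n$ through the $T_n^U/T_n^L$ decomposition, with the cover condition forcing $\overline{\mathcal{J}}\subseteq\underline{\mathcal{J}}$, the swap condition supplying the weight bookkeeping that places the minimum-weight index in the lower half, Proposition~\ref{underset} enabling the induction hypothesis on $\underline{\mathcal{J}}$, and Theorem~\ref{Bound Eslami} providing both the global lower bound and the bound on the upper-copy contribution. If anything, your write-up is more explicit than the paper's at the stage-$0$ degree-3 checks (the forced-in/forced-out/free trichotomy and the exact additive decomposition over admissible $\mathcal{K}_U$), and it absorbs the paper's separate rate-one case (its Case 2) into the branch $\overline{\mathcal{J}}=\underline{\mathcal{J}}$, which you settle simply by taking $\mathcal{K}_U=\emptyset$.
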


\begin{proof}
    See Appendix.
\end{proof}

In fact, any decreasing monomial set (see Definition 4 in \cite{Bardet2016}) satisfies both the cover and swap conditions, as can be directly inferred from the definition and the relationship between the row indices and their corresponding monomials. The information set $\mathcal{A}$ of a polar code is known to be a decreasing monomial set \cite{Bardet2016}, and thus satisfies both conditions. This is also confirmed in \cite{PO2016}, where it is shown that the polar information set $\mathcal{A}$ meets these conditions, leading to the following corollary:

\begin{corollary}
    For a polar code with information set $\mathcal{A}$, $|MVSS(\mathcal{A})| = \min\limits_{i \in \mathcal{A}} f(i)$.
\label{c1}
\end{corollary}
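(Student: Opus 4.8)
The plan is to obtain the corollary directly from Theorem~\ref{th2}: that theorem already yields $|MVSS(\mathcal{J})| = \min_{i \in \mathcal{J}} f(i)$ for every $\mathcal{J}$ satisfying both the cover condition and the swap condition, so the entire task reduces to checking that the polar information set $\mathcal{A}$ satisfies these two conditions. Once that is verified, substituting $\mathcal{J} = \mathcal{A}$ into Theorem~\ref{th2} gives the stated equality with no further work.

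First I would invoke the structural description of $\mathcal{A}$: the information set of a polar code is a decreasing monomial set in the sense of Definition~4 of~\cite{Bardet2016}. Under the standard correspondence between a row index $i$ and its associated monomial, the two relations that generate the decreasing-monomial partial order are exactly the divisibility (coverage) relation and the variable-shift relation, which match the cover relation and the relation $\nearrow$ used in the definitions. Consequently, closure of $\mathcal{A}$ under the decreasing-monomial order is precisely closure under the cover and swap conditions, so $\mathcal{A}$ satisfies both; this is also asserted directly for the polar information set in~\cite{PO2016}.

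As a sanity check I would confirm the same fact via reliability. Both relations are instances of the channel-independent orders recorded earlier: if $i$ covers $j$, or if $j \nearrow i$, then $W_N^{(j)}$ is stochastically degraded with respect to $W_N^{(i)}$, hence $W_N^{(i)}$ is at least as reliable as $W_N^{(j)}$. Since $\mathcal{A}$ collects the most reliable bit-channels, a selected channel $W_N^{(j)}$ with $j \in \mathcal{A}$ forces any at-least-as-reliable $W_N^{(i)}$ into $\mathcal{A}$ as well, which is again the content of the two conditions. The only point requiring care --- and the one I regard as the main obstacle --- is the orientation of the two orders together with the handling of boundary ties in the reliability ranking; relying on the established decreasing-monomial characterization of $\mathcal{A}$ is the cleanest route, since it fixes the direction of closure combinatorially and avoids any ambiguity arising from equal Bhattacharyya parameters at the selection boundary. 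With both conditions confirmed, Theorem~\ref{th2} applied to $\mathcal{A}$ completes the proof.
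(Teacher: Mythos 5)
Your proposal is correct and takes essentially the same route as the paper: the paper also establishes the corollary by noting that the polar information set $\mathcal{A}$ is a decreasing monomial set in the sense of \cite{Bardet2016} (equivalently, satisfies the cover and swap conditions, as confirmed in \cite{PO2016}) and then applying Theorem~\ref{th2}. Your additional reliability-based sanity check, including the caution about ties at the selection boundary, is consistent with the paper's reliance on the combinatorial (decreasing-monomial) characterization rather than the reliability ranking itself.
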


Consider a binary erasure channel (BEC), where the values of the variable nodes can be 0, 1, or erasure, and a belief propagation (BP) decoder. We examine the erasures on the information nodes after sufficient iterations of BP decoding. Corollary~\ref{c1} highlights an interesting fact: the probability of erasure for the entire set $\mathcal{A}$ is greater than or equal to the probability of erasure for a single bit $MIB(\mathcal{A})$. The strict inequality may arise when there is more than one MVSS for the set $\mathcal{A}$.


\section{Outer polar stopping set construction}

The proposed bounds suggest a practical way to design outer polar codes based on the size of stopping sets. In this section, we focus on Deletion Bound I as an example, given its strong performance among all the proposed bounds and its deterministic nature (independent on random seeds). Similar constructions can be easily extended to other proposed bounds.

\subsection{Construction method}

Denote by $d(i)$ the upper bound of $|MVSS(\mathcal{H}_i)|$ obtained by Deletion Bound I. We first initialize an unfrozen set $\mathcal O$ for the outer code using the conventional DE, for example. Then we swap a specified number of unfrozen bits $i\in{\mathcal O}$ with the smallest ``stopping distance'' $d(i)$ with some positions $j \in \mathcal{O}^c$ such that $d(j) > d(i)$. 

Before presenting the design algorithm, we introduce some notations. Let $Q$ be a length $N_0$ vector that contains the indices of bit-channels ordered according to channel reliability calculated by DE. The indices are ordered by descending channel reliability, i.e., $Q(1)$ stores the index for the strongest bit channel, $Q(2)$ stores the index for the second strongest, and so on. Let $s$ denote the number of bits we are going to swap. Let $K_0$ denote the size of the desired unfrozen set. Let $min_s(\cdot)$ be the function that returns the $s$-th smallest value in a vector, while $min(\cdot)$ returns the smallest value along with its index. Note that $s$ should be chosen such that there are more than $s$ frozen bits that have $d(\cdot)$ value larger than $min_{s}(d(Q(1)),...,d(Q(K_0)))$. The detailed swapping algorithm is presented in Algorithm~\ref{alg1}. Note that Algorithm~\ref{alg1} can be easily extended to incorporate other bounds by setting $d(i) = |MVSS(\mathcal{H}_i)|$, where $|MVSS(\mathcal{H}_i)|$ represents the value determined by the respective bounds. 


\begin{algorithm}
\caption{Outer polar stopping set (OPSS) construction}\label{alg:cap}
\begin{algorithmic}[1]
\Statex \textbf{Input:} $Q$; $d(i)$ for each $i<{N_0}$; $s$
\Statex \textbf{Output:} designed unfrozen set $\mathcal O$

\State $threshold = min_{s}(d(Q(1)),...,d(Q(K_0)))$
\State $i \gets 1$
\While{$i \leq s$}
    \State $[value, index] = min(d(Q(1)),...,d(Q(K_0)))$
    \State $j \gets 1$
    \While{True}
        \If{$d(Q(K_0+j)) > threshold$}
            \State $Q(index) \gets Q(K_0+j)$
            \State delete $Q(K_0+j)$ from $Q$
            \State jump to line 14
        \EndIf
        \State $j \gets j+1$
    \EndWhile
    \State $i \gets i+1$
\EndWhile
\State Return $\mathcal{O} = Q(1:K_0)$
\end{algorithmic}
\label{alg1}
\end{algorithm}

We can easily extend Deletion Bound I to the case when $M$ inner codes are connected by a single outer code. For example, assume there are $M=2$ inner codes and $\mathcal{H}_i = \{ \mathcal{H}_i^1, \mathcal{H}_i^2 \}$, where $\mathcal{H}_i^1$ and $\mathcal{H}_i^2$ are connected nodes in the first and second inner codes, respectively. Then $d(i) = |MVSS(\mathcal H_i^1)| + |MVSS(\mathcal H_i^2)|$.

The design method of Algorithm~\ref{alg1} can be extended to the local-global polar code architecture, but some care is needed. The systematic outer polar code assigns $M$ information vectors $K_{a_i}$, $i{=}1, \ldots, M$ to the $M$ inner codes. Directly applying Algorithm~\ref{alg1} can potentially swap bits $K_{a_i}$ with $P_{a_j}$ ($i \neq j$), causing $[K_{a_i},K_{a_j}]$ to be assigned to the same inner code. For example, assume the unfrozen set $\mathcal{O} = \{ 2,3,6,7\}$ represents the most reliable positions according to DE, and $\mathcal{O}^c = \{ 0,1,4,5\}$. Then, if the partition of $\mathcal{O}$ is according to bit index, the first half of $\mathcal{O}$ will correspond to $K_{a_1}=\{ 2,3\}$ and the second half to $K_{a_2}=\{ 6,7\}$. If the parity bits are partitioned similarly, we have $P_{a_1}=\{ 0,1\}$ and $P_{a_2}=\{ 4,5\}$. If, after calculating the $d(i)$ value for each position, we swap positions 2 and 4, this would yield $K_{a_1}=\{ 3,4\}$. This assignment is now inconsistent with the local-global architecture because part of $K_{a_1}$ (position 4) is connected with the second inner code. To avoid this problem, one needs to carefully design the partition of the outer codeword in the local-global encoder to ensure that positions in $K_{a_i}$ are only swapped with positions in $P_{a_i}$. 

\begin{example}
\label{localglobal}
For the case $M=2$, there is a mapping and partition rule that works for any $s \leq 4$, $2^9 \leq N_1 = N_2 \leq 2^{11}$, $2^6 \leq N_{0} \leq 2^8$ with $R_{0} = R_{all} = \frac{1}{2}$. Let $\mathcal{O}$ be the information set of size $N_{0}/2$ according to DE. Assign to $K_{a_2}$ the first half according to natural index order, and to $K_{a_1}$ the second half. Assign to $P_{a_1}$ the first half of the parity positions in natural index order, and to $P_{a_2}$ the second half. The semipolarized bit-channels in the first inner code are connected in natural index order to $K_{a_1}, P_{a_1}$ and those in the second inner code connect to $K_{a_2}, P_{a_2}$. See Fig.~\ref{Modified_encoder}.
\end{example}

\begin{figure}[htbp]
\centerline{\includegraphics[width=8cm,height=4.2cm]{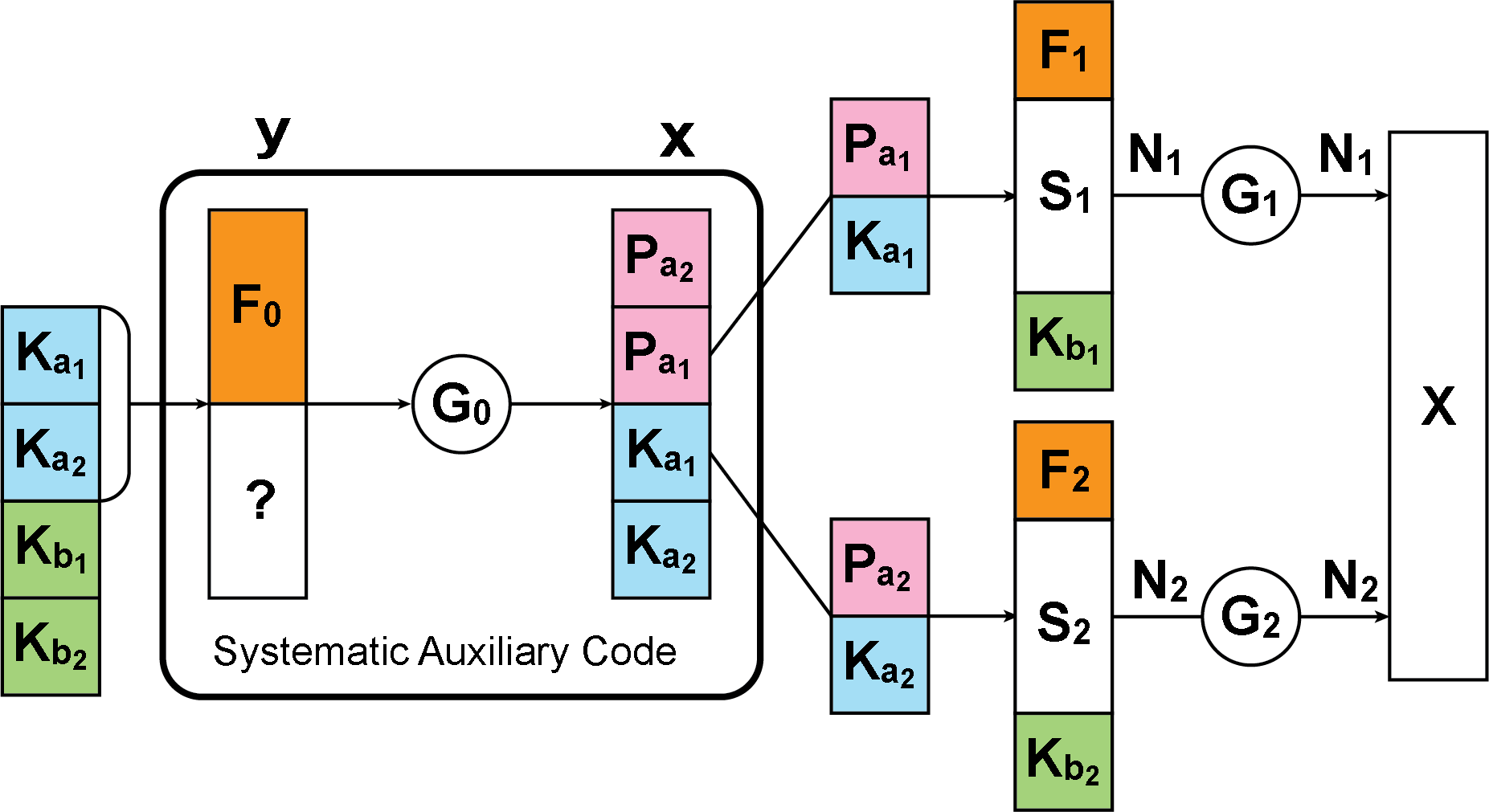}}
\caption{Local-global encoder for OPSS construction.}
\label{Modified_encoder}
\end{figure}

\subsection{Experimental results}

Now we give empirical results under BP decoding for augmented and local-global polar codes. The bit-channel ordering is based on DE on the AWGN channel at $E_b/N_0=3$ dB. DE is simplified by using Gaussian approximation (GA), using the 4-segment approximation function in~\cite{GA Polar}. The BP decoding schedules are the same as those in \cite{Elk2017} for augmented codes and in \cite{Zhu2022} for local-global codes. The maximum number of BP decoder iterations is set at 100. In the OPSS design (Algorithm 3), we set the number of bit-channel swaps to $s=4$. We also include the results for outer codes designed by non-stationary density evolution (NDE), which is another outer code construction method proposed by us in the shorter version of this paper \cite{Zhu2024}. The NDE algorithm does not assume that the inital LLR distributions to the outer code are identical, as assumed by conventional DE; rather, the LLR distributions coming from the inner code at the rightmost stage of the outer code factor graph correspond to $N_0$ separate binary symmetric memoryless channels $W_i$, $i{=}0,\ldots,N_0{-}1$. In practice, we replace each initial LLR density of the outer code with the empirical LLR density of the corresponding bit-channel of the inner code after $ite$ iterations of BP decoding under assumption of an all-zero codeword. In the NDE design for the augmented code, we use $ite=3$ BP decoder iterations on the inner code to generate the required empirical LLRs, and for the local-global code, we use $ite=4$ iterations.

Fig.~\ref{FER_augmented} shows frame error rate (FER) results for augmented code constructions. The outer code length is $N_0=64$ with code rate $R_0=\frac{1}{2}$. The inner code length is $N_1=1024$. The design rate of the augmented code is $R_{aug} = \frac{1}{2}$. The connections between the bit-channels of the inner code and the bits of the outer codeword are based on the natural index ordering within the set of semipolarized bit-channels. 

At FER = $10^{-3}$, the OPSS design and NDE design offer gains of 0.12 dB and 0.18 dB over the conventional DE design, respectively. At this FER they also perform comparably to SCL decoding, although SCL becomes superior at higher FERs. We remark that when the outer codes designed using the OPSS and NDE methods are disconnected from the concatenation architecture, their performance is inferior to that of a code designed using conventional DE. This confirms their inherent relationship with the concatenation structure.  

Figs.~\ref{local} and~\ref{global} present the results for local and global decoding, respectively, for a local-global code with component code lengths $N_0=256 ,  N_1 = N_2 = 1024$. The connections between the inner codes and the outer code are as  described in Example~\ref{localglobal}. Local decoding results for the different outer code design methods are similar, as expected, since local decoding does not rely on the outer code. Under global decoding, at FER = $10^{-4}$, the OPSS and NDE designs provide gains of 0.19 dB and 0.07 dB over conventional DE, respectively. The result for a length-2048 conventional polar code is also shown for reference. In summary, the improved global decoding performance provided by the new outer code constructions does not reduce the local decoding performance. 

\begin{figure}[htbp] 
\centerline{\includegraphics[width=7cm,height=5.5cm]{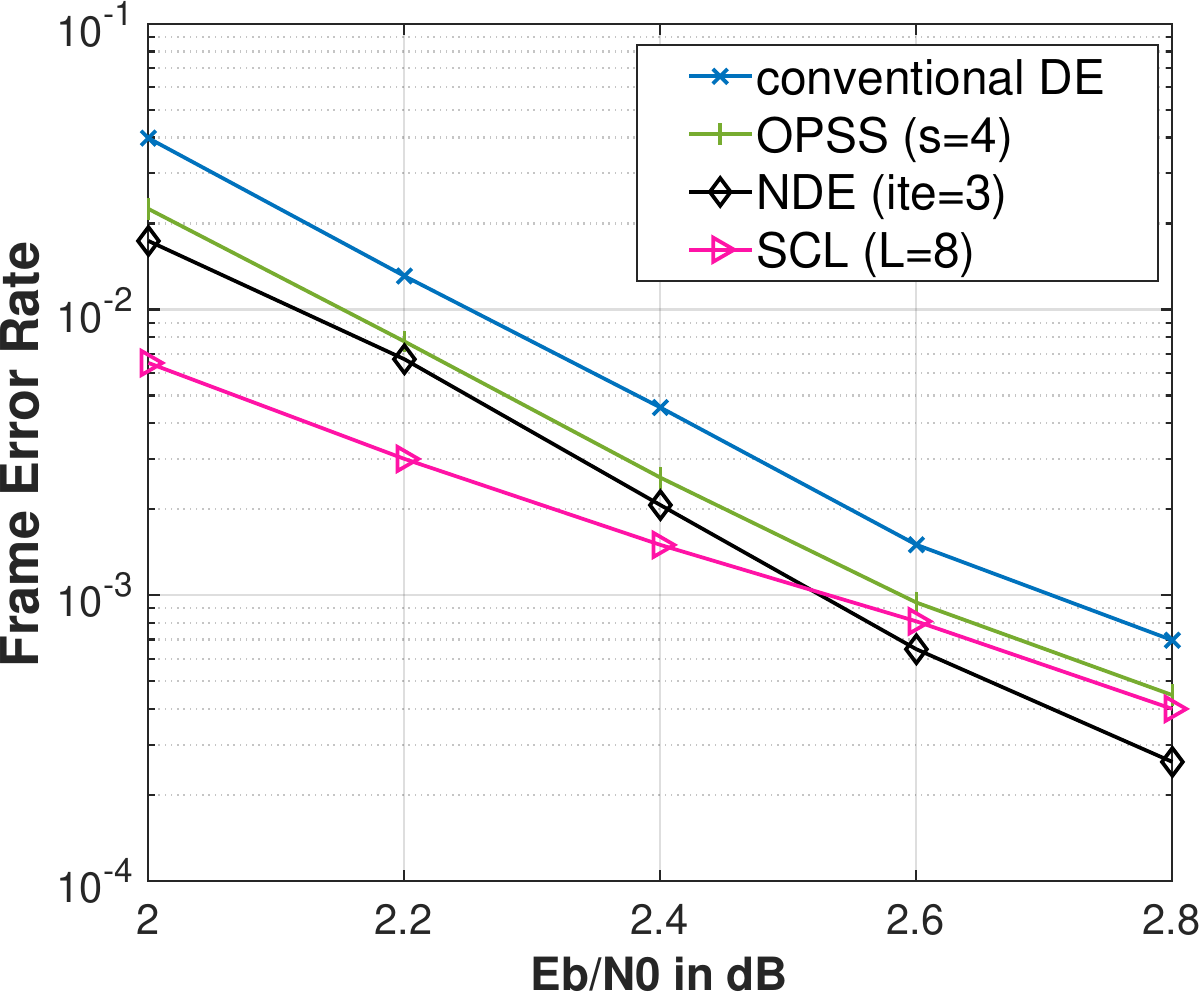}}
\caption{Augmented code $N_0{=}64$, $N_1{=}1024$ with natural interleaver.}
\label{FER_augmented}
\end{figure}

\begin{figure}[htbp] 
	\centering  
	\vspace{-0.35cm} 
	\subfigtopskip=4pt 
	\subfigbottomskip=4pt 
	\subfigcapskip=2pt 
	\subfigure[local decoding results]{
		\label{local}
		\includegraphics[width=7cm,height=5.65cm]{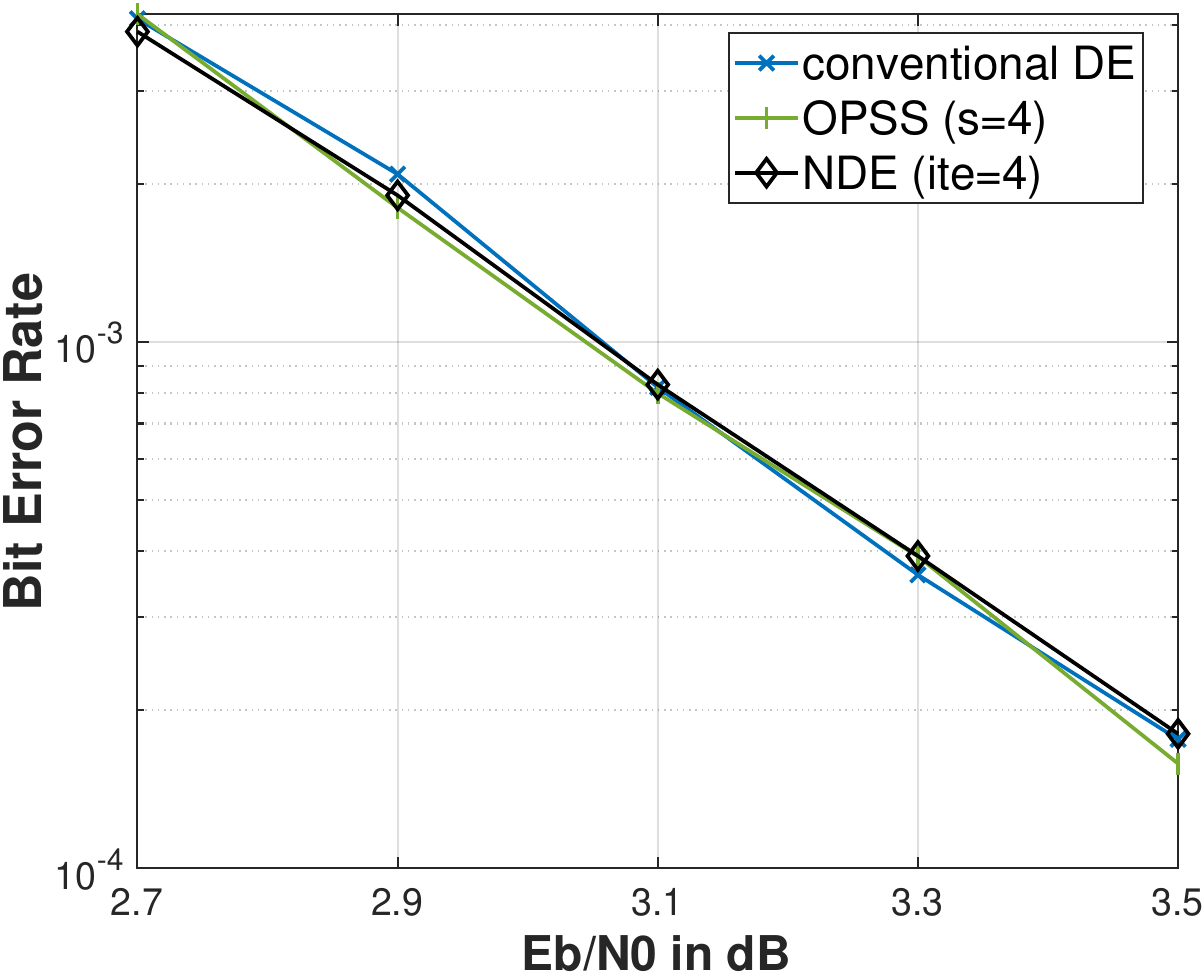}}
	\quad 
	\subfigure[global decoding results]{
		\label{global}
		\includegraphics[width=7cm,height=5.65cm]{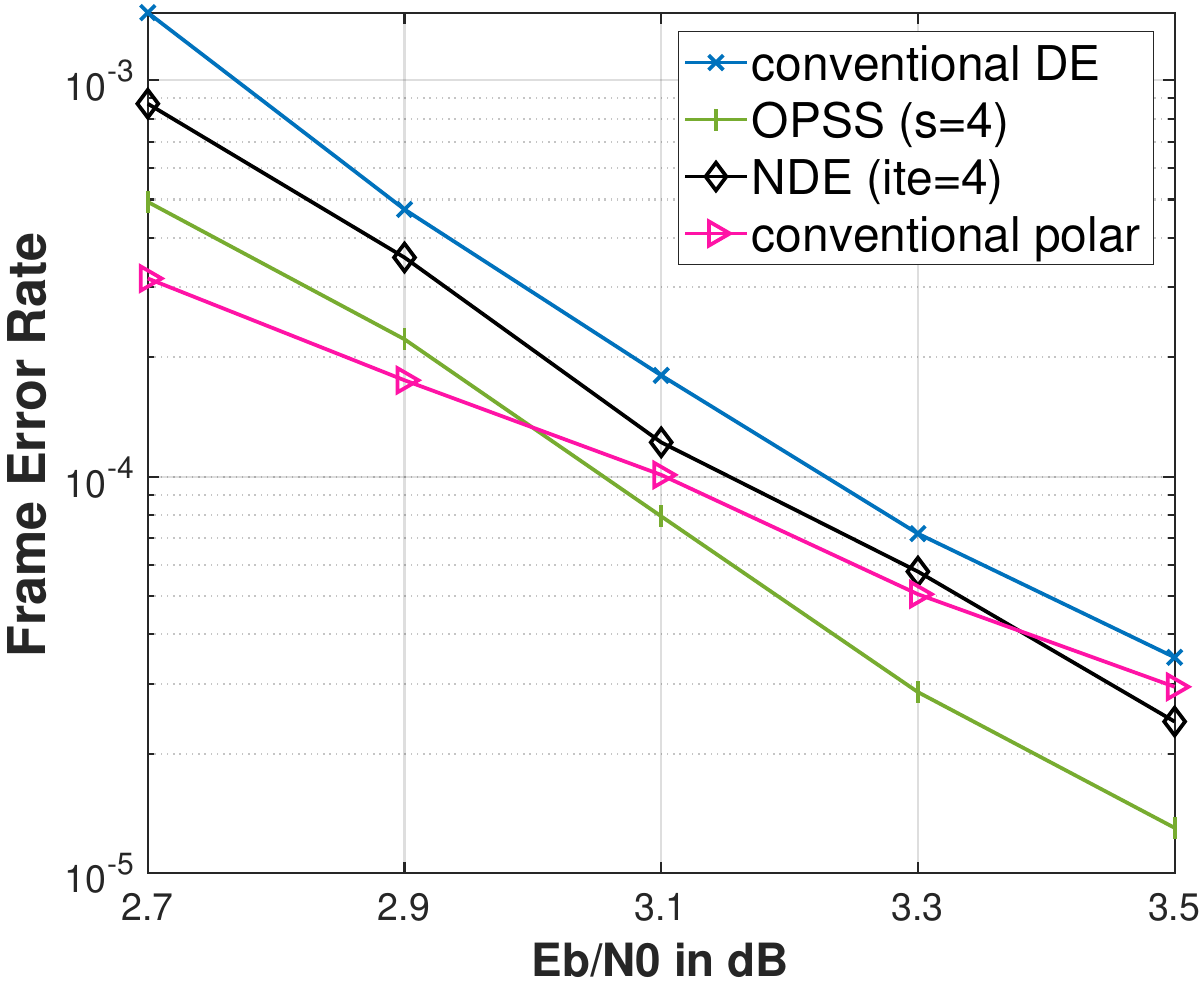}}
	  \\
        \caption{Local-global code with $N_0=256, N_1=N_2=1024$.}
\end{figure}

Although the natural ordering is attractive from an implementation standpoint, it may not provide the best starting point for these design methods, and experiments with other connection patterns, both structured and randomly generated, show that gains achieved with the proposed methods vary. For example in Fig.~\ref{difpattern}, we selected four different patterns for augmented codes whose performance are close to the average of all the interleaver patterns, within which pattern 1 and 3 benefit from OPSS. These results suggest that while some interleavers benefit from the proposed method, others might not experience the same level of enhancement. This highlights the importance of selecting an appropriate outer code design method that aligns well with the connection pattern. How to jointly optimize the interleaver pattern and the design methods remains a problem for further research. 

\begin{figure}[t] 
	\centering  
	\vspace{-0.35cm} 
	\subfigtopskip=2pt 
	\subfigbottomskip=2pt 
	\subfigcapskip=-5pt 
	\subfigure[pattern 1]{
		\includegraphics[width=4.05cm,height=3cm]{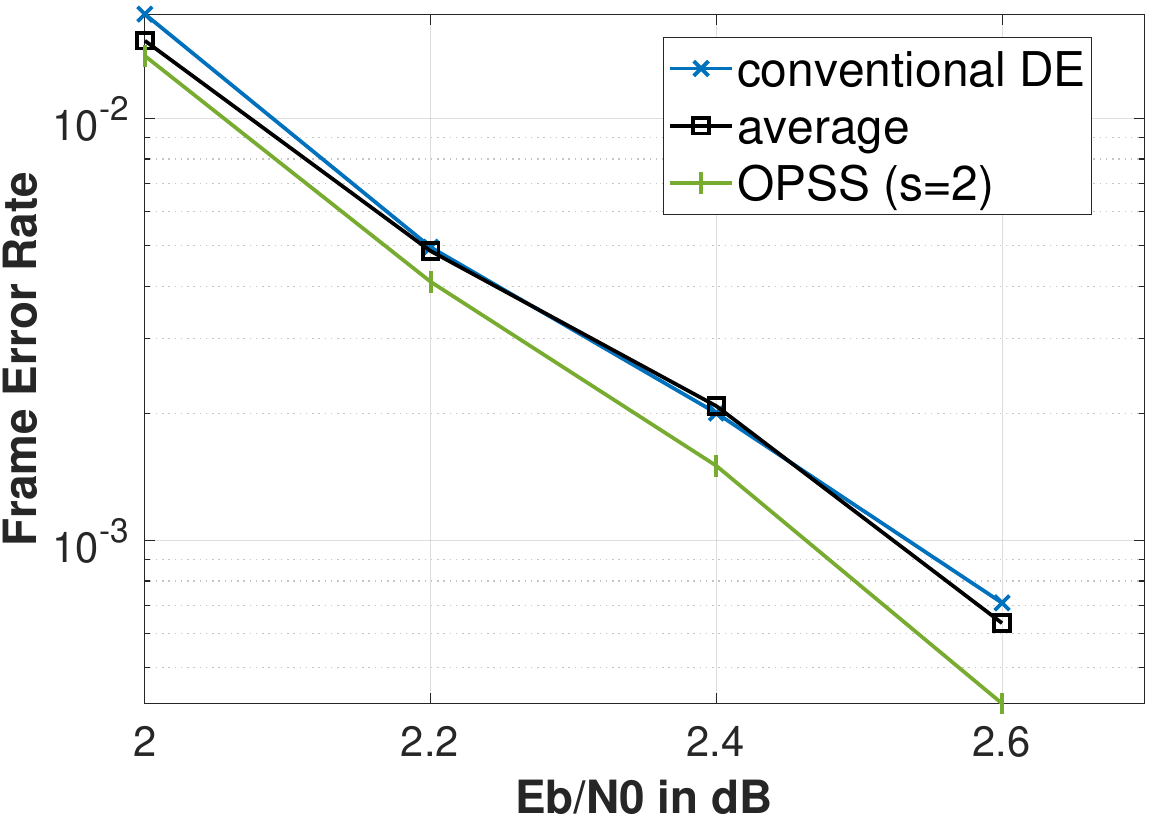}}
	\quad 
	\subfigure[pattern 2]{
		\includegraphics[width=4.05cm,height=3cm]{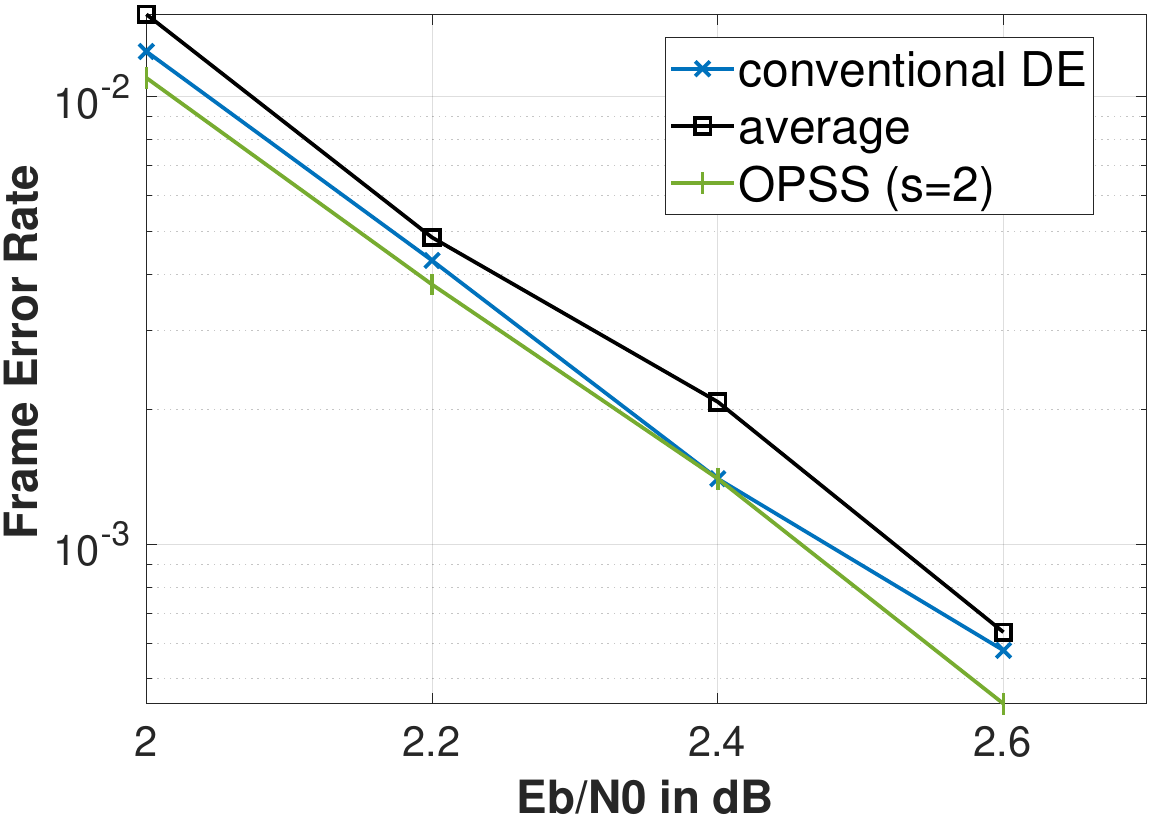}}
	  \\
	\subfigure[pattern 3]{
		\includegraphics[width=4.05cm,height=3cm]{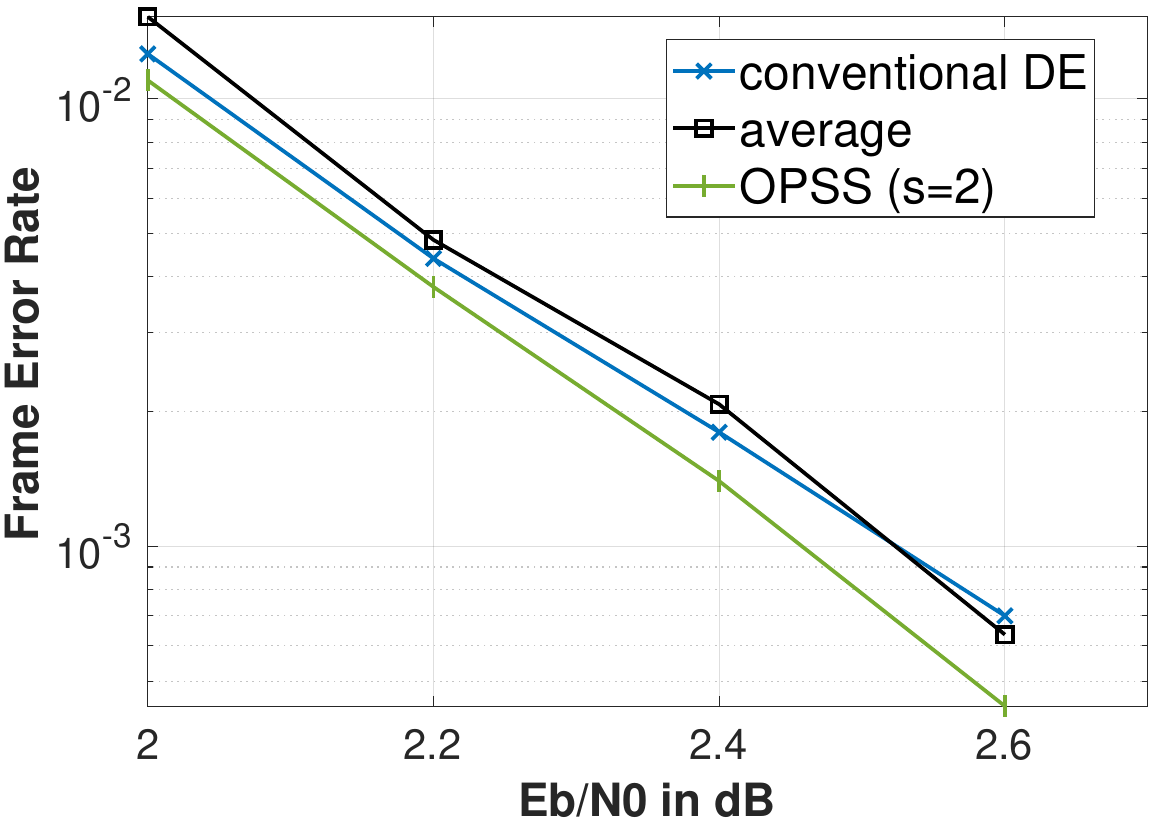}}
	\quad
	\subfigure[pattern 4]{
		\includegraphics[width=4.05cm,height=3cm]{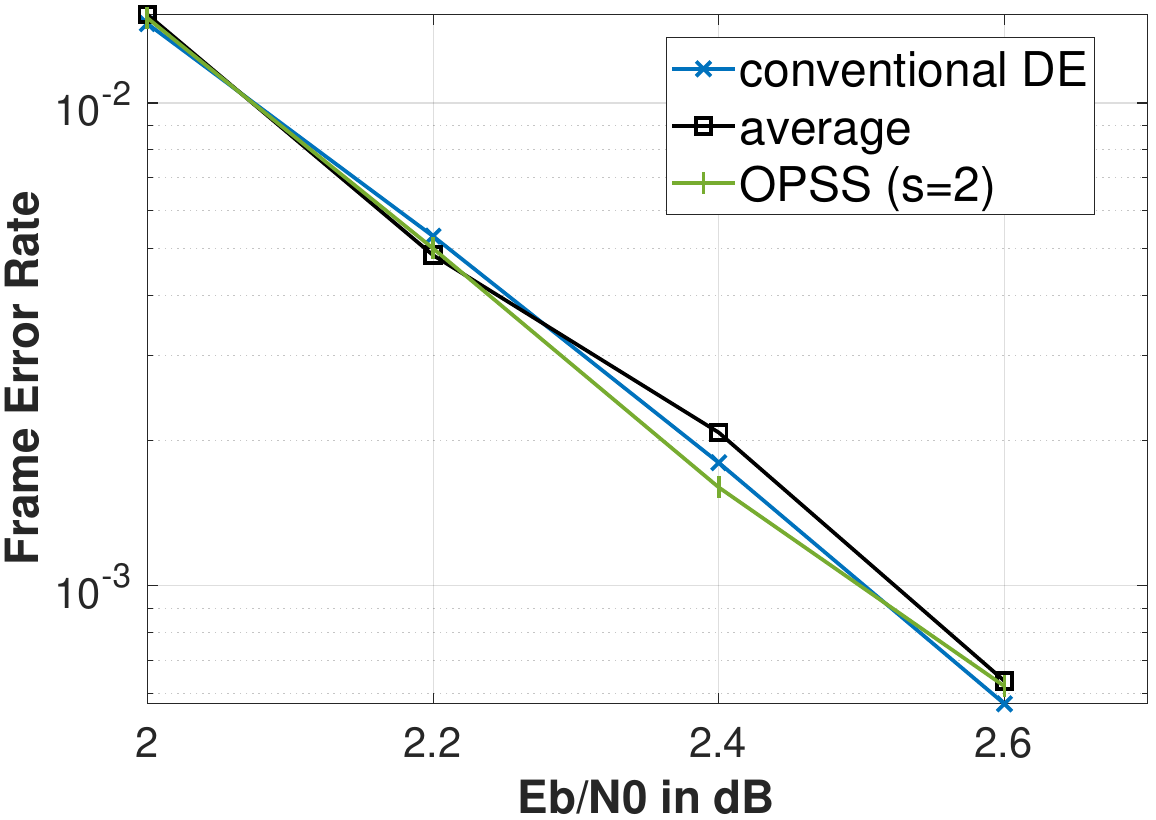}}
	\caption{Augmented code $N_0=64, N_1=1024$ with different interleaver patterns.}
	\label{difpattern}
 \vspace{-0.6cm} 
\end{figure}

\section{Conclusion}

In this paper, we proposed four bounds on the value of $|MVSS(\mathcal{J})|$, which can be used to bound the stopping distance of concatenated polar codes. We proposed a design method for the outer code in augmented and local-global polar code architectures based on the “stopping distance” associated with each bit position.
However, several questions still remain to be answered:

\begin{itemize}
    \item In Section III, a practical method for computing of $|MVSS'(\mathcal{J}_{out})|$ remains to be found. 
    \item In Section IV, a practical method for computing $|MVSS(\mathcal{J})|$ is still remains to be found. A potential direction for future research could involve proving the NP-hardness of this problem.
    \item In Fig.~\ref{Polar1024}, we only proved the tightness for Lower Bound~I. However, it appears that Deletion Bound~I is also tight under certain conditions, as indicated by the following conjecture.
\end{itemize}

\begin{conj}
    If set $\mathcal{J}$ satisfies both cover condition and swap condition, then Algorithm~\ref{algo_version1} (Deletion Bound I) can find the exact value of $|MVSS(\mathcal{J})|$.
\label{conjecture}

\end{conj}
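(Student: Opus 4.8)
The plan is to derive the conjecture by matching the upper bound produced by Algorithm~\ref{algo_version1} against the exact value already pinned down in Theorem~\ref{th2}. Since Theorem~\ref{Bound2} guarantees $|MVSS(\mathcal{J})| \le |S|$ and Theorem~\ref{th2} gives $|MVSS(\mathcal{J})| = \min_{i\in\mathcal{J}} f(i)$ whenever $\mathcal{J}$ satisfies the cover and swap conditions, it suffices to prove the reverse inequality $|S| \le \min_{i\in\mathcal{J}} f(i)$; that is, the greedy deletion schedule removes enough overlapped leaves to reach the minimum. Equivalently, since the algorithm always retains $nOLL(\mathcal{J})$ by Proposition~\ref{nOLL} and halts only when no admissible deletion remains, I would show that the number of \emph{undeletable} elements of $OLL(\mathcal{J})$ is exactly $\min_{i\in\mathcal{J}} f(i) - |nOLL(\mathcal{J})|$.

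The first step is to translate the graph operations into the cover order on indices. From the reasoning behind Proposition~\ref{weight_leaf}, the leaf set of $ST(i)$ is $\{x_k : i \text{ covers } k\}$, so the leaves of $UT(\mathcal{J})$ form the downward closure of $\mathcal{J}$, a leaf $x_k$ lies in $OLL(\mathcal{J})$ iff at least two elements of $\mathcal{J}$ cover $k$, and an intersection check node corresponds to a butterfly merge of two such trees. In this language, deleting the children of $rICN(l)$ is the removal of a principal sub-downset, and the undeletability test asks whether that removal would strip some $v(i,0)$, $i\in\mathcal{J}$, of its last surviving branch.

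The core of the argument is an induction on $n$ exploiting the splitting $T_n = T_n^{U} \cup T_n^{L}$. Writing $a = \min\{wt(i_b): i\in\mathcal{J},\, i<2^{n-1}\}$ and $b = \min\{wt(i_b): i\in\mathcal{J},\, i\ge 2^{n-1}\}$, one checks that the induced index sets satisfy $\mathcal{J}_n^{L} = \underline{\mathcal{J}_n}$ and $\mathcal{J}_n^{U} = \overline{\mathcal{J}_n}\cup\underline{\mathcal{J}_n}$, and that both again satisfy the cover and swap conditions (Proposition~\ref{underset} handles the lower half, and the upper half follows because a union of upward-closed sets is upward-closed). The induction hypothesis then yields $2^{\min(a,b-1)}$ retained leaves on the left half and $2^{b-1}$ on the right, while the target total is $2^{\min(a,b)}$. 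I would argue that the largest-index-first schedule processes the right-half (higher-indexed) overlapped leaves before the left-half ones, so that the two half-executions can be realized as nested phases of the global execution, and that the stage-$0$ check nodes correctly fuse an upper node $i'$ with its lower counterpart $i'+2^{n-1}$ whenever both are present. Here the swap condition is what aligns the algorithm's largest-index choice with the $MIB^{*}(\mathcal{J})$ that determines which minimal downset survives.

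The main obstacle is that the induction does not close on cardinalities alone: the configurations $a=b-1$ and $a=b$ produce identical half-sizes $|S_U|=|S_L|=2^{b-1}$ yet demand different totals ($2^{b-1}$ versus $2^{b}$), so the retained set cannot be recovered from $(|S_U|,|S_L|)$. The induction hypothesis must therefore be strengthened to carry structural information, concretely that the retained leaves form a prescribed down-set generated by the minimal-weight, largest-index elements, and that during the cascade of degree-$1$ removals no column-$0$ information node is ever endangered except on exactly the admissible deletions. The hard part will be establishing this ``never stuck prematurely'' property, namely that greedy $rICN$-subtree deletions propagating through the stage-$0$ merge never trigger the undeletability guard more often than the counting above predicts; I expect it to require a careful case analysis of how the upper and lower minima $a$ and $b$ interact at the butterfly layer.
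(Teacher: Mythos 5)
A preliminary point: the statement you were asked to prove is Conjecture~\ref{conjecture}, which the paper itself does \emph{not} prove --- it is posed in the Conclusion as an open problem. There is therefore no proof in the paper to compare against; a complete argument here would be new mathematics, and your attempt has to be judged on its own terms.

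On those terms, your opening reduction is correct and is the natural one: Theorem~\ref{Bound2} gives $|MVSS(\mathcal{J})| \le |S|$, Theorem~\ref{th2} gives $|MVSS(\mathcal{J})| = \min_{i\in\mathcal{J}} f(i)$ under the cover and swap conditions, so the conjecture is equivalent to $|S| \le \min_{i\in\mathcal{J}} f(i)$. Your dictionary between the graph and the cover order is also sound (the leaves of $ST(i)$ are exactly the $x_k$ with $k$ covered by $i$, hence $\mathcal{J}_n^U = \overline{\mathcal{J}_n}\cup\underline{\mathcal{J}_n}$, $\mathcal{J}_n^L = \underline{\mathcal{J}_n}$, and both halves inherit the two conditions). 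But everything after that is a plan rather than a proof, and the step you defer is precisely the entire content of the conjecture: that the specific greedy schedule of Algorithm~\ref{algo_version1} (largest index first, delete \emph{all} children leaves of $rICN(l)$, veto if a leftmost-stage node would be erased) never triggers its undeletability guard more often than the count $\min_{i\in\mathcal{J}} f(i)$ allows. You name this the ``never stuck prematurely'' property and explicitly leave it to ``a careful case analysis'' that is never carried out; restating the conjecture in this language is not progress toward proving it. The same holds for the structural claim your induction rests on, namely that the global execution on $T_n$ factors into two nested half-executions on $T_n^U$ and $T_n^L$. This is exactly where the difficulty lives: the degree-1 cascade (lines 9--11) propagates through the stage-0 butterflies, since a check $c(i,0)$ with $i,\,i+2^{n-1}\in\mathcal{J}$ is shared by $ST(i)$ and $ST(i+2^{n-1})$, so a deletion initiated among lower-half leaves can erase upper-half nodes and be vetoed by an upper-half information node; no argument for the claimed factorization is given. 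Finally, your diagnosed ``main obstacle'' is largely a false alarm: under the cover and swap conditions the minimum weight in $\mathcal{J}$ is always also attained in the lower half (the same swap-into-the-top-bit argument used in case 3 of the paper's proof of Theorem~\ref{th2}), so $b \le a$ and the configuration $a = b-1$ cannot occur outside the degenerate rate-1 case; the cardinality bookkeeping closes, and what actually blocks the proof is the two unproven claims above, not the interaction of $(a,b)$.
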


\appendix

\textbf{Proof of Theorem~\ref{th1}}: The definition of stopping set implies that the union of stopping sets is a stopping set. For a given set of information bits $\mathcal{J}$, consider set $UT(\mathcal{J})$, the union of all the stopping trees defined by the elements in $\mathcal{J}$. To find a minimum size VSS for $\mathcal{J}$, we can try to find a stopping set that is properly contained in $UT(\mathcal{J})$ that preserves $\mathcal{J}$ but has fewer leaf nodes (i.e., observed variable nodes). According to Proposition~\ref{nOLL}, the only leaf nodes that can possibly be deleted from $UT(\mathcal{J})$ are overlapped leaf nodes (OLLs), i.e., leaf nodes that belong to at least two stopping trees in $UT(\mathcal{J})$. Therefore, deleting the entire set of $OLL(\mathcal{J})$ would establish a lower bound.

To complete the proof, we need to characterize the indices of the leaf nodes in $UT(\mathcal{J})$. This can be done by noting that the
indices for the leaf nodes in the stopping tree $ST(i)$ are given by the indices for the ones in $r_i^n$, where $r_i^n$ is the ($i+1$)-th row of $G^n =F^{\bigotimes n}$, the encoding matrix for length $2^n$ polar codes. This fact was stated in~\cite{Eslami2013} without proof. For completeness, we provide a detailed proof here. The proof proceeds by induction. For the case $n=1$, the statement follows immediately from inspection of the matrix $G^1=F$ and inspection of the corresponding factor graph for $n=1$.  

Now, suppose the result is true for a given $n$. For a length $2^{n+1}$ polar code, the recursive construction of the factor graph implies that $T_{n+1}^{U}$ and $T_{n+1}^{L}$ are isomorphic, and $G^{n}$ is the encoding matrix for each of the subgraphs $T_{n+1}^{U}$ and $T_{n+1}^{L}$. Let $I_i^{n}$ be a length $2^n$ binary vector in which the  indices of ones are the positions of the leaf nodes contained in $ST(i)$ in $T_{n+1}^{U}$. By the induction hypothesis, $I_i^{n} = r_i^n$.  

Referring to Fig.~\ref{GSS}, we see that, for $0\leq i\leq 2^n-1$,  $I_i^{n+1} = [I_i^n,0,...,0] = [r_i^n,0,...,0] =  r_i^{n+1}$,
with the last equality following from~(\ref{recursive_G}). Similarly, for $2^n\leq i\leq 2^{n+1}-1$, since the subgraphs $T_{n+1}^{U}$ and $T_{n+1}^{L}$ are isomorphic, we have $I_i^{n+1} = [I_{i-2^n}^n,I_{i-2^n}^n] = [r_{i-2^n}^n,r_{i-2^n}^n]= r_i^{n+1}$, where again the last equality follows from~(\ref{recursive_G}). 
This completes the induction. 

In summary, the number of leaf nodes in $UT(\mathcal{J})$ is given by the number of columns in $G_{\mathcal{J}}$ that have non-zero weight. The number of nOLLs in $UT(\mathcal{J})$ is given by the number of columns in $G_{\mathcal{J}}$ that have weight exactly one. Thus $g(G_{\mathcal{J}})$ is precisely the number of unshared leaf nodes, which by the previous discussion must belong to a minimum size VSS for $\mathcal{J}$. This implies $|MVSS(\mathcal{J})|\geq g(G_{\mathcal{J}})$, as desired. 
\qed

\begin{figure}[htbp] 
	\centering  
	\vspace{-0.35cm} 
	\subfigtopskip=2pt 
	\subfigbottomskip=2pt 
	\subfigcapskip=-5pt 
	\subfigure[Example of case 1]{
		\label{case1}
		\includegraphics[width=8cm,height=6cm]{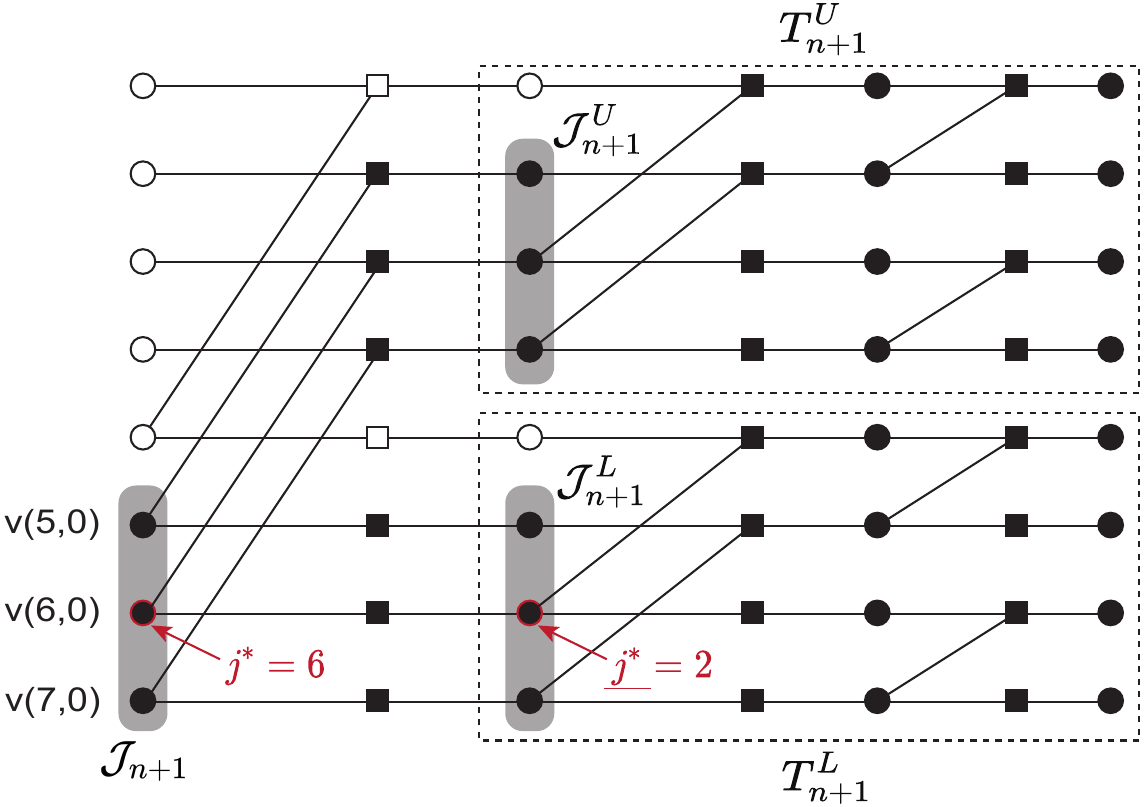}}
	\quad 
	\subfigure[Example of case 3]{
		\label{case3}
		\includegraphics[width=8cm,height=6cm]{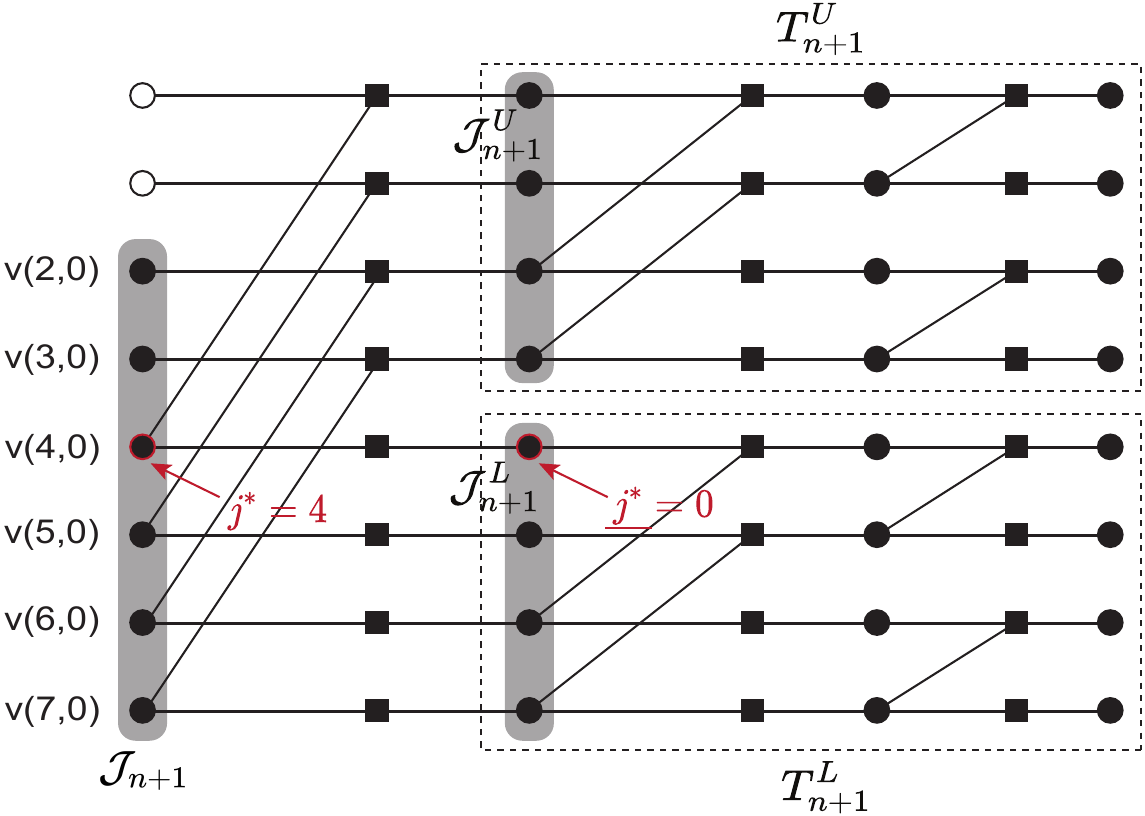}}
	  \\
        \caption{Illustrative examples for the proof of Theorem~\ref{th2}.}
\end{figure}
    
\textbf{Proof of Theorem~\ref{th2}}: The proof proceeds by induction. For the case $n=1$, the statement follows immediately from inspection of the corresponding factor graph. Now, suppose the result is true for a given $n$. By the induction hypothesis, $|MVSS(\mathcal{J}_n)| = \min\limits_{i \in \mathcal{J}_n} f(i)$ holds if $\mathcal{J}_n$ satisfies both conditions. Now, for a set $\mathcal{J}_{n+1}$ that also satisfies both conditions, consider the following cases:

1. $|\overline{\mathcal{J}_{n+1}}| = 0$. There is no information bit on the upper-half of the leftmost stage within the factor graph. Clearly $\mathcal{J}_{n+1}^U$ and $\mathcal{J}_{n+1}^L$ are identical, and $\mathcal{J}_{n+1}^U = \mathcal{J}_{n+1}^L = \underline{\mathcal{J}_{n+1}}$. We state that both of them would be in the stopping set corresponding to $MVSS(\mathcal{J}_{n+1})$, which is because all the neighbor check nodes in $UT(\mathcal{J}_{n+1})$ that are on the left of $\mathcal{J}_{n+1}^U$ and $\mathcal{J}_{n+1}^L$ are degree two. An example can be seen in Fig.~\ref{case1}, where $UT(\mathcal{J}_{n+1})$ is labeled by black nodes.

From Proposition~\ref{underset}, we know that $\mathcal{J}_{n+1}^U$ and $\mathcal{J}_{n+1}^L$ satisfy both conditions, thus 
\begin{equation*}
    |MVSS(\mathcal{J}_{n+1}^L)| = |MVSS(\mathcal{J}_{n+1}^U)| = \min\limits_{i \in \underline{\mathcal{J}_{n+1}}} f(i).
\end{equation*}

Now we prove that $\min\limits_{i \in \underline{\mathcal{J}_{n+1}}} f(i) = \frac{1}{2} \min\limits_{i \in \mathcal{J}_{n+1}} f(i)$. Assume $MIB^*(\mathcal{J}_{n+1}) = j^*$, then we state that $MIB^*(\underline{\mathcal{J}_{n+1}}) = \underline{j^*}$. Because if there exists $k \in \mathcal{J}_{n+1}$ such that $f(\underline{k}) < f(\underline{j^*})$, then $f(k) = 2 \times f(\underline{k}) < 2 \times f(\underline{j^*}) = f(j^*)$, which contradicts the assumption that $j^*$ is the MIB. Now, since $f(j^*) = 2\times f(\underline{j^*})$, we know that $|MVSS(\mathcal{J}_{n+1})| = 2\times |MVSS(\underline{\mathcal{J}_{n+1}})| = 2\times f(\underline{j^*}) = f(j^*)$.

2. $|\overline{\mathcal{J}_{n+1}}| \geq |\underline{\mathcal{J}_{n+1}}|$. There are more information bits on the upper-half than on the lower-half. We prove that this case is impossible unless $\mathcal{J}_{n+1}$ contains all the positions on the left (code rate is 1). According to the cover condition, any $i \in \overline{\mathcal{J}_{n+1}}$ would imply that $i + 2^n \in \underline{\mathcal{J}_{n+1}}$, thus $|\overline{\mathcal{J}_{n+1}}| \leq |\underline{\mathcal{J}_{n+1}}|$.

Now we assume that $|\overline{\mathcal{J}_{n+1}}| = |\underline{\mathcal{J}_{n+1}}|$, which implies that $j \in \mathcal{J}_{n+1} \Rightarrow \underline{j} \in \mathcal{J}_{n+1}$. We also assume that $j^*$ has the minimum weight (fewest number of 1's in the binary expression $(j^*)_b$) among all $j \in \underline{\mathcal{J}_{n+1}}$. Clearly $wt((j^*)_b) \geq 1$ since $j^* \geq 2^n$. We further assume that $wt((j^*)_b) > 1$. By the definition of $\underline{j^*}$, it directly follows that $wt((\underline{j^*})_b) = wt((j^*)_b) - 1 \geq 1$. We also know that the last bit of the binary representation of $\underline{j^*}$ is 0, i.e., $(\underline{j^*})_{n} = 0$. Without loss of generality, assume $(\underline{j^*})_m = 1, m < n$. Then, we can swap $(\underline{j^*})_m$ with $(\underline{j^*})_n$ to get a new index $l$. According to the swap condition, $l \in \mathcal{J}_{n+1}$ and since $l_n = 1$, we have $l \in \underline{\mathcal{J}_{n+1}}$. Now $wt(l_b)$ is smaller than $wt((j^*)_b)$, which contradicts the assumption. 

If $wt((j^*)_b) = 1$, then $j^* = 2^n$ and by the cover condition we know that $0 \in \mathcal{J}_{n+1}$, which implies that $\mathcal{J}_{n+1}$ contains all bits on the left. It is trivial to see that $|MVSS(\mathcal{J}_{n+1})| = \min\limits_{i \in \mathcal{J}_{n+1}} f(i) = f(0) = 1$.

3. $0 < |\overline{\mathcal{J}_{n+1}}| < |\underline{\mathcal{J}_{n+1}}|$. Let $MIB^*(\mathcal{J}_{n+1}) = j^*$. An example for this case is shown in Fig.~\ref{case3}. We first state that $j^* \geq 2^n$. Otherwise if $j^* < 2^n$, we can swap a 1 to the last bit to get another index $l$, such that $l > j^*$. From Proposition~\ref{weight_leaf}, we know that $f(l) = f(j^*)$ since $wt(l_b) = wt((j^*)_b)$. This contradicts the definition of $MIB^*$ as the MIB with the largest index. 

Again we have $\mathcal{J}_{n+1}^U = \mathcal{J}_{n+1}^L = \underline{\mathcal{J}_{n+1}}$, since any $j\in \overline{\mathcal{J}_{n+1}}$ would imply that $j + 2^n \in \mathcal{J}_{n+1}$, according to the cover condition. Similar to case 1, we can conclude that $MIB^*(\underline{\mathcal{J}_{n+1}}) = \underline{j^*}$, using the same proof. 

Now the difference with case 1 is: $\mathcal{J}_{n+1}^U$ might not be in the stopping set corresponding to $MVSS(\mathcal{J}_{n+1})$, since there are some degree-3 neighbor check nodes on the left of $\mathcal{J}_{n+1}^U$ so that some nodes in $\mathcal{J}_{n+1}^U$ may be excluded from the stopping set corresponding to $MVSS(\mathcal{J}_{n+1})$. Denote by $\mathcal{S}$ an arbitrary subset of $\mathcal{J}_{n+1}^U$. From Theorem~\ref{Bound Eslami}, we know that $|MVSS(\mathcal{S})| \geq \min\limits_{s \in \mathcal{S}} f(s)$. Since $\mathcal{S} \subseteq \mathcal{J}_{n+1}^U$, we have $|MVSS(\mathcal{S})| \geq \min\limits_{s \in \mathcal{S}} f(s) \geq \min\limits_{i \in \mathcal{J}_{n+1}^U} f(i) = f(\underline{j^*})$, which suggests that the minimum size we can expect in $T_{n+1}^U$ is actually $f(\underline{j^*})$. Then, the lower-half factor graph $T_{n+1}^L$ can be treated as incase 1, and we have $|MVSS(\mathcal{J}_{n+1})| = 2\times |MVSS(\underline{\mathcal{J}_{n+1}})| = 2\times f(\underline{j^*}) = f(j^*)$.

\qed

\ifCLASSOPTIONcaptionsoff
  \newpage
\fi


\section*{Acknowledgment}
This research was supported in part by NSF Grants CCF-1764104 and CCF-2212437. Furthermore, we would like to thank the anonymous reviewers for their valuable comments that helped to improve this paper.

\begin{IEEEbiography}[{\includegraphics[width=1in,height=1.25in,clip,keepaspectratio]{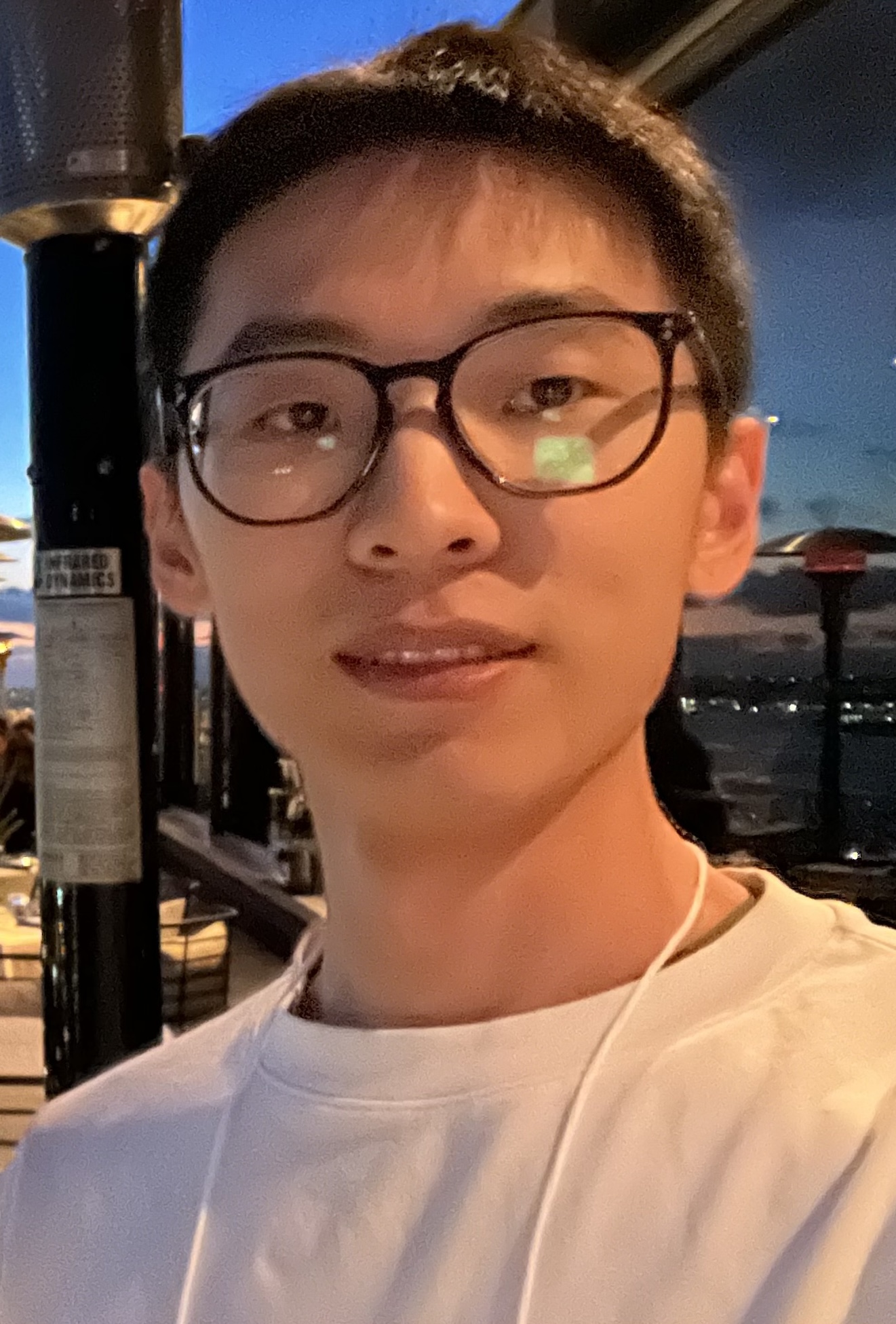}}]{Ziyuan Zhu} (Student Fellow, IEEE) received his B.E. in electrical communication engineering from Xi'an Jiaotong University, Xi'an, China, in 2021, and the M.S. in electrical and computer engineering from the University of California San Diego, San Diego, CA, USA, in 2023, where he is currently pursuing the Ph.D. degree with the department of electrical and computer engineering. He is also with the center for memory and recording research (CMRR), University of California San Diego. His research focuses on the application of error correction codes in storage systems, particularly polar and polar-like codes, with an emphasis on developing low-complexity decoding algorithms and efficient code constructions.
\end{IEEEbiography}

\begin{IEEEbiography}[{\includegraphics[width=1in,height=1.25in,clip,keepaspectratio]{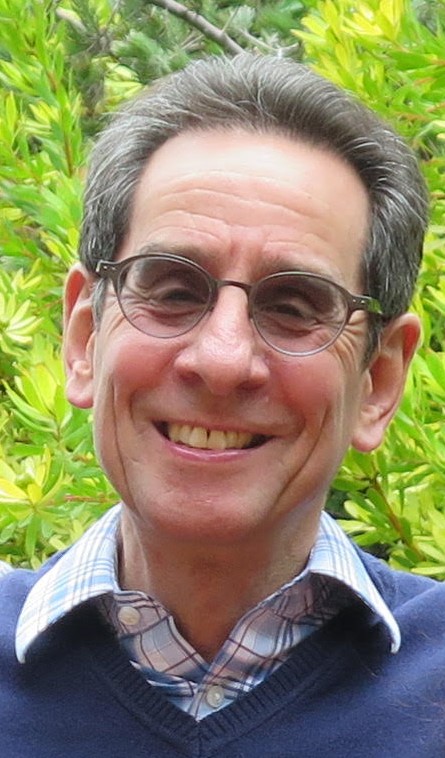}}]{Paul H. Siegel} (Life Fellow, IEEE) received the S.B. and Ph.D. degrees in Mathematics from the Massachusetts Institute of Technology, Cambridge, MA, USA, in 1975 and 1979, respectively. He held a Chaim Weizmann Postdoctoral Fellowship with the Courant Institute, New York University, New York, NY, USA. He was with the IBM Research Division, San Jose, CA, USA, from 1980 to 1995. He joined the faculty at the University of California San Diego (UCSD), La Jolla, CA, USA, in 1995, where he is currently a Distinguished Professor of Electrical and Computer Engineering with the Jacobs School of Engineering. He is affiliated with the Center for Memory and Recording Research where he holds an Endowed Chair and served as Director from 2000 to 2011. His research interests include information theory, coding techniques, and machine learning, with applications to digital data storage and transmission. He is a Member of the National Academy of Engineering. He was a Member of the Board of Governors of the IEEE Information Theory Society from 1991 to 1996 and from 2009 to 2014. He was the 2015 Padovani Lecturer of the IEEE Information Theory Society. He was a co-recipient of the 1992 IEEE Information Theory Society Paper Award, the 1993 IEEE Communications Society Leonard G. Abraham Prize Paper Award, and the 2007 Best Paper Award in Signal Processing and Coding for Data Storage from the Data Storage Technical Committee of the IEEE Communications Society. He served as an Associate Editor of Coding Techniques of the {\sc IEEE Transactions on Information Theory} from 1992 to 1995, and as the Editor-in-Chief from 2001 to 2004. He has served as Co-Guest Editor of special issues for the {\sc IEEE Transactions on Information Theory}, the {\sc IEEE Journal on Selected Areas in Communications} (twice), the {\sc IEEE Journal on Selected Areas in Information Theory}, the {\sc IEEE Transactions on Molecular, Biological, and Multi-Scale Communications},  and {\sc IEEE BITS the Information Theory Magazine}.
\end{IEEEbiography}

\end{document}